\newcommand{\R}{\mathbb{R}}
\newcommand{\E}{\mathbb{E}}
\newcommand{\prob}{\mathbb{P}}
\newcommand{\C}{\mathcal{C}}
\newcommand{\F}{\mathcal{F}}
\renewcommand{\P}{\mathcal{P}}
\newcommand{\I}{\mathcal{I}}
\newcommand{\B}{\mathcal{B}}
\renewcommand{\H}{\mathcal{H}}
\renewcommand{\L}{\mathcal{L}}
\newcommand{\M}{\mathcal{M}}
\newcommand{\MA}{multi-agent}
\DeclareMathOperator*{\argmax}{arg\,max}
\begin{document}
	
\title{Multi-Agent Submodular Optimization}
\author{Richard Santiago \and F. Bruce Shepherd}

\institute{Richard Santiago \at
	McGill University, Montreal, Canada \\
	\email{richard.santiagotorres@mail.mcgill.ca}           
	\and
	F. Bruce Shepherd \at
	University of British Columbia, Vancouver, Canada \\
	\email{fbrucesh@cs.ubc.ca}
}

\date{Received: date / Accepted: date}

\maketitle


\begin{abstract}
Recent years have seen many algorithmic advances in the area of submodular optimization: (SO) $\min/\max~f(S): S \in \F$, where $\F$ is a given family of feasible sets over a ground set $V$ and $f:2^V \rightarrow \R$ is submodular. This progress has been coupled with a wealth of new applications for these models. Our focus is on a more general class of {\em multi-agent submodular optimization} (MASO) $\min/\max \sum_{i=1}^{k} f_i(S_i):  S_1 \uplus S_2 \uplus \cdots \uplus S_k \in \F$. Here we use $\uplus$ to denote disjoint union and hence this model is attractive where  resources are being allocated across  $k$ agents, each with its own submodular cost function $f_i()$.  This was introduced in the minimization setting by Goel et al. In this paper we explore the extent to which the approximability of the multi-agent problems are linked to their single-agent versions, referred to informally as the {\em multi-agent gap}.

We present different reductions that transform a multi-agent problem into a single-agent one. For minimization, we show that (MASO) has an $O(\alpha \cdot \min \{k, \log^2 (n)\})$-approximation whenever (SO) admits an $\alpha$-approximation over the convex formulation. In addition, we discuss the class of ``bounded blocker'' families where there is a provably tight $O(\log n)$ multi-agent gap between (MASO) and (SO).
For maximization, we show that monotone (resp. nonmonotone) (MASO) admits an $\alpha (1-1/e)$ (resp. $\alpha \cdot 0.385$) approximation whenever monotone (resp. nonmonotone) (SO) admits an $\alpha$-approximation over the multilinear formulation; and the $1-1/e$ multi-agent gap for monotone objectives is tight. We also discuss several families (such as spanning trees, matroids, and $p$-systems) that have an (optimal) multi-agent gap of 1. These results substantially expand the family of tractable models for submodular maximization.

\keywords{submodular optimization \and approximation algorithms \and multi-agent}

\end{abstract}




\section{Introduction}
\label{sec:intro}

A function $f:2^V \to \R$ is \emph{submodular} if $f(S) + f(T) \geq f(S \cup T) + f(S \cap T)$ for any $S,T \subseteq V$.
We say that $f$ is \emph{monotone} if $f(S) \leq f(T)$ whenever $S \subseteq T$.
Throughout, all submodular functions are nonnegative, and we usually assume that $f(\emptyset)=0$. Our functions are  given by a \emph{value oracle}, where for a given set $S$ an algorithm can query the oracle to find its value $f(S)$.

For a family of feasible sets $S\in\F$ on a finite ground set $V$ we consider the following broad class of submodular optimization (SO) problems:
\begin{equation}
\label{eqn:SA}
\mbox{SO($\F$) ~~~~Min~ / ~Max ~$f(S):S\in\F$}
\end{equation}
where $f$ is a nonnegative submodular set function on $V$. There has been an impressive
recent stream  of activity around these problems for a variety of set families $\F$.
We explore  the connections between these (single-agent) problems and their multi-agent
incarnations.
In the {\em multi-agent (MA)} version, we have $k$ agents  each of which  has an associated nonnegative submodular
set function $f_{i}$, $i \in [k]$. As before, we are looking for  sets $S\in\F$, however,
we now have a 2-phase task: the elements of $S$ must also be partitioned amongst the agents.
Hence we have set variables $S_{i}$ and seek to optimize $\sum_{i}f_{i}(S_{i})$.
This leads to the multi-agent submodular optimization (MASO) versions:
\begin{equation}
\label{eqn:MA}
\mbox{MASO($\F$) ~~~~Min~ / ~Max ~$\sum_{i=1}^{k} f_{i}(S_{i}):S_{1}\uplus S_{2}\uplus \cdots\uplus S_{k}\in\F$.}
\end{equation}

The special case when $\F=\{V\}$ has been  previously examined both for
minimization  (the minimum submodular cost allocation problem \cite{hayrapetyan2005network,svitkina2010facility,ene2014hardness,chekuri2011submodular}) and
maximization (submodular welfare problem \cite{lehmann2001combinatorial,vondrak2008optimal}). For general families $\F$, however, we are only aware of the development in Goel et al. \cite{goel2009approximability} for the minimization setting.
A natural first question is whether any multi-agent problem could be directly reduced (or encoded) to a single-agent one
over the same ground set $V$. Goel et al. give an explicit example where such a reduction does not exist. More emphatically, they show  that when $\F$ consists of
vertex covers in a graph, the {\em single-agent (SA)} version (i.e., (\ref{eqn:SA})) has a 2-approximation while
the MA version has an inapproximability lower bound of $\Omega(\log n)$.

Our first main
objective is to explain  the extent to which approximability
for multi-agent problems is intrinsically connected to their single-agent
versions, which we also refer to as the {\em primitive} associated with $\F$.
We refer to the {\em multi-agent (MA) gap} as the approximation-factor loss incurred
 by moving to the MA setting.

Our second objective is to extend the multi-agent model and show that in some cases
this larger class remains tractable. Specifically,
we define the
{\em capacitated multi-agent submodular optimization (CMASO) problem} as follows:
\begin{equation}
\label{ma}
\mbox{CMASO$(\F)$}~~~~~~~
\begin{array}{rc}
\max / \min & \sum_{i=1}^{k}f_{i}(S_{i}) \\
\mbox{s.t.} & S_{1}\uplus \cdots\uplus S_{k}\in\F\\
 & S_{i}\in\F_{i}\,,\,\forall i\in[k]
\end{array}
\end{equation}

\noindent
where we are supplied with subfamilies $\mathcal{F}_i$.
Many existing applications  fit into this framework and some of these can be enriched through the added flexibility of the capacitated model.
We illustrate this with concrete examples in Section~\ref{sec:applications}.

Prior work in both the single and multi-agent settings is summarized in Section~\ref{sec:related work}.
We present our main results next.

 \subsection{Our contributions}
\label{sec:reduce}

 We first discuss the minimization side of MASO (i.e. (\ref{eqn:MA})). Here the work of \cite{ene2014hardness} showed that for general nonnegative submodular functions the problem is in fact inapproximable within any multiplicative factor even in the case where $\F=\{V\}$ and $k=3$ (since it is NP-Hard to decide whether the optimal value is zero).
Hence we focus almost completely on nonnegative monotone submodular objectives $f_i$. In fact, even in the single-agent setting with  a  nonnegative monotone submodular function $f$, there exist a number of polynomial hardness results  over fairly simple set families $\F$; examples include minimizing a submodular function subject to
a cardinality constraint \cite{svitkina2011submodular} or over the family of spanning trees \cite{goel2009approximability}.

We show, however, that if the SA primitive for a family $\F$ admits approximation via
a natural convex relaxation (see Appendices~\ref{sec:blocking} and \ref{sec:relaxations}) based on the Lov\'asz extension,
then we may extend this to its multi-agent version with a modest blow-up in the approximation factor.

\begin{theorem}
	\label{thm:klog}
	Suppose there is a (polytime) $\alpha(n)$-approximation for monotone SO($\F$) minimization
	via the blocking convex relaxation. Then there is a (polytime) $O(\alpha(n) \cdot \min\{k, \log^2 (n)\})$-approximation
	for monotone MASO($\F$) minimization.
\end{theorem}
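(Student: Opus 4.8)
The plan is to reduce monotone MASO($\F$) minimization to its single-agent primitive by working with the natural ``blocking'' convex relaxation of SO($\F$) built from Lov\'asz extensions, and then rounding. First I would set up the relaxation for the multi-agent problem: introduce a vector $x_i \in [0,1]^V$ for each agent $i$, minimize $\sum_{i=1}^k \hat f_i(x_i)$ (where $\hat f_i$ is the Lov\'asz extension of $f_i$), subject to $\sum_i x_i$ lying in the blocking relaxation polyhedron $P_\F$ associated with $\F$. Since each $\hat f_i$ is convex and $P_\F$ is a down-closed-type polyhedron of the blocking kind, this is a convex program solvable in polytime; its optimum lower-bounds OPT of MASO($\F$). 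Collapsing the $x_i$'s into a single vector $z = \sum_i x_i$, the convexity/subadditivity of the Lov\'asz extension gives that $z$ is feasible for the single-agent blocking relaxation with objective value at most $\sum_i \hat f_i(x_i)$, which is where the single-agent $\alpha(n)$-approximation can be brought to bear — but the subtlety is that we must recover \emph{per-agent} integral sets, not just a single integral set.

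The core of the argument is the rounding. For the $\min\{k,\cdot\}$ bound I would use the trivial reduction: solve $k$ separate single-agent problems, or more simply observe that assigning everything to the cheapest single agent (exploiting monotonicity and the fact that $f_i(V)\le \sum f_i$-type bounds) loses at most a factor $k$ against the fractional optimum after applying the SA $\alpha(n)$-approximation; this handles the regime where $k$ is small. The interesting regime is $k$ large, where we want the $\alpha(n)\cdot\log^2 n$ bound. Here the plan is a two-level rounding: (i) a ``geometric scaling / bucketing'' step that, at a cost of one $\log n$ factor, reduces to instances where each agent's fractional mass is essentially $0/1$ on each coordinate up to scaling — i.e. the fractional solution is supported on $O(\log n)$ scales; and (ii) within each scale, apply the single-agent rounding guaranteed by the blocking relaxation (which converts a fractional point of $P_\F$ with Lov\'asz-extension value $c$ into an integral set in $\F$ of cost $\le \alpha(n)\cdot c$), then union the resulting sets across scales and across a threshold-rounding of the Lov\'asz extension, paying the second $\log n$ for the union bound over thresholds. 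Monotonicity of the $f_i$ is used crucially to argue that taking unions (to repair feasibility for the partition/covering constraint $\F$) only helps, and that assigning ``leftover'' elements arbitrarily among agents does not increase cost beyond the bound.

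The main obstacle I anticipate is the second $\log n$ factor and the interaction between the partition constraint ($S_1 \uplus \cdots \uplus S_k \in \F$) and the per-agent rounding: a naive union of integral sets obtained independently per scale need not be a \emph{disjoint} union, so one must argue that overlaps can be resolved (e.g. keep each element in only one agent's set, using monotonicity to drop it from the others) without destroying membership in $\F$ — this is exactly where the structure of the blocking relaxation (upward-closed feasible families, so that supersets stay feasible) is needed. Verifying that the blocking relaxation of $\F$ and the single-agent $\alpha(n)$-approximation compose cleanly with this scale-decomposition — in particular that the polytime guarantee is preserved and that the Lov\'asz-extension threshold rounding can be coupled across all $k$ agents simultaneously — is the delicate part; the rest is geometric-series bookkeeping.
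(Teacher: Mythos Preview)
Your high-level picture (solve the multi-agent Lov\'asz-extension relaxation, then round) is the right starting point, but both halves of the $\min\{k,\log^2 n\}$ argument have genuine gaps.

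\textbf{The $k$-factor.} ``Assigning everything to the cheapest single agent'' is \emph{not} a $k$-approximation. Take $\F=\{V\}$, $k=n$, and modular $f_i(S)=M\cdot|S\setminus\{i\}|+[i\in S]$: the multi-agent optimum is $n$ (each agent $i$ takes element $i$), but any single agent must pay $\approx (n-1)M$, which is not within a factor $k$ of OPT for large $M$. What the paper does instead is work \emph{fractionally}: for each element $v$, give it entirely to the agent $i'$ with the largest coordinate $z^*_{i'}(v)$, setting $\hat z_{i'}(v)=k\,z^*_{i'}(v)$ and all other $\hat z_i(v)=0$. This keeps $\hat z\ge z^*$ (hence feasible), ensures $\hat z_i\le k z^*_i$ (hence at most a factor $k$ in cost by monotonicity and homogeneity of $f_i^L$), and crucially yields vectors with \emph{pairwise disjoint supports} $V_i$. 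Only then does the single-agent rounding enter, applied once to the function $g(S)=\sum_i f_i(S\cap V_i)$.

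\textbf{The $\log^2 n$ factor.} Your step (ii), ``within each scale apply the single-agent rounding,'' cannot work as stated: the restriction of $z^*$ to a single geometric bucket is not a feasible point of $P_\F$, so there is nothing for the SA rounding to act on. The paper's key idea---which your sketch is missing---is that within each of the $O(\log n)$ buckets $Z_j$ (where all coordinates of $z$ equal $2^{-j}$), after scaling up by $2^j$ one has a fractional \emph{cover} of $Z_j$, i.e.\ a feasible input to the MSCA problem with ground set $Z_j$ and trivial family $\{Z_j\}$. One then invokes the Chekuri--Ene $O(\log n)$ rounding for MSCA as a black box to obtain, per bucket, an \emph{integral partition} of $Z_j$ among the agents at $O(\log|Z_j|)$ cost blowup; scaling back and summing over the $O(\log n)$ buckets gives the second $\log n$. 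This produces a new fractional MA solution whose agent-vectors have pairwise disjoint supports, and \emph{only now} is the single-agent $\alpha(n)$-rounding applied, once, to the aggregated function $g$ as above. So the two $\log n$'s are (a) the number of buckets and (b) the Chekuri--Ene loss per bucket; the SA rounding contributes only the final $\alpha(n)$, not a per-bucket factor. Your ``threshold-rounding / union bound'' description does not capture this, and without the disjoint-supports step there is no clear way to turn the output of SA rounding into a per-agent partition.
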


We remark that the $O(\log^2 (n))$ approximation loss due to having multiple agents (i.e the  MA gap) is in the right ballpark, since the vertex cover problem has a factor $2$-approximation for single-agent and a tight $O(\log n)$-approximation for the MA version \cite{goel2009approximability}.

We also discuss how Goel et al's $O(\log n)$-approximation for MA vertex cover is a special case of a more general phenomenon. Their analysis
only relies on the fact that the feasible family (or at least its upwards closure) has a {\em bounded blocker property}.
Given a family $\F$, the {\em blocker} $\B(\F)$ of $\F$
consists of the minimal sets $B$ such that $B \cap F \neq \emptyset$ for each $F \in \F$.
We say that $\B(\F)$ is {\em $\beta$-bounded} if $|B| \leq \beta$ for all $B \in \B(\F)$.

Families with bounded blockers have been previously studied in the SA minimization setting, where the works  \cite{koufogiannakis2013greedy,iyer2014monotone}
show that $\beta$-approximations are always available.
Our next result (combined with these) establishes an $O(\log n)$ MA gap for bounded blocker families, thus improving  the $O(\log^2 (n))$ factor  in Theorem \ref{thm:klog} for general families.
We remark that this $O(\log n)$ MA gap is tight due to examples like vertex covers ($2$-approximation for SA and a tight $O(\log n)$-approximation for MA) or submodular facility location ($1$-approximation for SA and a tight $O(\log n)$-approximation for MA).

\begin{theorem}
\label{thm:beta}
Let $\F$ be a family with a $\beta$-bounded blocker. 
Then there is a randomized  $O(\beta \log n)$-approximation algorithm for monotone $MASO(\F)$ minimization.
\end{theorem}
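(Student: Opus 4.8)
The plan is to recast MASO$(\F)$ minimization as a single-agent submodular covering problem and to round a Lov\'asz-extension relaxation of it, with the logarithmic loss arising because a $\beta$-bounded family has only polynomially many minimal blockers to cover. First I replace $\F$ by its upward closure: since each $f_i$ is monotone, from any $(S_i)$ whose union $T$ merely \emph{contains} some $F\in\F$ we get a genuine MASO$(\F)$ solution of no larger cost by passing to $(S_i\cap F)$ (whose union is exactly $F$). Next I invoke blocking duality: a set $T\subseteq V$ contains a member of $\F$ iff $T$ meets every minimal blocker $B\in\B(\F)$ --- if $T$ contained no member of $\F$ then $V\setminus T$ would be a transversal of $\F$ and hence would contain some $B\in\B(\F)$ disjoint from $T$. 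So the problem becomes: choose $S_1,\dots,S_k$ with $\bigl(\bigcup_i S_i\bigr)\cap B\neq\emptyset$ for all $B\in\B(\F)$, minimizing $\sum_i f_i(S_i)$; disjointness of the $S_i$ is automatic by monotonicity, since an element placed with several agents may be kept with just one. Equivalently, on the lifted ground set $V\times[k]$ with the monotone submodular cost $g(X):=\sum_i f_i(\{v:(v,i)\in X\})$ this is a submodular-cost covering problem whose covering constraints are indexed by $\B(\F)$.

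The key quantitative point is that $\beta$-boundedness forces few constraints: $|\B(\F)|=n^{O(\beta)}$. I would then solve the convex relaxation $\min\sum_i\hat f_i(x^i)$ subject to $\sum_{v\in B}\sum_i x^i_v\ge 1$ for all $B\in\B(\F)$ and $x^i\in[0,1]^V$; this relaxes the problem (every integral solution $(\mathbf 1_{S_i})_i$ is feasible with the same objective), so its value is at most $\mathrm{OPT}$. I round it by $T$ independent \emph{threshold} rounds: in round $t$ draw $\theta^i_t\sim U[0,1]$ per agent and assign $v$ to agent $i$ whenever $x^i_v\ge\theta^i_t$, then for each agent take the union over its rounds (breaking any overlaps between agents arbitrarily, which only helps). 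Threshold rounding is what makes the cost bound work: $\E[f_i(\{v:x^i_v\ge\theta^i_t\})]=\hat f_i(x^i)$ exactly, so by subadditivity the expected total cost is at most $T$ times the relaxation value, i.e.\ $O(T)\cdot\mathrm{OPT}$; independent element-wise rounding, by contrast, can blow up the cost of a submodular function by a factor polynomial in $n$ relative to $\hat f_i$. For feasibility, the relaxation constraint together with $|B|\le\beta$ bounds below the probability that a fixed blocker is hit in a single round, and once every blocker's miss probability is driven below $n^{-\Omega(\beta)}$ a union bound over the $n^{O(\beta)}$ blockers gives feasibility with high probability; on the rare failure event one falls back to the single-agent $\beta$-approximation on the lift, which contributes only a lower-order term in expectation.

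The crux is the rounding analysis --- specifically, pinning down $T$. The per-round hitting probability furnished by the relaxation is only of order $1/\beta$, so boosting it to high probability across all $n^{O(\beta)}$ blockers by plain i.i.d.\ repetition (or, symmetrically, by rescaling the fractional solution by $\beta$ up front) spends a spurious extra $\beta$ factor and yields only $O(\beta^2\log n)$. Getting down to the claimed $O(\beta\log n)$ requires a more careful, adaptive argument --- re-solving the relaxation on the residual instance (the still-uncovered blockers, with the residual monotone submodular functions $f_i(\cdot\mid S_i)$) and charging the progress so that the cumulative cost and the number of iterations multiply out to $O(\beta\log n)$ --- together with a clean treatment of the conditioning on the feasibility event. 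This is exactly the step that exploits the bounded-blocker structure of $\B(\F)$, and it is where one generalizes the argument of Goel et al.\ for vertex covers (the case $\beta=2$).
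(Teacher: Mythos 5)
There is a genuine gap: your argument, as written, only delivers an $O(\beta^2\log n)$ guarantee, and you say so yourself --- the final paragraph defers the removal of the extra factor of $\beta$ to an unspecified ``adaptive, re-solve-on-the-residual'' argument that is never carried out. That adaptive scheme is also not obviously sound here (charging residual submodular costs $f_i(\cdot\mid S_i)$ against a re-solved relaxation is exactly the kind of step that needs a careful potential argument), so the crux of the theorem is missing.

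The idea that closes the gap is simpler than what you are reaching for, and it uses the bounded-blocker hypothesis differently. You invoke $\beta$-boundedness only to bound $|\B(\F)|$ by $n^{O(\beta)}$ and then try to hit every blocker with high probability, which is what forces the union bound over blockers and the extra $\beta$. Instead, take the optimal fractional solution $z^*=\sum_i z_i^*$ of the blocking relaxation, scale it once by $\beta$, and set $U=\{v:\beta z^*(v)\ge 1\}$. Since every $B\in\B(\F)$ has $z^*(B)\ge 1$ and $|B|\le\beta$, some $v\in B$ has $z^*(v)\ge 1/\beta$, so $U$ meets every blocker and hence $U\in\F$ \emph{deterministically}. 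The target of the rounding is now the fixed set $U$ (at most $n$ elements), not the family of blockers: the same per-agent threshold rounding you describe covers all of $U$ in $O(k\log|U|)$ rounds with high probability (a coupon-collector bound over elements, since each $v\in U$ has $\sum_i\min(\beta z_i^*(v),1)$ summing to at least $1$), and each round has expected cost exactly $\frac1k\sum_i f_i^L(\beta z_i^*)=\frac{\beta}{k}\sum_i f_i^L(z_i^*)$ by homogeneity of the Lov\'asz extension. Multiplying out gives $O(\log n)\cdot\beta\cdot\mathrm{OPT}$ in one shot --- no union bound over $n^{O(\beta)}$ blockers, no adaptivity, and no need for the polynomial bound on $|\B(\F)|$ at all. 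Your setup (upward closure via monotonicity, the blocking relaxation, threshold rounding, subadditivity for the per-round cost) is otherwise the same as the paper's; the single missing observation is that scaling by $\beta$ converts the covering condition on blockers into membership of the level set $U$ in $\F$.
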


While our work focuses almost completely on monotone objectives, we show in Section \ref{sec:MAfromSA} that upwards closed families with a bounded blocker remain tractable under some special types of nonmonotone objectives introduced by Chekuri and Ene.

We conclude our minimization work by discussing a class of families which behaves well for MA minimization despite not having a bounded blocker. More specifically, in Section~\ref{sec:rings} we observe that crossing (and ring) families have an MA gap of $O(\log n)$.

\begin{theorem}
There is a tight $\ln (n)$-approximation for monotone MASO($\F$) minimization over crossing families $\F$.
\end{theorem}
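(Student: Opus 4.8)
The plan is to prove the two bounds separately: an $H_n$-approximation, where $H_n = \sum_{j \le n} 1/j \le \ln n + 1$, and a matching $\ln n$ lower bound coming from Set Cover.

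\emph{Upper bound.} The observation I would exploit is that a ring family is extremely easy to ``reach''. If $\L$ is closed under union and intersection then $L_0 := \bigcap_{L \in \L} L$ is itself in $\L$ and is its unique minimum element. Consequently, in MASO($\L$) the agents never need to jointly cover more than $L_0$: given any feasible $(S_1,\dots,S_k)$ with $\biguplus_i S_i = L' \in \L$, the sets $S_i \cap L_0$ are still pairwise disjoint, $\biguplus_i (S_i \cap L_0) = L' \cap L_0 = L_0 \in \L$, and $f_i(S_i \cap L_0) \le f_i(S_i)$ by monotonicity. Hence MASO($\L$) has exactly the same optimal value as the problem of partitioning the \emph{fixed} set $L_0$ among the $k$ agents so as to minimize $\sum_i f_i(S_i)$, i.e.\ the minimum submodular cost allocation (SCA) problem restricted to ground set $L_0$. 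For a general crossing family $\F$ I would reduce to this case by uncrossing: for $s \neq t$ set $\F_{st} := \{F \in \F : s \in F,\ t \notin F\}$; any two members of $\F_{st}$ meet (both contain $s$) and do not jointly cover $V$ (neither contains $t$), so the crossing axiom applies and $\F_{st}$ is a ring family, and every $F \in \F$ with $\emptyset \neq F \neq V$ lies in some $\F_{st}$. Thus MASO($\F$) reduces to solving MASO($\F_{st}$) over the $O(n^2)$ pairs $(s,t)$ --- plus the two trivial targets $\biguplus_i S_i = \emptyset$ and $\biguplus_i S_i = V$, handled the same way --- and returning the cheapest.

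\emph{Approximating monotone SCA.} It remains to approximate monotone SCA on a universe of size $m \le n$ within $H_m$. Here I would run the classical greedy for submodular covering: while part of $L_0$ is still uncovered, pick the agent $i$ and set $T$ inside the uncovered region minimizing the average cost $f_i(T)/|T|$ --- a single submodular-function minimization, hence polytime --- assign $T$ to agent $i$, and repeat. Since every element can be assigned to every agent, the process terminates with a partition of $L_0$, and the standard dual-fitting / averaging analysis bounds its cost by $H_{|L_0|} \cdot \mathrm{OPT} \le (\ln n + 1)\,\mathrm{OPT}$. Taking the best of the $O(n^2)$ subproblems loses nothing in the ratio.

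\emph{Lower bound (tightness).} I would reduce Set Cover to MASO over the trivial crossing family $\F = \{V\}$. Given a universe $U$ with $|U| = n$ and sets $C_1, \dots, C_m$ with $\bigcup_j C_j = U$, take ground set $V = U$, one agent per set ($k = m$), and $f_j(S) := \min(|S|, 1) + M \cdot \min(|S \setminus C_j|, 1)$ with $M$ a sufficiently large polynomial in $n$; each $f_j$ is nonnegative, monotone, submodular, with $f_j(\emptyset) = 0$. A feasible solution partitions $V$ into $S_1, \dots, S_m$; any partition with some $S_j \not\subseteq C_j$ costs at least $M$, while partitions with all $S_j \subseteq C_j$ cost exactly $|\{j : S_j \neq \emptyset\}| \le m$, so no near-optimal solution violates $S_j \subseteq C_j$, and there the optimum equals the minimum number of sets needed to cover $U$ (a partition with $S_j \subseteq C_j$ yields the cover $\{C_j : S_j \neq \emptyset\}$, and conversely a cover of size $t$ yields such a partition by assigning each element to one set containing it). Hence a $\rho$-approximation for MASO over crossing families yields a $\rho$-approximation for Set Cover, and the $(1-\varepsilon)\ln n$ inapproximability of Set Cover forces $\rho \ge (1 - o(1)) \ln n$, matching the upper bound.

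\emph{Main obstacle.} The genuinely new ingredient is the collapse ``MASO over a ring family $=$ allocation over its bottom element $L_0$,'' which is short; the rest is assembly of known tools. The two points I expect to need real care are (i) stating the uncrossing step so that it stays compatible with whatever oracle access to $\F$ already makes the single-agent problem tractable, and (ii) extracting the exact constant $\ln n$ (not merely $O(\log n)$) from the SCA greedy and lining it up with the sharpest known Set Cover hardness, so that the word ``tight'' holds up to lower-order terms; if an $O(\log n)$ bound sufficed, the greedy analysis would be entirely off the shelf.
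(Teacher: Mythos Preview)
Your proposal is correct and follows essentially the same route as the paper: split the crossing family into the $O(n^2)$ ring families $\F_{st}$, collapse each ring family to its unique minimal element, and invoke the $\ln n$-approximation for monotone submodular cost allocation (the paper simply cites Svitkina--Tardos here, while you spell out the greedy and an explicit Set-Cover hardness reduction that the paper leaves implicit). One small correction: the density step $\min_{T}\,f_i(T)/|T|$ is not literally a single submodular minimization but a parametric one (minimize $f_i(T)-\lambda|T|$ and binary-search on $\lambda$); it is still polytime, so your argument stands.
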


We now discuss our contributions for the maximization setting. Our main result here establishes that if the SA primitive for a family $\F$ admits approximation via its multilinear relaxation (see Section \ref{sec:max-SA-MA-formulations}),
then we may extend this to its multi-agent version with a constant factor loss.

\begin{theorem}
	\label{thm:max-MA-gap}
	If there is a (polytime) $\alpha(n)$-approximation for monotone SO($\F$) maximization
	via its multilinear relaxation, then there is a (polytime) $(1-1/e) \cdot \alpha(n)$-approximation
	for monotone MASO($\F$) maximization. Furthermore, given a downwards closed family $\F$,
	if there is a (polytime) $\alpha(n)$-approximation 
	for nonmonotone SO($\F$) maximization
	via its multilinear relaxation, then there is a (polytime) $0.385 \cdot \alpha(n)$-approximation
	for nonmonotone MASO($\F$) maximization.
\end{theorem}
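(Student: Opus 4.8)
The plan is to view MASO($\F$) as a single submodular maximization over a lifted ground set whose feasible region is a ``lifted'' copy of $\F$ intersected with a partition matroid, to solve the multilinear relaxation of that problem (the $1-1/e$, resp.\ $0.385$, loss coming from the partition-matroid part), and then to round using the rounding primitive that comes with the SO($\F$) algorithm.

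\emph{The reduction.} Let $\Omega=V\times[k]$, and for $T\subseteq\Omega$ write $T_i=\{v:(v,i)\in T\}$. Define $g:2^\Omega\to\R_{\ge 0}$ by $g(T)=\sum_i f_i(T_i)$; being a direct sum of submodular functions, $g$ is submodular, and it is monotone exactly when all $f_i$ are. Put $\F^\ast=\{T\subseteq\Omega: T_1,\dots,T_k \text{ pairwise disjoint},\ \bigcup_i T_i\in\F\}$, so MASO($\F$) with the $f_i$ is precisely SO($\F^\ast$) with $g$, and $\F^\ast$ is $\F$ (lifted) intersected with the partition matroid ``at most one copy of each $v$''. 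The multilinear extension of $g$ is $G(y)=\sum_i F_i(y^i)$, with $y^i\in[0,1]^V$ the block of $y\in[0,1]^\Omega$ on $\{(v,i):v\in V\}$; writing $\bar y_v=\sum_i y_{(v,i)}$ and letting $P_\F\subseteq[0,1]^V$ be the polytope underlying SO($\F$)'s multilinear relaxation, the relaxing polytope for $\F^\ast$ is $P^\ast=\{y\ge 0:\bar y\in P_\F\}$ (the constraint $\bar y_v\le 1$ is implied), which is downward closed whenever $P_\F$ is. Equivalently, $\max_{y\in P^\ast}G(y)=\max\{\sum_i F_i(x^i): x^i\ge 0,\ \sum_i x^i\in P_\F\}$, the multilinear relaxation of MASO($\F$).

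\emph{Solving the relaxation.} A linear-optimization oracle over $P_\F$ (part of what ``via the multilinear relaxation'' buys us, since continuous greedy for SO($\F$) already needs one) gives one over $P^\ast$: for $w\ge 0$, $\max_{y\in P^\ast}\langle w,y\rangle=\max_{z\in P_\F}\langle\tilde w,z\rangle$ with $\tilde w_v=\max_i w_{(v,i)}$, the optimum sending each element's mass to one of its best agents. So in the monotone case the continuous greedy algorithm of Calinescu--Chekuri--P\'al--Vondr\'ak, run on $G$ over $P^\ast$, returns $y\in P^\ast$ with $G(y)\ge(1-1/e)\max_{P^\ast}G\ge(1-1/e)\cdot\mathrm{OPT}$, using $\mathbf{1}_{T^\ast}\in P^\ast$ and $G(\mathbf{1}_{T^\ast})=\mathrm{OPT}$ for an optimal lifted set $T^\ast$. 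In the nonmonotone case (here $\F$, hence $P_\F$ and $P^\ast$, are downward closed) I would instead run the Buchbinder--Feldman $0.385$-algorithm for nonmonotone maximization over a solvable down-closed polytope, obtaining $G(y)\ge 0.385\cdot\mathrm{OPT}$.

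\emph{Rounding, and the main obstacle.} To round $y\in P^\ast$: (i) independently, for each $v$ with $\bar y_v>0$, set $\sigma(v)=i$ with probability $y_{(v,i)}/\bar y_v$, giving a partition $V=V_1\uplus\cdots\uplus V_k$; (ii) independently, apply the marginal-preserving, pipage/swap-type rounding underlying the SO($\F$) guarantee to $\bar y\in P_\F$, getting $S\in\F$; (iii) output $(S\cap V_1,\dots,S\cap V_k)$, which is feasible since these partition $S\in\F$. The analysis fixes $i$, conditions on $V_i$, uses that $S\mapsto f_i(S\cap V_i)$ is submodular (monotone if $f_i$ is) to get $\E[f_i(S\cap V_i)\mid V_i]\ge\alpha(n)F_i(\bar y|_{V_i})$, and then takes expectation over $V_i$: since ``$v\in V_i$ with probability $y_{(v,i)}/\bar y_v$, then $v\in S$ with probability $\bar y_v$'' samples $v$ with probability $y_{(v,i)}$, multilinearity gives $\E_{V_i}[F_i(\bar y|_{V_i})]=F_i(y^i)$; summing over $i$ yields $\E[\sum_i f_i(S\cap V_i)]\ge\alpha(n)G(y)$, which is the claimed bound. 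The hard part, I expect, is to isolate from the hypothesis exactly the two primitives used here and to check they still apply in this slightly broader setting: (a) a marginal-preserving rounding $P_\F\to\F$ with $\E[h(S)]\ge\alpha(n)H(x)$ for \emph{every} (monotone, resp.\ nonmonotone) submodular $h$ -- in particular for the auxiliary functions $h=f_i(\cdot\cap V_i)$, not just the objective it was designed for; and (b) a linear oracle over $P_\F$ so the continuous-greedy / Buchbinder--Feldman step can be executed over $P^\ast$. Both should be extractable from the framework of Section~\ref{sec:max-SA-MA-formulations}; the remaining slack (the $(1-\varepsilon)$ losses, expectation-to-high-probability conversion, and value-oracle estimation of $G$) is routine.
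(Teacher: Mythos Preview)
Your strategy is essentially the paper's: solve the multi-agent multilinear relaxation to a $(1-1/e)$ (resp.\ $0.385$) factor, then produce a pre-assignment $V_1,\dots,V_k$ of elements to agents so that a single-agent rounding applied to $\bar y\in P_\F$ for the auxiliary submodular function $g(S)=\sum_i f_i(S\cap V_i)$ finishes the job. The one substantive difference is how the pre-assignment is obtained. You randomize: assign each $v$ to agent $i$ with probability $y_{(v,i)}/\bar y_v$, and then use multilinearity to show $\E_{V_1,\dots,V_k}[g^M(\bar y)]=\sum_i F_i(y^i)=G(y)$. The paper instead does this deterministically: since each $F_i$ is \emph{linear} along any axis direction $\mathbf e_v$, the map $t\mapsto F_i(y^i+t\mathbf e_v)+F_{i'}(y^{i'}-t\mathbf e_v)$ is linear, so shifting all of $v$'s mass from one agent to another never decreases $G$; iterating yields vectors $\hat z_i$ with pairwise disjoint supports, $\sum_i\hat z_i=\bar y$ unchanged, and $\sum_i F_i(\hat z_i)\ge G(y)$. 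This avoids the nested expectation entirely and gives a deterministic reduction to a single call of the SA rounding on $g$.

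One small wrinkle in your write-up: you phrase the rounding guarantee per agent, i.e.\ $\E[f_i(S\cap V_i)\mid V_i]\ge\alpha(n)F_i(\bar y|_{V_i})$, but the hypothesis only promises one rounding call tailored to one submodular objective; it need not be oblivious to the function. So you should round once with respect to the single function $g(S)=\sum_i f_i(S\cap V_i)$ (this is exactly what the paper does), obtaining $\E[g(S)\mid V_1,\dots,V_k]\ge\alpha(n)\,g^M(\bar y)=\alpha(n)\sum_i F_i(\bar y|_{V_i})$, and then take the outer expectation. With that adjustment your randomized pre-assignment is a perfectly valid alternative to the paper's deterministic mass-shifting; the latter is just a bit cleaner and removes the need to argue independence across coordinates.
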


We remark that the $(1-1/e)$ MA gap in the monotone case is tight due to examples like
$\F=\{V\}$, where there is a trivial $1$-approximation for the SA problem and a tight $(1-1/e)$-approximation for the MA version \cite{vondrak2008optimal}.

In Section \ref{sec:MASA} we describe a simple generic reduction that
shows that for some families an (optimal) MA gap of $1$ holds.

\begin{theorem}
	\label{thm:max-invariance}
	Let $\F$ be a matroid, a $p$-matroid intersection, or a $p$-system. Then, if there is a (polytime) $\alpha$-approximation algorithm for monotone (resp. nonmonotone) SO($\F$) maximization,
	there is a (polytime) $\alpha$-approximation algorithm for monotone (resp. nonmonotone) MASO($\F$) maximization.
\end{theorem}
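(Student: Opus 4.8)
The plan is to exhibit an explicit \emph{structure-preserving reduction} from $\mathrm{MASO}(\F)$ to $\mathrm{SO}(\F')$ and then argue that for $\F$ a matroid, $p$-matroid intersection, or $p$-system, the resulting $\F'$ lies in the \emph{same} class, while the resulting objective $f'$ stays (monotone/nonmonotone) submodular. Concretely, I would work over the multi-agent bipartite ground set $V' = V \times [k]$, associating to each pair $(v,i)$ the assignment of element $v$ to agent $i$. A set $S' \subseteq V'$ projects to $k$ sets $S_1,\dots,S_k$ via $S_i = \{ v : (v,i) \in S'\}$; call $S'$ a \emph{valid} assignment if these $S_i$ are pairwise disjoint, i.e. no $v$ is assigned to two agents. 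Define the objective $f'(S') = \sum_{i=1}^k f_i(S_i)$ (which, as a direct sum of submodular functions on disjoint copies of $V$, is submodular on $2^{V'}$, and monotone iff all the $f_i$ are), and define $\F' = \{ S' \subseteq V' : S' \text{ valid and } \biguplus_i S_i \in \F\}$. There is then an exact, value-preserving bijection between feasible solutions of $\mathrm{MASO}(\F)$ and those of $\mathrm{SO}(\F')$, so an $\alpha$-approximation for the latter gives one for the former; the only thing left is to identify the combinatorial type of $\F'$.

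The key step is showing that $\F'$ inherits the right structure. For the matroid case: the validity condition ``each $v$ used by at most one agent'' is exactly the independence condition of a partition matroid $M_1$ on $V'$ whose parts are the $k$-element fibers $\{(v,1),\dots,(v,k)\}$. The condition ``$\biguplus_i S_i \in \F$'' is the independence condition of the matroid $M_2$ on $V'$ obtained by taking the given matroid $\F$ on $V$ and replacing each element $v$ by $k$ parallel copies $(v,1),\dots,(v,k)$ (the ``$k$-fold parallel extension'' of $\F$), which is again a matroid. Hence $\F' = M_1 \cap M_2$ is a $2$-matroid intersection. If $\F$ is already a $p$-matroid intersection, the same argument replaces $\F$ by $p$ parallel-extended matroids and adds the partition matroid $M_1$, giving a $(p+1)$-matroid intersection — so to land exactly in ``$p$-matroid intersection'' I would instead \emph{fold the validity constraint into one of the $p$ matroids}, observing that intersecting any matroid's parallel extension with the partition matroid $M_1$ is again a matroid (this is where I expect the main work). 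For $p$-systems, I would use the characterization via bounded exchange/extendibility ratios: the parallel extension of a $p$-system is a $p$-system, the partition matroid $M_1$ is a $1$-system, and the intersection of a $p$-system with a $1$-system is a $p$-system (since adding matroid constraints does not increase the system parameter); one must check the definition of $p$-system is closed under these operations, which is routine but needs care.

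The main obstacle is precisely the $p$-matroid-intersection and $p$-system bookkeeping: naively the validity constraint is an \emph{extra} matroid, so I must show it can be absorbed — either by proving that ``$M \cap (\text{partition matroid } M_1)$ is a matroid whenever $M$ is a matroid of the parallel-extension form'' (true because the partition matroid $M_1$ here is the $k$-fold parallel extension of the free matroid, and the intersection with a parallel extension can be realized as a single matroid via contraction/truncation-free arguments), or by folding it directly into the definition of $\F'$ without increasing $p$. A clean way to sidestep this is to note that in the parallel extension any independent set automatically uses each original element at most once \emph{if} we restrict to the common ground set carefully — but since different $f_i$ differ, we genuinely need the $V\times[k]$ copies, so the absorption argument is unavoidable. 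Once that is settled, monotonicity/nonmonotonicity and submodularity of $f'$ are immediate, the bijection is immediate, and the theorem follows by invoking the hypothesized $\alpha$-approximation for $\mathrm{SO}(\F')$ in the same class. I would also remark that this reduction is the ``generic reduction'' alluded to in Section~\ref{sec:MASA}, of which Theorem~\ref{thm:max-invariance} is the headline instance.
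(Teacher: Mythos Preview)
Your reduction is exactly the one the paper uses, and for matroids and $p$-matroid intersections your argument is essentially correct --- but you are fighting a phantom obstacle. In the $k$-fold parallel extension of a matroid, any two copies $(v,i),(v,j)$ of the same element are parallel, so any independent set already contains at most one of them. Thus your $M_2$ automatically satisfies the validity constraint, $M_1\cap M_2=M_2$, and $\F'$ is a single matroid with no ``absorption'' needed. The same observation handles $p$-matroid intersection: lifting each $\I_j$ to $\I_j'=\{S':\text{valid and }\biguplus_i S_i\in\I_j\}$ gives $p$ matroids on $V\times[k]$ with $\F'=\bigcap_j \I_j'$, exactly as in Corollary~\ref{Corollary p-matroid intersection}. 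No $(p+1)$st matroid appears.

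The genuine gap is your $p$-system argument. Your claim that ``the intersection of a $p$-system with a $1$-system is a $p$-system'' is false: already the intersection of two matroids is only a $2$-system (take matchings in the path $a\!-\!b\!-\!c\!-\!d$, where $\{bc\}$ and $\{ab,cd\}$ are both maximal). And if by ``parallel extension of a $p$-system'' you mean $\{S':cov(S')\in\F\}$ without the validity constraint, that is not a $p$-system either --- for the rank-$1$ uniform matroid on $\{a,b\}$ and $U'=\{(a,1)\}\cup\{(b,1),\dots,(b,k)\}$, the maximal feasible subsets have sizes $1$ and $k$. So neither route you sketch works.

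The paper's fix is direct and short: it proves (Lemma~\ref{lem:bases-invariance}) that for every $S'\subseteq E$, the bases of $S'$ in $\F'$ are in size-preserving correspondence with the bases of $cov(S')$ in $\F$ --- any basis $B$ of $S'$ has $cov(B)$ a basis of $cov(S')$ with $|cov(B)|=|B|$, and conversely any basis of $cov(S')$ lifts to a basis of $S'$ of the same size. Hence the max/min basis ratio over any $S'$ equals the ratio over $cov(S')$, and $(E,\F')$ is a $p$-system whenever $(V,\F)$ is (Proposition~\ref{prop:p-systems}). The matroid case then drops out as $p=1$. Replacing your $p$-system paragraph with this argument makes the proof complete.
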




In the setting of CMASO (i.e. (\ref{ma})) our results provide additional modelling flexibility.
They imply that one maintains decent approximations even while adding interesting side  constraints.
For instance, for a monotone maximization instance of CMASO
where $\F$ corresponds to a $p$-matroid intersection and the $\F_i$ are all matroids, our results from Section \ref{sec:MASA}
lead to a $(p+1+\epsilon$)-approximation algorithm.
We believe that these, combined with other results from Section \ref{sec:MASA}, substantially expand the family of tractable models for maximization.

While the impact of this reduction is more for maximization, 
it also has some interesting consequences in the minimization setting.
We discuss in Section \ref{sec:reduction-properties} how some of our results
help explaining why for the family of spanning trees, perfect matchings, and $st$-paths,
the approximations factors revealed in \cite{goel2009approximability} for the monotone minimization problem
are the same for both the single-agent and multi-agent versions.

\subsection{Some applications of (capacitated) multi-agent optimization}
\label{sec:applications}

In this section we present several problems  in the literature
which are special cases of Problem (\ref{eqn:MA}) and the more general Problem (\ref{ma}).
We also indicate how the extra generality of {\sc CMASO} (i.e. (\ref{ma})) gives modelling advantages.
We start with the \underline{maximization} setting.

\begin{example}[The Submodular Welfare Problem] 
	~ The most basic problem in the maximization setting arises when
	we take the feasible space $\F=\{V\}$. This describes  a  well-known model (introduced in \cite{lehmann2001combinatorial}) for allocating goods   to agents, each of which
	has a monotone submodular valuation (utility) function over baskets of goods. This is formulated  as  (\ref{eqn:MA}) by considering
	nonnegative monotone functions $f_{i}$ and $\F=\{V\}$.
	The CMASO framework allows us to incorporate  additional constraints into this problem by defining the families $\F_{i}$ appropriately. For instance, one can impose cardinality constraints on the number of elements that an agent can take, or to only allow agent $i$ to take a set $S_i$ of elements satisfying some bounds $L_i \subseteq S_i \subseteq U_i$.
\end{example}

\begin{example}[The Separable Assignment Problem]
	~ An instance of the Separable Assignment Problem (SAP) consists of $m$ items and $n$ bins. Each bin $j$ has an associated downwards closed collection of feasible sets $\F_j$, and a modular function $v_{j}(i)$ that denotes the value of placing item $i$ in bin $j$. The goal is to choose disjoint feasible sets $S_j \in \F_j$ so as to maximize $\sum_{j=1}^n v_j(S_j)$. This well-studied problem (\cite{fleischer2006sap,goundan2007revisiting,calinescu2011maximizing}) corresponds to a CMASO instance where all the objectives are modular, $\F=2^V$, and the families $\F_i$ are downwards closed.
\end{example}


We next discuss an example where using matroid-capacity constraints $\F_i$  in CMASO is beneficial.

\begin{example}[Recommendation Systems and Matroid Constraints] ~ This is a widely deployed class of problems that  entails
	the targeting of product ads to a mass of (largely unknown) buyers or ``channels''.  In  \cite{chen2016conflict} a ``meta'' problem
	is considered where (known) prospective buyers are recommended to interested sellers. This type of recommendation system incurs additional constraints such as
	(i) bounds on the size of the buyer list provided to  each seller  (e.g., constrained by a seller's  budget) and
	(ii) bounds on how often a buyer appears on a list (to not bombard buyers).    These constraints are modelled as
	a ``$b$-matching''  problem in a bipartite buyer-seller graph $G_B$. They also consider a more sophisticated  model which incorporates ``conflict-aware'' constraints
	on the buyer list for each seller, e.g., no more than one buyer from a household should be recommended to a seller. They model conflicts using extra edges amongst the buyer nodes
and they specify
	 an upper bound on the number of allowed conflict edges induced by a seller's recommendation list.  Heuristics for this (linear-objective) model  \cite{chen2016conflict} are successfully  developed on Ebay data, even though
	the computational problem is shown to be NP-hard. In fact, subsequent work \cite{chen2016group} shows that conflict-aware $b$-matching suffers an inapproximability bound of  $O(n^{1-\epsilon})$.
	 We now propose an alternative model  which admits an $O(1)$-approximation. Moreover, we allow a more general submodular multi-agent objective $\sum_i f_i(B_i)$ where $B_i$ are the buyers recommended to seller $i$.

To formulate this in the CMASO model (\ref{ma}) we consider the same complete buyer-seller bipartite graph from previous work.  We now represent a buyer list $B_i$ as a set of edges $S_i$.
In order that each buyer $v$ is not recommended more than its allowed maximum $b(v)$,
we add the constraint that the number of edges in $F=\cup S_i$ which are incident to buyer node $v$ is at most $b(v)$. The family $\F$ of such sets $F$ forms a partition matroid. Hence the problem can be formulated as:
	\[
	\begin{array}{cc}
	\max & \sum_{i=1}^{k}f_{i}(S_{i})\\
	\mbox{s.t.} &  S_1 \uplus  S_2 \cdots \uplus  S_k \in \F \\
	& S_i \in \F_i, \;\; \forall i \in [k]
	\end{array}
	\]

We now
define $\F_i$ to enforce conflict constraints for the seller as follows. Let $V_i$ denote the edges from seller node $i$ to allowable buyers  for $i$ (possibly $V_i$ is all buyers).
	We may then partition $V_i$ into ``households'' $V_{ij}$. In order to model conflicts, we insist that $S_i$ is allowed to include at   most $1$ element from each $V_{ij}$.
	The resulting family  $\F_i$ is  a partition or laminar matroid.
	Our results imply that this new version  has a polytime $O(1)$-approximation (in the value oracle model). 
\end{example}

\begin{example}[Sensor Placement]
~ The problem of placing sensors and information gathering has been popular in the submodularity literature \cite{krause2007near,krause1973optimal,krause08efficient}.
We are given a set of sensors $V$ and a set of possible locations $\{1,2,\ldots,k\}$ where the sensors can be placed.
There is also a budget constraint restricting the number of sensors that can be deployed.
The goal is to place sensors at some of the locations so as to maximize the total ``informativeness''.
Consider a multi-agent objective function $\sum_{i \in [k]} f_i(S_i)$, where $f_i(S_i)$
measures the informativeness of placing sensors $S_i$ at location $i$. It is then natural to consider a diminishing return (i.e. submodularity) property for the $f_i$'s. 
We can then formulate the problem as MASO($\F$)
where $\F:=\{ S \subseteq V: |S| \leq b \}$ imposes the budget constraint. We can also use CMASO for additional modelling flexibility.
For instance, we may define  $\F_i=\{ S\subseteq V_i: |S| \leq b_i \}$
where $V_i$ are the allowed sensors for location $i$ and $b_i$ is an upper bound on the sensors
located there.
\end{example}

We now discuss Problem (\ref{eqn:MA}) and (\ref{ma}) in the \underline{minimization} setting.

\begin{example}[Minimum Submodular Cost Allocation]
\label{ex:MSCA}
~ The most basic problem in the minimization setting arises when
we simply take $\F=\{V\}$.
This problem, $\min \sum_{i=1}^{k}f_{i}(S_{i}): S_{1}\uplus \cdots\uplus S_{k}=V$,
has been widely considered in the literature for both monotone \cite{svitkina2010facility} and
nonmonotone functions \cite{chekuri2011submodular,ene2014hardness},
and is referred to as the {\sc Minimum Submodular Cost Allocation (MSCA) problem}\footnote{Sometimes referred to as submodular procurement auctions.} (introduced in \cite{hayrapetyan2005network,svitkina2010facility}  and further developed in \cite{chekuri2011submodular}).
This is formulated  as  (\ref{eqn:MA}) by taking $\F=\{V\}$. 
The CMASO framework allows us to incorporate  additional  constraints into this problem. The most natural are to impose cardinality constraints on the number of elements that an agent can take, or to only allow agent $i$ to take a set $S_i$ of elements satisfying some bounds $L_i \subseteq S_i \subseteq U_i$. 
\end{example}

\begin{example}[Multi-agent Minimization]
~ Goel et al \cite{goel2009approximability} consider the  special cases
of MASO($\F$) where the objectives are nonnegative monotone submodular and $\F$ is either the
family of vertex covers, spanning trees, perfect matchings, or
shortest $st$ paths.
\end{example}

\subsection{Related work}
\label{sec:related work}

{\bf Single Agent Optimization.} The high level view of the tractability status for unconstrained (i.e., $\F=2^V$)
submodular optimization is that   both maximization and minimization generally behave well. Minimizing a submodular
set function is a classical combinatorial optimization problem which
can be solved in polytime \cite{grotschel2012geometric,schrijver2000combinatorial,iwata2001combinatorial}.
Unconstrained maximization on the
other hand is known to be inapproximable for general submodular set functions
but admits a polytime constant-factor approximation algorithm
when $f$ is nonnegative \cite{buchbinder2015tight,feige2011maximizing}.

In the constrained maximization setting, the classical work  \cite{nemhauser1978analysis,nemhauser1978best,fisher1978analysis} already
established an optimal $(1-1/e)$-approximation
factor for maximizing a nonnegative monotone submodular function subject to a
cardinality constraint, and a $(1/(k+1))$-approximation for maximizing
a nonnegative monotone submodular function subject to $k$ matroid constraints. This approximation is almost tight in the sense that there is an (almost matching)
factor $\Omega(\log(k)/k)$ inapproximability result \cite{hazan2006complexity}.
For nonnegative monotone
functions, \cite{vondrak2008optimal,calinescu2011maximizing} give
 an optimal $(1-1/e)$-approximation based on multilinear extensions when $\F$ is a matroid;
 \cite{kulik2009maximizing} provides a
$(1-1/e-\epsilon)$-approximation when $\F$ is given by a constant number
of knapsack constraints, and \cite{lee2010submodular}
gives a local-search algorithm that achieves a $(1/k-\epsilon)$-approximation
(for any fixed $\epsilon>0$) when $\F$ is a $k$-matroid intersection. For nonnegative nonmonotone functions, a $0.385$-approximation is the best factor known \cite{buchbinder2016constrained} for maximization under a matroid constraint, in \cite{lee2009non} a $1/(k+O(1))$-approximation is given for $k$ matroid constraints with $k$ fixed. A simple ``multi-greedy'' algorithm \cite{gupta2010constrained} matches the approximation of Lee et al. but is polytime for any $k$. Vondrak \cite{vondrak2013symmetry} gives a $\frac{1}{2}(1-\frac{1}{\nu})$-approximation
under a matroid base constraint where $\nu$ denotes the fractional
base packing number. Finally, Chekuri et al \cite{vondrak2011submodular} introduce a general framework based on relaxation-and-rounding that allows for combining different types of constraints. This leads, for instance, to $0.38/k$ and $0.19/k$ approximations for maximizing nonnegative submodular monotone and nonmonotone functions respectively under the combination of $k$ matroids and $\ell=O(1)$ knapsacks constraints.

For constrained minimization, the news is worse
\cite{goel2009approximability,svitkina2011submodular,iwata2009submodular}.
If $\F$ consists of spanning trees (bases of
a graphic matroid) Goel et al \cite{goel2009approximability} show a lower bound
 of $\Omega(n)$, while in the case where $\F$ corresponds
to the cardinality constraint $\{S:|S|\geq k\}$ Svitkina and Fleischer
\cite{svitkina2011submodular} show a lower bound  of $\tilde{\Omega}(\sqrt{n})$.
There are a few exceptions. The problem can be solved exactly when $\F$ is a ring family (\cite{schrijver2000combinatorial}), triple family (\cite{grotschel2012geometric}), or parity family (\cite{goemans1995minimizing}).
In the context of NP-Hard problems, there are almost no cases where good (say $O(1)$ or $O(\log n)$) approximations exist.
We have that the submodular vertex cover admits
a $2$-approximation (\cite{goel2009approximability,iwata2009submodular}), and the $k$-uniform hitting
set has $O(k)$-approximation.

{\bf Multi-agent Problems.} In the  maximization setting the main  multi-agent problem studied is the Submodular Welfare Maximization ($\F=\{V\}$)
 for which the initial  $1/2$-approximation \cite{lehmann2001combinatorial} 
 was  improved to $1-1/e$ by Vondrak \cite{vondrak2008optimal}
who introduced the continuous greedy algorithm. This approximation
is in fact optimal  \cite{khot2005inapproximability,mirrokni2008tight}.
We are not aware of maximization work for Problem (\ref{eqn:MA}) for a nontrivial family $\F$.

For the multi-agent minimization setting, MSCA (i.e. $\F=\{V\}$) is the most studied application of Problem (\ref{eqn:MA}).
For nonnegative monotone functions, MSCA is equivalent
to the Submodular Facility Location problem considered
in \cite{svitkina2010facility}, where a tight
$O(\log n)$ approximation is given.
If  the functions $f_{i}$ are nonnegative
and nonmonotone, then   no
multiplicative factor approximation exists \cite{ene2014hardness}. If, however, the functions
can be written as $f_{i}=g_{i}+h$ for some nonnegative monotone submodular $g_{i}$
and a nonnegative symmetric submodular function $h$, an $O(\log n)$
approximation is given in \cite{chekuri2011submodular}. In the more general case where $h$ is nonnegative submodular, an $O(k \log n)$ approximation is provided in \cite{ene2014hardness}, and this is tight \cite{mirzakhani2014sperner}.

Goel et al \cite{goel2009approximability} consider the minimization case
of  (\ref{ma}) for nonnegative monotone submodular functions, in which $\F$ is a
nontrivial collection of subsets of $V$ (i.e. $\F\subset2^{V}$)
and there is no restriction on the $\F_{i}$ (i.e. $\F_{i}=2^{V}$
for all $i$). In particular, given a graph $G$ they consider the
families of vertex covers, spanning trees, perfect matchings, and
shortest $st$ paths. They provide a tight $O(\log n)$ approximation
for the vertex cover problem, and show polynomial hardness for the
other cases. To the best of our knowledge \cite{goel2009approximability}  is the only work
 on
Problem (\ref{eqn:MA})  for nontrivial collections $\F$.  

\section{Multi-agent submodular minimization}	
\label{sec:multimin}

In this section we seek generic reductions for multi-agent minimization problems to their single-agent primitives. We mainly focus on the case of nonnegative monotone submodular objective functions and we work
with a natural convex relaxation that is obtained via the Lov\'asz extension of a set function (cf. Appendices \ref{sec:blocking} and \ref{sec:relaxations}). We show that if the SA primitive admits approximation via such relaxation, then we may extend this to its MA version up to an $O(\min\{k, \log^2 (n)\})$ factor loss.

As noted already, the $O(\log^2 (n))$ approximation factor loss due to having multiple agents is in the right ballpark since for vertex covers there is a factor $2$-approximation for SA submodular minimization,
and a tight $O(\log n)$-approximation for the multi-agent version \cite{goel2009approximability}.
In Section \ref{sec:MAfromSA} we discuss an extension  of this vertex cover result to a larger class of families with a MA gap of $O(\log n)$.



\subsection{The single-agent and multi-agent formulations}
\label{sec:SA-MA-formulations}

Due to monotonicity, one may often assume that we are working with a
family $\F$ which is {\em upwards-closed} (sometimes referred to as {\em blocking families}), i.e. if $F \subseteq F'$ and $F \in \F$, then $F' \in \F$. 
This can be done without loss of generality even if we seek polytime algorithms, since separation over a polytope with vertices $\{\chi^F: F \in \F\}$ implies separation over its dominant. We refer the reader to Appendix~\ref{sec:blocking} for details.

For a set function $f:\{0,1\}^V \to \R$ with $f(\emptyset)=0$ one can define
its {\em Lov\'asz extension} $f^L:\R_+^V \to \R$ (introduced
in \cite{lovasz1983submodular}) as follows.
Let  $0   <  v_1 < v_2 < ... < v_m$  be the distinct
positive values taken in some vector $z \in \R_+^V$, and let $v_0=0$.
For each $i \in \{0,1,...,m\}$ define the set $S_i:=\{ j:  z_j > v_i\}$. In particular, $S_0$ is the support of $z$
and  $S_m=\emptyset$. One then defines (see Appendix \ref{sec:LE-def} for equivalent definitions):
\[
f^L(z) =  \sum_{i=0}^{m-1}   (v_{i+1}-v_i) f(S_i).
\]

It follows from the definition that $f^L$ is positively homogeneous, that is $f^L(\alpha z)=\alpha f^L(z)$ for any $\alpha > 0$ and $z \in \R_+^V$.
Moreover, it is also straightforward to see that $f^L$ is a monotone function if $f$ is.
We have the following result due to Lov\'asz.
\begin{lemma}
	[Lov\'asz \cite{lovasz1983submodular}]
	The function $f^L$ is convex if and only if $f$ is submodular.
\end{lemma}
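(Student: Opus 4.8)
The plan is to prove both directions. For the easier direction ($f$ submodular $\Rightarrow$ $f^L$ convex), I would exhibit $f^L$ as a maximum of linear functions, which is automatically convex. The key fact is that the Lovász extension can be rewritten as $f^L(z) = \max\{ \langle x, z\rangle : x \in P_f \}$, where $P_f = \{ x \in \R^V : x(S) \le f(S) \text{ for all } S \subseteq V\}$ is the submodular polyhedron (more precisely, one works with the base polytope, or the polyhedron $\{x : x(S)\le f(S)\}$ intersected with the positive orthant considerations). The heart of this direction is the greedy/Edmonds characterization: for $z \in \R_+^V$, sorting the coordinates in decreasing order and greedily setting $x_{j} = f(S_{i}) - f(S_{i+1})$ along the chain $S_0 \supsetneq S_1 \supsetneq \cdots$ produces a point of $P_f$ (this membership uses submodularity), and for this particular $x$ one checks by an Abel summation that $\langle x, z \rangle = \sum_i (v_{i+1}-v_i) f(S_i) = f^L(z)$; moreover no other $x \in P_f$ does better, by a second Abel-summation/rearrangement argument using $x(S_i) \le f(S_i)$. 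Hence $f^L$ is a pointwise maximum of linear functionals over $z$, so it is convex.

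For the converse ($f^L$ convex $\Rightarrow$ $f$ submodular), I would argue by contrapositive: suppose $f$ is not submodular, so there exist $A, B \subseteq V$ with $f(A) + f(B) < f(A\cup B) + f(A \cap B)$. I would then test convexity at the two indicator-like points $z_1 = \chi^A$ and $z_2 = \chi^B$. Using the definition of $f^L$, I would compute $f^L(\chi^A) = f(A)$ and $f^L(\chi^B) = f(B)$ (the extension agrees with $f$ on $0/1$ vectors), and then evaluate $f^L$ at the midpoint $\tfrac12(\chi^A + \chi^B)$. This midpoint takes value $1$ on $A \cap B$ and value $\tfrac12$ on $A \triangle B$, so its chain of level sets is $S_0 = A \cup B \supseteq S_1 = A \cap B \supseteq S_2 = \emptyset$ with breakpoints $v_1 = \tfrac12, v_2 = 1$, giving $f^L\big(\tfrac12(\chi^A+\chi^B)\big) = \tfrac12 f(A\cup B) + \tfrac12 f(A\cap B)$. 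Convexity would require this to be $\le \tfrac12 f^L(\chi^A) + \tfrac12 f^L(\chi^B) = \tfrac12 f(A) + \tfrac12 f(B)$, contradicting the choice of $A, B$. Hence $f^L$ convex forces $f$ submodular.

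The main obstacle is the first direction: establishing the linear-maximization formula $f^L(z) = \max\{\langle x, z\rangle : x \in P_f\}$ rigorously, since it requires (i) verifying that the greedily-constructed $x$ actually lies in $P_f$, which is exactly where submodularity enters and needs the inequality $x(T) \le f(T)$ for \emph{every} $T$, not just the chain sets — this is a short but genuine argument chasing submodular inequalities along the chain — and (ii) the optimality of this $x$ among all feasible points, which is an exchange/rearrangement inequality. Everything else (positive homogeneity, agreement with $f$ on $0/1$ vectors, the midpoint computation) is routine bookkeeping with the definition. I would present the first direction via this polyhedral route since it is the cleanest way to get convexity "for free," and relegate the chain-chasing verification of $x \in P_f$ to a displayed inequality or a short lemma; one could alternatively give a direct proof that $f^L$ satisfies the midpoint-convexity inequality $f^L(y) + f^L(z) \ge 2 f^L(\tfrac12(y+z))$ by a careful merge of the two sorted orders of $y$ and $z$, but that bookkeeping is messier and I would avoid it.
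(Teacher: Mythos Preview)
The paper does not give its own proof of this lemma; it is stated as a classical result and attributed to Lov\'asz. Your proposal is a correct rendition of the standard argument and there is nothing in the paper to compare it against.

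One small remark on the forward direction: the identity $f^L(z)=\max\{\langle x,z\rangle : x\in P_f\}$ that you invoke holds as stated for $z\in\R_+^V$ (which is the domain the paper uses), whereas for general $z\in\R^V$ one must pass to the base polytope; your parenthetical already flags this, so just be sure the write-up restricts to the nonnegative orthant. The converse via the midpoint $\tfrac12(\chi^A+\chi^B)$ is exactly the textbook computation and is correct under the paper's standing assumption $f(\emptyset)=0$.
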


This now gives rise to natural convex relaxations for the single-agent and multi-agent problems (see Appendix \ref{sec:relaxations}) based on some upwards closed relaxation $\{z \geq 0:Az\geq r\}$ of the integral polyhedron $conv(\{\chi^S:S\in \F\})$. In particular, let us denote $P(\F):=\{z \geq 0:Az\geq r\}$, and assume $A$ is a matrix with nonnegative integral entries and $r$ is a vector with positive integral components (if $r_i = 0$ then the $ith$ constraint is always satisfied and we can remove it). For simplicity, we also assume that the entries of $A$ are polynomially bounded in $n$. 

The {\em single-agent Lov\'asz extension formulation} (used in \cite{iwata2009submodular,iyer2014monotone}) is:
\begin{equation}
\label{SA-LE}
(\mbox{SA-LE}) ~~~~
\min f^L(z): z \in P(\F),
\end{equation}
and the {\em multi-agent Lov\'asz extension formulation} (used in \cite{chekuri2011submodular} for $\F=\{V\}$) is:
\begin{equation}
\label{MA-LE}
(\mbox{MA-LE}) ~~~~
\min \sum_{i=1}^k f^L_i(z_i): z_1 + z_2 + \cdots + z_k \in P(\F).
\end{equation}

By standard methods (see Appendix \ref{sec:relaxations}) one may solve these problems in polytime
if one can separate over the relaxation $P(\F)$. This is often the case for many
natural families such as spanning trees, perfect matchings, $st$-paths, and vertex covers.

\subsection{Rounding the (MA-LE) formulation for upwards closed families $\F$}
It is shown in \cite{chekuri2011submodular} that in the setting of monotone objectives and $\F = \{V\}$, a fractional solution
of (MA-LE) can be rounded into an integral one at an $O(\log n)$ factor loss.
Moreover, they show this still holds for some special types of nonmonotone objectives.

\begin{theorem}[\cite{chekuri2011submodular}]
	\label{thm:monot-MSCA}
	Let $z_1+z_2+\cdots+z_k$ be a feasible solution for (MA-LE) in the setting where $\F=\{V\}$ (i.e. $\sum_{i\in [k]} z_i = \chi^V$)
	and $f_i = g_i + h$ where the $g_i$ are nonnegative monotone submodular and $h$ is nonnegative symmetric submodular. 
	Then there is a randomized rounding procedure that outputs an integral feasible solution
	$\bar{z}_1+\bar{z}_2+\cdots+\bar{z}_k$ such that $\sum_{i\in [k]} f^L_i (\bar{z}_i) \leq O(\log n) \sum_{i \in [k]} f_i^L(z_i)$ on expectation.
	That is, we get a partition $S_1,S_2,\ldots,S_k$ of $V$ such that $\sum_{i \in [k]}f_i(S_i) \leq O(\log n) \sum_{i \in [k]} f_i^L(z_i)$
	on expectation.
\end{theorem}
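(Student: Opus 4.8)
The plan is to round $z$ with a threshold‑based procedure that is run for $T=\Theta(\log n)$ rounds, using the constraint $\sum_i z_i=\chi^V$ twice: once to guarantee that after $\Theta(\log n)$ rounds every element has been ``covered'', and once --- through the symmetry of $h$ --- to control the cost of turning the agents' tentative sets into an honest partition. Two elementary tools underlie everything. First, the Lov\'asz extension is linear in the underlying set function, so $f_i^L=g_i^L+h^L$; since all the extensions involved are nonnegative, it suffices to bound $\sum_i\E[g_i(S_i)]$ and $\sum_i\E[h(S_i)]$ separately, against $O(\log n)\sum_i g_i^L(z_i)$ and $O(\log n)\sum_i h^L(z_i)$. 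Second, for any $y\in[0,1]^V$ and any $\phi$ with $\phi(\emptyset)=0$, if $\theta$ is drawn uniformly from $(0,1]$ and $A^\theta(y):=\{v:y_v>\theta\}$, then $\E_\theta[\phi(A^\theta(y))]=\phi^L(y)$, directly from the definition. We may also assume (by deleting entries below $n^{-c}$ and rescaling, which perturbs the objective by a $1\pm o(1)$ factor) that every positive $z_i(v)$ is at least $n^{-c}$; in particular at most $\mathrm{poly}(n)$ agents are nontrivial.

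The rounding I would use: for $t=1,\dots,T$ with $T=\Theta(\log n)$, and independently over rounds and agents, draw $\theta_i^{(t)}$ uniform on $(0,1]$, set $A_i^{(t)}=\{v:z_i(v)>\theta_i^{(t)}\}$, and let $\widehat S_i=\bigcup_{t=1}^T A_i^{(t)}$. These sets overlap in general, so assign each $v$ to the least‑index agent $i$ with $v\in\widehat S_i$, obtaining disjoint sets $S_i\subseteq\widehat S_i$; any element in no $\widehat S_i$ is a leftover and is handed to the agent $i$ maximizing $z_i(v)$. The point is that, since the thresholds are independent and $\sum_i z_i(v)=1$,
\[
\Pr\Big[v\notin\textstyle\bigcup_i\widehat S_i\Big]=\prod_i(1-z_i(v))^T\le e^{-T\sum_i z_i(v)}=e^{-T},
\]
so $T=\Theta(\log n)$ makes the probability of any leftover at all at most $n^{-\Omega(1)}$ --- this is where having independent thresholds \emph{per agent} beats a single shared threshold.

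The monotone part is then routine: by monotonicity and subadditivity $g_i(S_i)\le g_i(\widehat S_i)\le\sum_{t=1}^T g_i(A_i^{(t)})$, so $\E[g_i(S_i)]\le T\,g_i^L(z_i)=O(\log n)\,g_i^L(z_i)$; the rare leftover event contributes at most $\Pr[\text{leftover}]\cdot g_i(\{v:z_i(v)\ge 1/k\})\le n^{-\Omega(1)}\cdot k\,g_i^L(z_i)=o(g_i^L(z_i))$, using $g_i^L(z_i)\ge\frac1k g_i(\{v:z_i(v)\ge 1/k\})$ (integrate the level sets over $\theta\in(0,1/k)$) and that a leftover element lands with an agent whose $z_i$‑value is $\ge 1/k$. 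Summing over $i$ handles the $g$‑terms.

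The symmetric part is the heart of the matter, and the step I expect to be the main obstacle, precisely because $h$ is not monotone, so ``$h(S_i)\le h(\widehat S_i)$'' fails. I would charge the cost round by round: each $S_i$ decomposes as a disjoint union $S_i=\biguplus_{t}S_i^{(t)}$ according to the round in which each of its elements was first covered, and subadditivity of $h$ gives $h(S_i)\le\sum_t h(S_i^{(t)})$. Within a fixed round $t$ the sets $S_1^{(t)},\dots,S_k^{(t)}$ are the successive prefix‑differences of the level sets $A_1^{(t)},\dots,A_k^{(t)}$ (restricted to the still‑uncovered elements), so with $R_j^{(t)}=A_1^{(t)}\cup\dots\cup A_j^{(t)}$ the posimodular inequality $h(X)+h(Y)\ge h(X\setminus Y)+h(Y\setminus X)$ --- valid for symmetric submodular $h$ --- applied to $X=R_j^{(t)}$, $Y=R_{j-1}^{(t)}$ gives $h(S_j^{(t)})\le h(R_j^{(t)})+h(R_{j-1}^{(t)})$, hence $\sum_j h(S_j^{(t)})\le 2\sum_{j}h(R_j^{(t)})$. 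The remaining task --- the hard one --- is to show $\E\big[\sum_j h(R_j^{(t)})\big]=O\big(\sum_i h^L(z_i)\big)$ for each round: the crude bound $h(R_j^{(t)})\le\sum_{i\le j}h(A_i^{(t)})$ loses a factor $k$ and only reproduces the $O(k\log n)$ guarantee of the general non‑symmetric case. Here one must use symmetry a second time, through $h(R_j^{(t)})=h\big(V\setminus R_j^{(t)}\big)$ together with the identity $h^L(y)=h^L(\chi^V-y)$ and the fact that $\sum_i z_i=\chi^V$ (so that the complement of a prefix of level sets is dominated by the level sets of the complementary fractional mass $\sum_{i>j}z_i$), bounding the prefix‑cuts and suffix‑cuts simultaneously. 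Granting this per‑round estimate, summing over the $T=\Theta(\log n)$ rounds and adding the negligible leftover contribution (bounded exactly as in the monotone case) yields $\sum_i\E[h(S_i)]=O(\log n)\sum_i h^L(z_i)$, and combining with the $g$‑bound finishes the proof.
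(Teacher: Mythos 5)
Your setup (independent per-agent thresholds for $\Theta(\log n)$ rounds, coverage w.h.p.\ since $\sum_i z_i=\chi^V$, and the subadditivity/monotonicity bound $\E[g_i(S_i)]\le T\,g_i^L(z_i)$ for the monotone part) is sound and matches the structure of the paper's argument (Theorem~\ref{thm:sym-MSCA}, Algorithm CE-Rounding), which samples one random pair $(i,\theta)$ per iteration for $O(k\log n)$ iterations --- an equivalent first phase. The problem is the symmetric part, exactly where you flag the difficulty: the route you propose there does not work, and the estimate you defer is false. After reducing to $\sum_j h(S_j^{(t)})\le 2\sum_j h(R_j^{(t)})$ via posimodularity on the nested prefix unions, you would need $\E\bigl[\sum_j h(R_j^{(t)})\bigr]=O\bigl(\sum_i h^L(z_i)\bigr)$. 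Take $h$ the cut function of the complete bipartite graph on $V=L\uplus R$, $|L|=|R|=n/2$, with $k=n$ agents and $z_i=\chi^{\{v_i\}}$. Then $\sum_i h^L(z_i)=\Theta(n^2)$, but the prefix unions satisfy $h(R_j)=\Theta(j(n-j))$ under an interleaved ordering, so $\sum_j h(R_j)=\Theta(n^3)$. The final assignment is fine here ($\sum_j h(S_j)=\Theta(n^2)$); it is the intermediate quantity $\sum_j h(R_j)$ that is a factor $n$ too large, so no amount of ``using symmetry a second time'' on the complements can rescue this chain of inequalities.

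The missing idea is that the overlaps must be resolved \emph{adaptively}, not by a fixed least-index rule. This is Lemma~3.1 of Chekuri--Ene, which the paper invokes: whenever two tentative sets satisfy $S_i\cap S_j\neq\emptyset$, posimodularity applied to the pair $(S_i,S_j)$ themselves --- not to nested prefix unions --- gives $h(S_i)+h(S_j)\ge h(S_i\setminus S_j)+h(S_j\setminus S_i)$, so at least one of the two replacements ($S_j\leftarrow S_j\setminus S_i$ or $S_i\leftarrow S_i\setminus S_j$) does not increase $h(S_i)+h(S_j)$; choosing that one each time and iterating yields disjoint sets with $\sum_i h(S'_i)\le\sum_i h(S_i)$, i.e.\ the uncrossing costs nothing on the $h$-terms (and nothing on the $g_i$-terms by monotonicity). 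The entire $O(\log n)$ loss is then charged to the covering phase alone, via $\E_\theta[h(A_i^\theta)]=h^L(z_i)$ and subadditivity, exactly as in your $g$-analysis. With your fixed prefix rule this choice is forfeited, and I do not see how to complete the symmetric bound along the lines you sketch.
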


Our next result shows that the above rounding procedure can be adapted in a straightforward way to the setting where we have a general upwards closed family $\F$.
We omit the proof to Appendix \ref{sec:nonmonotone}.
\begin{theorem}
	\label{thm:sym-MSCA}
	Consider an instance of (MA-LE) where $\F$ is an upwards closed family and $f_i = g_i + h$ where the $g_i$ are nonnegative monotone submodular and $h$ is nonnegative symmetric submodular. Let $z_1+z_2+\cdots+z_k$ be a feasible solution such that $\sum_{i \in [k]} z_i \geq \chi^U$ for some $U \in \F$.
	Then there is a randomized rounding procedure that outputs an integral feasible solution
	$\bar{z}_1+\bar{z}_2+\cdots+\bar{z}_k$ such that $\sum_{i \in [k]} \bar{z}_i \geq \chi^U$ and $\sum_{i\in [k]} f^L_i (\bar{z}_i) \leq O(\log |U|) \sum_{i \in [k]} f_i^L(z_i)$ on expectation. That is, we get a subpartition $S_1,S_2,\ldots,S_k$ such that $\biguplus_{i \in [k]} S_i \supseteq U \in \F$ and $\sum_{i \in [k]}f_i(S_i) \leq O(\log |U|) \sum_{i \in [k]} f_i^L(z_i)$
	on expectation.
\end{theorem}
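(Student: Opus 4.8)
The plan is to reduce to the rounding procedure behind Theorem~\ref{thm:monot-MSCA} by exploiting that $\F$ is upwards closed: the only thing the output must achieve is to \emph{cover} the set $U$, since any subpartition $S_1,\ldots,S_k$ with $\biguplus_{i\in[k]} S_i \supseteq U$ automatically has $\biguplus_{i\in[k]} S_i \in \F$. Thus we may treat $U$ as the effective ground set. Concretely, I would restrict the instance to $U$ (keeping the fractional values $z_i$ on the coordinates of $U$), and run the randomized rounding of \cite{chekuri2011submodular} on this $|U|$-element instance; it returns a partition $S_1,\ldots,S_k$ of $U$, which, viewed as a subpartition of $V$, satisfies $\biguplus_{i\in[k]} S_i = U \in \F$ and is therefore feasible, with $\bar z_i := \chi^{S_i}$.

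For the approximation factor, the point is that the analysis in \cite{chekuri2011submodular} bounds the rounding loss by a logarithm of the \emph{number of elements being allocated}; since here only the elements of $U$ are being allocated, this gives $O(\log|U|)$ in place of $O(\log n)$. Two small discrepancies with the hypotheses of Theorem~\ref{thm:monot-MSCA} must be cleared. First, the input only guarantees $\sum_{i\in[k]} z_i \ge \chi^U$ rather than equality; one removes the ``excess'' so that $\sum_{i\in[k]} z_i = \chi^U$ on $U$ — for the monotone parts $g_i^L$ this only decreases the objective, and one observes that the \cite{chekuri2011submodular} rounding in any case only needs every allocated element to be (at least) fully fractionally covered. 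Second, after restriction to $U$ the functions $g_i|_U$ remain nonnegative monotone submodular and $h|_U$ remains nonnegative submodular, so the rounding is well defined. Assembling these yields $\E\big[\sum_{i\in[k]} f_i(S_i)\big] \le O(\log|U|)\sum_{i\in[k]} f_i^L(z_i)$, as claimed.

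The step requiring the most care — the only place where this is not a black-box invocation of Theorem~\ref{thm:monot-MSCA} — is that the restriction $h|_U$ need not be \emph{symmetric} with respect to $U$ (symmetry of $h$ was relative to $V$), whereas the analysis of \cite{chekuri2011submodular} uses symmetry of the shared function. The plan is to verify that their argument uses symmetry only through a consequence that survives restriction (e.g. a bound of the shape $\sum_i h(S_i) = O(\log|U|)\sum_i h^L(z_i)$ for the partition produced, re-derivable on $U$ from $\sum_i z_i = \chi^U$ and submodularity of $h|_U$ alone). An alternative that sidesteps touching their proof is to keep the $S_i$ as subsets of $V$ and pre-assign $V\setminus U$ to a single agent, so that $h$ is never restricted and only the $|U|$ elements of $U$ are randomly routed (hence the $\log|U|$ bound applies), then absorb the pre-assigned block at a constant-factor cost using submodularity of $h$ together with its symmetry on $V$ (via $h(V\setminus U)=h(U)$). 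Either way the remaining verifications are routine and run parallel to \cite{chekuri2011submodular}; this is why the full argument is deferred to Appendix~\ref{sec:nonmonotone}.
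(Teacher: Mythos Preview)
Your overall intuition is right, but the route you take---restricting the instance to $U$ and then trying to invoke Theorem~\ref{thm:monot-MSCA} as a black box---creates the symmetry complication you yourself flag, and neither of your proposed fixes is complete. In your option~2, pre-assigning $V\setminus U$ to a single agent forces that agent to pay $g_1(\cdot)$ on a set containing $V\setminus U$, and there is nothing in the fractional solution that bounds $g_1(V\setminus U)$; the ``absorb via $h(V\setminus U)=h(U)$'' idea only addresses the $h$ part. Your option~1 is just a hope that the \cite{chekuri2011submodular} argument survives restriction; you would need to actually verify that the uncrossing lemma (which relies on posimodularity, hence on symmetry of $h$ on its full domain) still does the job for sets lying inside $U$.

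The paper sidesteps all of this by \emph{not} restricting to $U$ at all. It runs the Chekuri--Ene threshold-rounding loop on the original vectors $z_i$ over the full ground set $V$, but replaces the stopping condition ``$S=V$'' by ``$S\in\F$''. Since $\sum_i z_i(v)\ge 1$ for every $v\in U$, the coupon-collector analysis shows every element of $U$ is covered within $O(k\log|U|)$ iterations with high probability; because $\F$ is upwards closed, once $U\subseteq S$ the loop halts. The per-iteration expected cost is exactly $\tfrac{1}{k}\sum_i f_i^L(z_i)$ (level sets are taken in $V$, so this is literally the Lov\'asz extension of the original $f_i$ at the original $z_i$, with no truncation and no issue about $h^L$ being nonmonotone). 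Subadditivity then gives $\E\big[\sum_i f_i(S_i)\big]\le O(\log|U|)\sum_i f_i^L(z_i)$. Finally, the uncrossing step is applied to subsets of $V$ using $h$ on $V$---which \emph{is} symmetric---so Lemma~3.1 of \cite{chekuri2011submodular} applies verbatim, and monotonicity of the $g_i$ handles the rest.

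In short: rather than shrinking the ground set (which breaks symmetry of $h$), the paper keeps the ground set and shrinks the stopping criterion. That single change is what turns $\log n$ into $\log|U|$ while keeping every other line of the \cite{chekuri2011submodular} proof intact.
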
 

\subsection{A multi-agent gap of $O(\min\{k, \log^2 (n)\})$}
\label{sec:generic-min-approx}

In this section we present the proof of Theorem \ref{thm:klog}.
The main idea behind our reductions is the following. We start with an optimal solution
$z^* = z_1^* + z_2^* + \cdots + z^*_k$ to the multi-agent relaxation (MA-LE) and
build a new feasible solution $\hat{z} = \hat{z}_1 + \hat{z}_2 + \cdots + \hat{z}_k$
where the $\hat{z}_i$ have supports $V_i$ that are pairwise disjoint.
We interpret the $V_i$ as the set of items associated (or pre-assigned) to agent $i$. 
Once we have such a pre-assignment we consider the single-agent problem $\min g(S): S \in \F$ where
\begin{equation}
\label{g-function}
g(S)=\sum_{i=1}^k f_i(S \cap V_i).
\end{equation}
It is clear that $g$ is nonnegative
monotone submodular since the $f_i$ are as well. 
Moreover, for any solution $S \in \F$ for this single-agent problem we obtain
a MA solution of the same cost by setting
$S_i = S \cap V_i$, since  we then have
$g(S) = \sum_{i\in [k]} f_i (S \cap V_i) = \sum_{i \in [k]} f_i(S_i)$.

For a set $S \subseteq V$ and a vector $z \in [0,1]^V $ we denote by $z|_S$ the truncation of $z$ to elements of $S$. 
That is, we set $z|_S (v) = z(v)$ for each $v \in S$ and to zero otherwise. Then notice that by definition of $g$ 
we have that $g^L(z) = \sum_{i \in [k]} f^L_i(z|_{V_i})$. Moreover, if we also have that the $V_i$ are pairwise
disjoint, then $\sum_{i \in [k]} f^L_i(z|_{V_i}) = \sum_{i \in [k]} f^L_i(z_i)$. 
We formalize this observation in the following result.

\begin{proposition}
	\label{prop:g-function}
	Let $z = z_1 + z_2 + \cdots + z_k$ be a feasible solution to (MA-LE) where the
	vectors $z_i$ have pairwise disjoint supports $V_i$. Then
	$g^L(z) = \sum_{i \in [k]} f^L_i(z|_{V_i}) = \sum_{i \in [k]} f^L_i(z_i).$
\end{proposition}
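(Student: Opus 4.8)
The plan is to unwind the definition of the Lov\'asz extension term by term. Recall that for any set function $f$ with $f(\emptyset)=0$ and any $z \in \R_+^V$, writing $0 < v_1 < \cdots < v_m$ for the distinct positive values of $z$ and $v_0 = 0$, one has $f^L(z) = \sum_{i=0}^{m-1}(v_{i+1}-v_i) f(\{j : z_j > v_i\})$. The first step is to establish the identity $g^L(z) = \sum_{i\in[k]} f_i^L(z|_{V_i})$. Fix the threshold sets $S_t = \{j : z_j > v_t\}$ appearing in the expansion of $g^L(z)$. By the definition $g(S) = \sum_{i\in[k]} f_i(S \cap V_i)$ we get $g(S_t) = \sum_{i\in[k]} f_i(S_t \cap V_i)$, so linearity of the sum $\sum_{t}(v_{t+1}-v_t)(\cdot)$ over the index $i$ immediately yields $g^L(z) = \sum_{i\in[k]} \sum_{t=0}^{m-1}(v_{t+1}-v_t) f_i(S_t \cap V_i)$. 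It then remains to recognize the inner sum as $f_i^L(z|_{V_i})$: since $S_t \cap V_i = \{j \in V_i : z_j > v_t\} = \{j : (z|_{V_i})_j > v_t\}$ (using $v_t \geq 0$ and that $z|_{V_i}$ agrees with $z$ on $V_i$ and is $0$ off $V_i$), the sum $\sum_{t}(v_{t+1}-v_t) f_i(S_t\cap V_i)$ is exactly the Lov\'asz-extension expansion of $f_i$ evaluated at $z|_{V_i}$, except that it is taken over the value-grid of $z$ rather than of $z|_{V_i}$. This is harmless: passing to a common refinement of thresholds does not change the value of a Lov\'asz extension, because consecutive equal threshold sets contribute telescoping increments $(v_{t+1}-v_t)$ that sum correctly, and thresholds $v_t$ that do not occur as positive values of $z|_{V_i}$ simply repeat an adjacent set. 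I would state this refinement-invariance as a one-line remark (or cite the equivalent definitions in Appendix \ref{sec:LE-def}) rather than reprove it.

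The second step is the identity $\sum_{i\in[k]} f_i^L(z|_{V_i}) = \sum_{i\in[k]} f_i^L(z_i)$, which is where the disjoint-support hypothesis is used. Because $z = \sum_{i\in[k]} z_i$ and the supports $V_i = \operatorname{supp}(z_i)$ are pairwise disjoint, for each coordinate $v$ exactly one $z_i$ is nonzero at $v$, so on $V_i$ the vector $z$ coincides with $z_i$; hence $z|_{V_i} = z_i$ as vectors in $\R_+^V$. Substituting this into the previous display gives the claim.

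Chaining the two identities gives $g^L(z) = \sum_{i\in[k]} f_i^L(z|_{V_i}) = \sum_{i\in[k]} f_i^L(z_i)$, as required. I do not anticipate a genuine obstacle here; the only point needing a little care is the bookkeeping in the first step, namely that the Lov\'asz extension of $f_i$ at $z|_{V_i}$ may be computed on any threshold grid that refines the distinct positive values of $z|_{V_i}$ and in particular on the grid of distinct positive values of $z$. Once that is granted, the proposition is a direct consequence of linearity and of $z|_{V_i} = z_i$.
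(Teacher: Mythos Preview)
Your proposal is correct and follows exactly the reasoning the paper has in mind; in fact the paper treats this proposition as an observation and gives no formal proof beyond the sentence ``by definition of $g$ we have that $g^L(z) = \sum_{i \in [k]} f^L_i(z|_{V_i})$, and if the $V_i$ are pairwise disjoint then $\sum_{i \in [k]} f^L_i(z|_{V_i}) = \sum_{i \in [k]} f^L_i(z_i)$.'' Your write-up supplies precisely the missing bookkeeping (the threshold-refinement remark for the first equality and the identity $z|_{V_i}=z_i$ from disjoint supports for the second), so there is nothing to change.
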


The next two results show how one can get a feasible solution $\hat{z} = \hat{z}_1 + \hat{z}_2 + \cdots + \hat{z}_k$ where the
$\hat{z}_i$ have pairwise disjoint supports, by losing a factor of $O( \log^2 (n) )$ and $k$ respectively.
We remark that these two results combined prove Theorem \ref{thm:klog}.

\begin{theorem}
	\label{thm:min-log}
	Suppose there is a (polytime) $\alpha(n)$-approximation for monotone SO($\F$) minimization
	based on rounding (SA-LE). Then there is a (polytime) $O(\alpha(n) \log (n) \log (\frac{n}{\log n}) )$-approximation
	for monotone MASO($\F$) minimization.
\end{theorem}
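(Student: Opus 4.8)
The plan is to take an optimal fractional solution $z^* = z_1^* + \cdots + z_k^*$ to (MA-LE) and convert it, with an $O(\log(n)\log(n/\log n))$ blow-up, into a feasible fractional solution $\hat z = \hat z_1 + \cdots + \hat z_k$ whose supports $V_i$ are pairwise disjoint. Once we have that, Proposition~\ref{prop:g-function} tells us that $g^L(\hat z) = \sum_i f_i^L(\hat z_i)$, so $\hat z$ is a feasible point of (SA-LE) for the single-agent problem $\min g(S): S \in \F$ of cost $O(\log(n)\log(n/\log n))$ times the (MA-LE) optimum; applying the assumed $\alpha(n)$-approximation for monotone SO($\F$) minimization that rounds (SA-LE) yields a set $S \in \F$ with $g(S) \le \alpha(n) \cdot O(\log(n)\log(n/\log n)) \cdot \mathrm{OPT}$, and then setting $S_i = S \cap V_i$ gives a multi-agent solution of cost $g(S)$ by the displayed identity following~\eqref{g-function}. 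Since the (MA-LE) optimum lower bounds the integral MASO optimum, this is the claimed approximation.

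The heart of the argument is therefore the "disjointification" step: given arbitrary fractional $z_i^*$ (whose supports may overlap heavily), produce $\hat z_i$ with disjoint supports such that $\hat z = \sum_i \hat z_i$ still lies in $P(\F)$ and $\sum_i f_i^L(\hat z_i) \le O(\log(n)\log(n/\log n)) \sum_i f_i^L(z_i^*)$. I would do this by a two-stage bucketing / randomized-assignment scheme. First, since $f_i^L$ is positively homogeneous, I would discretize the coordinate values of each $z_i^*$ into $O(\log n)$ geometric scales (powers of two between roughly $1/\mathrm{poly}(n)$ and $1$, using that the entries of $A$ and components of $r$ are polynomially bounded so that tiny coordinates contribute negligibly and can be zeroed out after scaling up the whole solution by a constant). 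Within a fixed scale, each coordinate $v$ is "claimed" by the set of agents $i$ for which $z_i^*(v)$ falls in that scale; I would then resolve each element $v$ to a single agent, chosen so as to control the cost, while inflating the surviving agent's coordinate by the number of competing agents to preserve feasibility in $P(\F)$ (here the nonnegativity and monotonicity of $P(\F)$'s constraints — it is an upwards-closed relaxation — is what lets inflation only help feasibility). The factor-$\log n$ from summing over scales, together with a second $\log(n/\log n)$-type factor coming from a balls-in-bins / maximum-load argument that bounds how much any one agent's coordinate must be inflated, gives the stated loss; this is essentially the same mechanism underlying the $O(\log n)$ rounding of Theorem~\ref{thm:monot-MSCA} applied once per scale.

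The main obstacle I anticipate is making the disjointification respect the constraint $\sum_i \hat z_i \in P(\F)$ while simultaneously bounding $\sum_i f_i^L(\hat z_i)$: these pull in opposite directions, because keeping cost low wants each element assigned to whichever agent values it cheaply, whereas feasibility wants the mass $z^*(v) = \sum_i z_i^*(v)$ to stay on element $v$. The resolution is to over-provision — assign $v$ to one agent but give that agent coordinate value (number of claimants at this scale) $\times$ (scale value), which dominates $\sum_i z_i^*(v)$ restricted to the scale — and then bound the expected cost of this inflation by a Chernoff/union bound over the $n$ elements showing that with high probability no agent's load at a given scale exceeds $O(\log n)$ times its average; positive homogeneity of $f_i^L$ converts the load bound directly into a cost bound. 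A secondary subtlety is handling the truncated tiny coordinates so that the resulting $\hat z$, after the constant rescaling, still separates correctly and stays polynomially sized, which is routine given the polynomial bound on the entries of $A$. I would present the scale decomposition first, then the per-scale randomized assignment with its load analysis, then assemble the pieces and invoke Proposition~\ref{prop:g-function} and the assumed single-agent oracle.
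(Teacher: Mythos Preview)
Your overall plan matches the paper exactly: reduce to a feasible $\hat z=\sum_i \hat z_i$ with pairwise-disjoint supports, then invoke Proposition~\ref{prop:g-function} and the assumed single-agent rounding for $g$. You are also right that the engine is geometric bucketing combined with the $O(\log n)$ MSCA rounding of Theorem~\ref{thm:monot-MSCA} applied once per bucket. But two details of your disjointification sketch are genuinely off, and as stated the argument would not go through.

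\textbf{Bucket on the aggregate, not per agent.} The paper buckets by the \emph{total} value $z^*(v)=\sum_i z_i^*(v)$. After zeroing out tiny coordinates and rounding to powers of two (a constant-factor loss), every element lies in exactly one bin $Z_j=\{v: z''(v)=2^{-j}\}$, $j=0,\dots,L$ with $L=O(\log n)$. Scaling the restricted vectors by $r_j=2^j$ yields $\sum_i r_j\, z''_i|_{Z_j}=\chi^{Z_j}$, so Theorem~\ref{thm:monot-MSCA} applies verbatim on ground set $Z_j$ with $\F=\{Z_j\}$ and outputs an integral \emph{partition} of $Z_j$. Scaling each part back by $1/r_j$ and summing over $j$ gives $\hat z_i$ with disjoint supports and $\hat z=z''$ \emph{exactly}; feasibility is immediate, with no inflation step and no appeal to upwards-closedness of $P(\F)$. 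Your per-agent bucketing lets a single element $v$ sit in several scales (one per agent touching it), so a ``per-scale'' assignment may send $v$ to different agents in different scales and the supports are no longer disjoint; and your ``inflate by the number of competitors at this scale'' does not recover $\sum_i z_i^*(v)$, so even feasibility is unclear.

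\textbf{The second log factor is AM--GM, not max-load.} The $O(\log n \cdot \log(n/\log n))$ loss is not a Chernoff/balls-in-bins bound on coordinate inflation. Each bucket $Z_j$ contributes a factor $O(\log|Z_j|)$ from Theorem~\ref{thm:monot-MSCA}; after using convexity/homogeneity of $f_i^L$ and monotonicity to replace $f_i^L(z''_i|_{Z_j})$ by $f_i^L(z''_i)$, the total loss is $\sum_{j=0}^{L}\log|Z_j|$. Since $\sum_j |Z_j|\le n$, AM--GM gives $\sum_j \log|Z_j|\le (L+1)\log\!\big(\tfrac{n}{L+1}\big)=O\!\big(\log n\cdot \log(\tfrac{n}{\log n})\big)$. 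So both log factors arise together from this one inequality; there is no separate ``load'' argument.
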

\begin{proof}

Let $z^* = z_1^* + z_2^* + \cdots + z^*_k$ denote an optimal solution to (MA-LE) with value $OPT_{frac}$.
In order to apply a black box single-agent rounding algorithm we must create a different multi-agent
solution. This is done in several steps, the first few of which are standard. The key steps
are the {\em fracture, expand and return} steps which arise later in the process.

Let $a_{max}$ denote the largest entry of the matrix $A$. 
Call an element $v$ {\em small} if $z^*(v) \leq \frac{1}{2 n \cdot a_{max}}$.
Then note that the total contribution of small elements in any given constraint is at most a half, i.e. for any row $a_i$ of the matrix $A$ we have $a_i \cdot z|_{small} \leq \frac{1}{2}$.
We obtain a new feasible solution $z' = z'_1 + z'_2 + \cdots + z'_k$ by removing all small elements from the support of the $z^*_i$ and then doubling the resulting vectors. Notice that this is indeed feasible since $A z' \geq 2(r-\frac{1}{2} \cdot \mathbf{1}) = 2r - \mathbf{1} \geq r$, where $\mathbf{1}$ denotes the vector of all ones.
Moreover, by monotonicity and homogeneity of the $f^L_i$, this at most doubles the cost of $OPT_{frac}$.

We now prune the solution $z' = z'_1 + z'_2 + \cdots + z'_k$ a bit more.
Let $Z_j$ be the elements $v$ such that   $z'(v) \in (2^{-(j+1)},2^{-j}]$ for $j=0,1,2,\ldots,L$. Since $z'(v) > \frac{1}{2n\cdot a_{max}}$ for any element
in the support, and we assume that $a_{max}$ is polynomially bounded in $n$, we have that  $L = O(\log n)$. 
We  call $Z_j$ {\em bin $j$} and define $r_j = 2^j$.
We round up each $v \in Z_j$ so that  $z'(v)=2^{-j}$ by augmenting the $z'_i$ values by at most a factor of $2$.
We may do this simultaneously for all $v$ by possibly ``truncating'' the values associated to some of the elements. As before,
this is fine since the $f^L_i$ are monotone.
In the end, we call this a  {\em uniform solution} $z'' = z''_1 + z''_2 + \cdots + z''_k$ in the sense that each $z''(v)$ is some power of $2$. Note that its cost is at most $4 \cdot OPT_{frac}$.

{\sc Fracture.} We now {\em fracture} the vectors $z''_i$ by defining vectors $z''_{i,j} = z''_i |_{Z_j}$ for each $i \in [k]$ and each $j \in \{0,1,\ldots,L\}$, 
where recall that the notation $z|_S$ denotes the truncation of $z$ to elements of $S$. Notice that $z''_i = \sum_{j=0}^L z''_{i,j}$.

{\sc Expand.}  Now for each $j \in \{0,1,\ldots,L\}$ we blow up the vectors $z''_{i,j}$ by a factor $r_j$.  (Don't worry, this scaling is temporary.) Since $z''(v) = \frac{1}{r_j}$ for each $v \in Z_j$, this means that the resulting
values yield a (probably fractional)
cover of $Z_j$. We can then use the rounding procedure discussed in Theorem \ref{thm:monot-MSCA} (with ground set $Z_j$ and taking $h\equiv 0$) to get an integral solution $z'''_{i,j}$ such that $\sum_i f^L_i (z'''_{i,j}) \leq O(\log |Z_j|) \sum_i f^L_i (r_j \cdot z''_{i,j})$ on expectation.

{\sc Return.}  Now we go back to get a new MA-LE solution $\hat{z} = \hat{z}_1 + \hat{z}_2 + \cdots + \hat{z}_k$ by setting $\hat{z}_i = \sum_{j=0}^L \frac{1}{r_j} z'''_{i,j}$. Note that $\hat{z}=z''$ and so this is indeed feasible  (and again uniform). Moreover, we have that the cost of this new solution satisfies
\begin{multline*}
	 \sum_{i=1}^k f^L_i (\hat{z}_i)  =   \sum_{i=1}^k f^L_i ( \sum_{j=0}^L \frac{1}{r_j} z'''_{i,j} ) 
	\leq \sum_{i=1}^k \sum_{j=0}^L \frac{1}{r_j} f^L_i ( z'''_{i,j} ) 
	= \sum_{j=0}^L \frac{1}{r_j} \sum_{i=1}^k f^L_i ( z'''_{i,j} ) \\
	 \leq O( \sum_{j=0}^L \sum_{i=1}^k \log (|Z_j|) f^L_i (z''_{i,j}) )
	\leq O( \sum_{j=0}^L \log (|Z_j|) \sum_{i=1}^k f^L_i (z''_i) )
	 \leq   O( L \cdot \log (\frac{n}{L})) \cdot OPT_{MA} ,
\end{multline*}
where in the first inequality we use the convexity and homogeneity of the $f^L_i$, in the second inequality we use again the homogeneity together with the upper bound for $\sum_i f^L_i (z'''_{i,j})$, in the third inequality we use monotonicity and the fact that $z''_{i,j} \leq z''_i$ for all $j$, and in the last one we use that $\sum_{i=1}^k f^L_i (z''_i) \leq 4 \cdot OPT_{frac} \leq 4 \cdot OPT_{MA}$ and
\begin{equation*}
\sum_{j=0}^L \log |Z_j|  =  \log (\prod_{j=0}^L |Z_j|) \leq  \log (\frac{\sum_{j=0}^L |Z_j|}{L+1})^{L+1} 
=  (L+1) \cdot \log (\frac{n}{L+1}) = O( L \cdot \log (\frac{n}{L})),
\end{equation*}
where the inequality follows from the AM-GM inequality.

{\sc Single-Agent Rounding.}
In the last step we use the function $g$ defined in (\ref{g-function}), with sets $V_i$ corresponding to the support of the $\hat{z}_i$.
Given our $\alpha$-approximation rounding assumption for (SA-LE), we can round $\hat{z}$ 
to find a set $\hat{S}$ such that $g(\hat{S})\leq \alpha g^L(\hat{z})$.
Then, by setting $\hat{S}_i = \hat{S} \cap V_i$ we obtain a MA solution satisfying:
\begin{equation*}
\sum_{i=1}^k f_i(\hat{S}_i) = g(\hat{S}) \leq \alpha g^L(\hat{z}) = \alpha \sum_{i=1}^k f^L_i (\hat{z}_i) \leq \alpha \cdot O( L \cdot \log (\frac{n}{L})) \cdot OPT_{MA},
\end{equation*}
where the second equality follows from Proposition \ref{prop:g-function}.
Since $L = O( \log n )$, this completes the proof.
\qed \end{proof}

We now give an approximation in terms of the number of agents, which becomes preferable when $k < \log^2 (n)$.

\begin{lemma}
	\label{lem:min-k}
	Suppose there is a (polytime) $\alpha(n)$-approximation for monotone SO($\F$) minimization 
	based on rounding (SA-LE). Then there is a (polytime) $k \alpha(n)$-approximation
	for monotone MASO($\F$) minimization.
\end{lemma}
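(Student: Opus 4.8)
The plan is to mimic the structure of the proof of Theorem~\ref{thm:min-log}, but replace the logarithmic ``fracture/expand/return'' machinery with a crude $k$-fold splitting that produces disjoint supports directly. First I would take an optimal solution $z^* = z_1^* + z_2^* + \cdots + z_k^*$ to (MA-LE), with value $OPT_{frac} \leq OPT_{MA}$. The vectors $z_i^*$ need not have disjoint supports, so I cannot yet directly invoke the single-agent rounding via the function $g$ from (\ref{g-function}). The fix is to multiply each $z_i^*$ by $k$: the vector $\bar{z}_i := k \cdot z_i^*$ satisfies $A \bar z_i = k A z_i^* \geq$ \ldots well, more carefully, I want $\bar z := \bar z_1 + \cdots + \bar z_k$ but with disjoint supports, so the real construction is: for each element $v$, assign $v$ to the agent $i(v) = \argmax_i z_i^*(v)$ (breaking ties arbitrarily), and set $\hat z_{i}(v) = k \cdot z_i^*(v)$ if $i = i(v)$ and $0$ otherwise. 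Then the supports $V_i = \mathrm{supp}(\hat z_i)$ are pairwise disjoint by construction.

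Next I would check feasibility of $\hat z = \sum_i \hat z_i$ for $P(\F)$. For each element $v$ we have $\hat z(v) = k \cdot z_{i(v)}^*(v) \geq k \cdot \frac{1}{k}\sum_i z_i^*(v) = z^*(v)$, using that the maximum is at least the average. Since $z^* \in P(\F) = \{z \geq 0 : Az \geq r\}$ and $A$ has nonnegative entries, monotonicity of the constraint system gives $A\hat z \geq A z^* \geq r$, so $\hat z$ is feasible. Now I bound the cost: by Proposition~\ref{prop:g-function} (applicable since the $V_i$ are pairwise disjoint), $g^L(\hat z) = \sum_i f_i^L(\hat z_i)$. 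Since $\hat z_i \leq k z_i^*$ pointwise, monotonicity and positive homogeneity of $f_i^L$ give $f_i^L(\hat z_i) \leq f_i^L(k z_i^*) = k f_i^L(z_i^*)$, hence $g^L(\hat z) \leq k \sum_i f_i^L(z_i^*) = k \cdot OPT_{frac} \leq k \cdot OPT_{MA}$.

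Finally, I would invoke the assumed $\alpha(n)$-approximation rounding for (SA-LE): applied to $\hat z$ with the function $g$, it produces a set $\hat S \in \F$ with $g(\hat S) \leq \alpha(n) \cdot g^L(\hat z)$. Setting $\hat S_i = \hat S \cap V_i$ yields a feasible multi-agent solution with $\sum_i f_i(\hat S_i) = g(\hat S) \leq \alpha(n) \cdot g^L(\hat z) \leq k\,\alpha(n) \cdot OPT_{MA}$, as claimed. I expect no serious obstacle here; the only thing requiring a moment's care is the argument that taking $i(v) = \argmax_i z_i^*(v)$ and scaling by $k$ simultaneously (i) restores feasibility (via max $\geq$ average) and (ii) only costs a factor $k$ in the objective (via homogeneity and the fact that $\hat z_i$ is dominated by $k z_i^*$). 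Everything else — nonnegativity, monotonicity, submodularity of $g$, and the polytime claim — is inherited directly from the hypotheses and from the already-established Proposition~\ref{prop:g-function}.
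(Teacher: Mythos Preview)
Your proposal is correct and matches the paper's proof essentially step for step: assign each $v$ to its argmax agent, scale by $k$, use max $\geq$ average for feasibility, and then invoke Proposition~\ref{prop:g-function} together with monotonicity and homogeneity of the $f_i^L$ before the single-agent rounding. The only difference is that you spell out the ``max $\geq$ average'' feasibility check explicitly, whereas the paper simply asserts $\hat z \geq z^*$ by construction.
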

\begin{proof}

	Let $z^* = z_1^* + z_2^* + \cdots + z^*_k$ denote an optimal solution to (MA-LE) with value $OPT_{frac}$.
	We build a new feasible solution $\hat{z} = \hat{z}_1 + \hat{z}_2 + \cdots + \hat{z}_k$ as follows.
	For each element $v \in V$ let $i' = \argmax_{i \in [k]} z^*_i(v)$, breaking ties arbitrarily. 
	Then set $\hat{z}_{i'}(v)=k z^*_i(v)$ and $\hat{z}_{i}(v)=0$ for each $i\neq i'$.
	By construction we have $\hat{z} \geq z^*$, and hence this is indeed a feasible solution. Moreover, by construction we also
	have that $\hat{z}_i \leq k z_i^*$ for each $i \in [k]$. Hence, given the monotonicity and homogeneity of the $f^L_i$ we have	
	\begin{equation*}
		\sum_{i=1}^k f^L_i(\hat{z}_i) \leq \sum_{i=1}^k f^L_i(k z^*_i) = k \sum_{i=1}^k f^L_i(z^*_i) = k \cdot OPT_{frac} \leq k \cdot OPT_{MA}.
	\end{equation*}
	Since the $\hat{z}_i$ have disjoint supports $V_i$, we can now use the function $g$ defined in (\ref{g-function})
	and do a single-rounding argument as in Theorem \ref{thm:min-log}. This completes the proof.
\qed	
\end{proof}

The above lemma has interesting consequences
in the case where $\F=\{V\}$. This is
the submodular facility location
problem considered by Svitkina and Tardos in \cite{svitkina2010facility}. They give
an $O(\log n)$-approximation where $n$   denotes the number of customers/clients/demands.
Lemma \ref{lem:min-k} implies we also have a $k$-approximation
which is  preferable in
facility location problems  where the number of customers swamps the number
of facility locations  (for instance, for Amazon).

\begin{corollary}
	\label{Cor facility-location}
	There is a polytime $k$-approximation for submodular facility location, where
	$k$ denotes the number of facilities.
\end{corollary}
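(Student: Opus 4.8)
The plan is to obtain this as an immediate consequence of Lemma~\ref{lem:min-k}, so the only real work is to check that the single-agent primitive SO($\{V\}$) satisfies the hypothesis of that lemma with approximation factor $\alpha(n)=1$.

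First I would observe that when $\F=\{V\}$ the single-agent problem $\min f(S):S\in\F$ has exactly one feasible point, namely $S=V$, and hence is solved exactly in polynomial time. Next I would argue that this trivial algorithm is in fact an (exact) rounding of the (SA-LE) relaxation, which is what Lemma~\ref{lem:min-k} formally requires. Here $conv(\{\chi^S:S\in\F\})=\{\chi^V\}$, so the natural upwards-closed relaxation is $P(\F)=\{z\ge 0: z\ge \mathbf{1}\}$. By monotonicity of $f^L$ the minimum of $f^L(z)$ over $P(\F)$ is attained at $z=\mathbf{1}$, and $f^L(\mathbf{1})=f(V)$; rounding the point $z=\mathbf{1}$ to the set $S=V$ then incurs no loss at all. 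Thus SO($\{V\}$) admits a polytime $\alpha(n)=1$ approximation based on rounding (SA-LE).

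Applying Lemma~\ref{lem:min-k} with this primitive then yields a polytime $k\cdot 1 = k$-approximation for monotone MASO($\{V\}$) minimization, i.e. for the problem $\min \sum_{i=1}^{k} f_i(S_i): S_1\uplus\cdots\uplus S_k = V$. This is exactly submodular facility location, with the $k$ agents playing the role of the $k$ facilities and $n=|V|$ the number of clients, which gives the claim.

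The only point needing any care — and it is minor — is confirming that the $1$-approximation for the $\F=\{V\}$ primitive is genuinely of the ``rounding (SA-LE)'' form demanded by Lemma~\ref{lem:min-k}, rather than some ad hoc exact algorithm that the lemma cannot consume; the short computation above settles this, and no genuine obstacle arises.
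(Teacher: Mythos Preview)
Your argument is correct and matches the paper's proof: both observe that the single-agent primitive for $\F=\{V\}$ is trivially solved exactly, and then invoke Lemma~\ref{lem:min-k} to obtain the $k$-approximation for the multi-agent version. You are actually slightly more careful than the paper in explicitly verifying that the trivial exact algorithm can be cast as a rounding of (SA-LE), which the paper simply asserts.
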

\begin{proof}
	The single-agent version of the problem is the trivial $\min f(S): S \in \{V\}$. Hence a polytime exact algorithm is available for the single-agent problem and thus by Lemma \ref{lem:min-k} a polytime $k$-approximation is available for the multi-agent version.
\qed \end{proof}

\subsection{A tight multi-agent gap of $O(\log n)$ for bounded blocker families}
\label{sec:MAfromSA}

In Section \ref{sec:generic-min-approx} we established an $O( \log^2 (n) )$ MA gap
whenever there is a SA approximation algorithm based on the (SA-LE) formulation.
For the vertex cover problem, however, there is
an improved  MA gap of $O(\log n)$ due to Goel et al. In this section
we generalize their result by describing a larger class of families wtih such MA gap.

Recall that due to monotonicity,
one may often assume that we are working with a
family $\F$ which is {\em upwards-closed}, aka a {\em blocking family} (cf. \cite{iyer2014monotone}).
The advantage is that to certify whether $F \in \F$, we only need to check that $F \cap B \neq \emptyset$ for each element $B$ of the family $\B(\F)$ of minimal blockers of $\F$. We discuss the details in Appendix \ref{sec:blocking}.
The blocking relaxation for a family $\F$ is then given by $P^*(\F):=\{z \geq 0:  z(B) \geq 1 ~\textit{for all $B \in \B(\F)$} \}$.
In this section we consider the formulations (SA-LE) and (MA-LE) in the special case where the fractional relaxation of the integral polyhedron is given by $P^*(\F)$. 

The $2 \ln (n)$-approximation algorithm of Goel et al. for multi-agent vertex cover
relies only on the fact that the feasible set family has the following {\em bounded blocker property}.
We call a clutter (family of noncomparable sets) $\F$ {\em $\beta$-bounded} if $|F| \leq \beta$ for all $F \in \F$. 
We then say that $\F$ has a  $\beta$-bounded blocker if $|B|\leq \beta$ for each $B \in \B(\F)$.

The main SA minimization result for such families is the following.
\begin{theorem}[\cite{iyer2014monotone,koufogiannakis2013greedy}]
\label{thm:SABBagain}
	Let $\F$ be a family with a $\beta$-bounded
	blocker. Then there is a $\beta$-approximation algorithm for monotone $SO(\F)$ minimization.
	If $P^*(\F)$ has a polytime separation oracle, then this is a polytime algorithm.
\end{theorem}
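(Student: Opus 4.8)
The plan is to round the single-agent convex relaxation (SA-LE) using a single threshold cut, where the threshold is dictated by the blocker bound $\beta$. Recall that since $\F$ is upwards closed, a set $S$ lies in $\F$ exactly when it hits every minimal blocker, i.e. $S \cap B \neq \emptyset$ for all $B \in \B(\F)$ (Appendix~\ref{sec:blocking}); equivalently $\chi^S \in P^*(\F)$. Since $f^L(\chi^S) = f(S)$, this shows that (SA-LE) with $P(\F) = P^*(\F)$ is a genuine relaxation, so its optimum value is a lower bound on $OPT := \min_{S \in \F} f(S)$. First I would solve (SA-LE) to obtain an optimal fractional point $z^*$; by the Lov\'asz lemma $f^L$ is convex, and it is monotone and positively homogeneous because $f$ is monotone, so this is a convex minimization over the polyhedron $P^*(\F)$.

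Given $z^*$, the rounding is the deterministic threshold set $S = \{v \in V : z^*(v) \ge 1/\beta\}$. \textbf{Feasibility} follows from a one-line pigeonhole argument: for any $B \in \B(\F)$ we have $z^*(B) \ge 1$ while $|B| \le \beta$, so some coordinate $v \in B$ must satisfy $z^*(v) \ge 1/\beta$; hence $v \in S$ and $S \cap B \neq \emptyset$. As this holds for every blocker, $S \in \F$.

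For the \textbf{cost bound} I would use the integral form of the Lov\'asz extension, namely $f^L(z^*) = \int_0^\infty f(\{v : z^*(v) > \theta\})\, d\theta$, which is just the given definition rewritten over the threshold parameter. For every $\theta \in [0, 1/\beta)$ the level set $\{v : z^*(v) > \theta\}$ contains $S$, so monotonicity of $f$ gives $f(\{v : z^*(v) > \theta\}) \ge f(S)$. Integrating this inequality over $[0,1/\beta)$ yields $f^L(z^*) \ge \tfrac{1}{\beta} f(S)$, that is $f(S) \le \beta\, f^L(z^*) \le \beta \cdot OPT$, which is exactly the claimed $\beta$-approximation.

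The only genuinely nontrivial ingredient --- and the source of the polytime caveat --- is solving (SA-LE). Evaluating $f^L$ needs only value-oracle queries, and the objective is convex, so the program can be optimized by the ellipsoid (or a subgradient) method provided one can separate over $P^*(\F)$; this is precisely the hypothesis that $P^*(\F)$ admits a polytime separation oracle (cf. Appendix~\ref{sec:relaxations}). Everything else --- the single threshold cut, the pigeonhole feasibility check, and the monotone integral estimate --- is immediate, so I expect no further obstacle once the relaxation is solved. I note that an alternative purely combinatorial route (a primal-dual/greedy argument as in \cite{koufogiannakis2013greedy}) also yields the factor $\beta$, but the relaxation-based argument above is the cleaner fit for the Lov\'asz-extension machinery already set up.
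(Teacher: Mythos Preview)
Your proof is correct. The threshold rounding $S=\{v: z^*(v)\ge 1/\beta\}$, the pigeonhole feasibility argument, and the monotone level-set (equivalently, homogeneity) bound $f(S)\le \beta f^L(z^*)$ are exactly the standard route; the polytime caveat via separation over $P^*(\F)$ is also handled appropriately.

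Note, however, that the paper does not supply its own proof of this theorem: it is stated with attribution to \cite{iyer2014monotone,koufogiannakis2013greedy} and used as a black box. Your argument is precisely the Lov\'asz-extension rounding of \cite{iyer2014monotone}, and it coincides (after the identification $z(v)=\sum_{S\ni v}x(S)$) with the extended-LP argument the authors sketch elsewhere, namely observing that $\beta z^*\ge \chi^S$ and invoking monotonicity plus homogeneity of $f^L$. So there is nothing to contrast: you have reproduced the intended proof.
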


Our next result establishes an $O(\log n)$ MA gap for families with a bounded blocker.
In fact, while our work has focused on monotone objectives (due to the inapproximabiltiy results for general submodular $f_i$)
the next result extends to some special types of nonmonotone objectives.
These were introduced in  \cite{chekuri2011submodular} and \cite{ene2014hardness},
where a tractable middle-ground is found for the minimum submodular cost allocation problem (where $\F = \{V\}$).
They work with objectives $f_i=g_i+h$ where the $g_i$ are monotone submodular and $h$ is symmetric submodular (in \cite{chekuri2011submodular})
or just submodular (in \cite{ene2014hardness}).

We remark that by taking $h\equiv 0$ (which is symmetric submodular),  we obtain a result for monotone functions.
We note that in this setting we do not need $\F$ to be upwards closed, since
due to monotonicity we can work with the upwards closure of $\F$ without loss of generality as previously discussed on Section
\ref{sec:SA-MA-formulations} (see Appendix \ref{sec:blocking} for further details).
Moreover, as previously pointed out, this $O(\log n)$ MA gap is tight due to examples like vertex covers ($2$-approximation for SA and a tight $O(\log n)$-approximation for MA) or submodular facility location ($1$-approximation for SA and a tight $O(\log n)$-approximation for MA).

\begin{theorem}
	\label{BB-nonmonotone}Let $\F$ be an upwards closed family with a $\beta$-bounded
	blocker. Let the objectives be of the form $f_i = g_i + h$ where each $g_i$ is nonnegative monotone submodular and $h$ is nonnegative symmetric submodular. Then there is a randomized  $O(\beta \log n)$-approximation algorithm for the associated $MASO(\F)$ minimization problem.
	If $P^*(\F)$ has a polytime separation oracle, then this is a polytime algorithm.
\end{theorem}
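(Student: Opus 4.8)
\textbf{Proof plan for Theorem \ref{BB-nonmonotone}.}

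The plan is to mimic the structure of the proof of Theorem \ref{thm:min-log}, but to replace the expensive \emph{fracture--expand--return} machinery (which costs an extra $O(\log n)$ factor) with a single direct application of the bounded-blocker rounding of Theorem \ref{thm:sym-MSCA}. The key observation is that a $\beta$-bounded blocker lets us identify, from any fractional solution, a single \emph{integral} feasible set $U \in \F$ that the fractional solution already ``covers'' up to a factor $\beta$; we then run the Chekuri--Ene-style randomized rounding on the ground set $U$.

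First I would let $z^* = z_1^* + \cdots + z_k^*$ be an optimal solution to (MA-LE) over $P^*(\F) = \{z \ge 0 : z(B) \ge 1 ~\forall B \in \B(\F)\}$, with value $OPT_{frac} \le OPT_{MA}$. (If $P^*(\F)$ has a polytime separation oracle, $z^*$ is computable in polytime by the standard machinery of Appendix \ref{sec:relaxations}; otherwise the argument is purely existential.) Define
\[
U := \Bigl\{ v \in V : \textstyle\sum_{i \in [k]} z^*(v) \ge \tfrac{1}{\beta} \Bigr\},
\qquad\text{where } z^* = \textstyle\sum_i z_i^*.
\]
Since every blocker $B \in \B(\F)$ satisfies $z^*(B) \ge 1$ and $|B| \le \beta$, some $v \in B$ has $z^*(v) \ge 1/\beta$, so $U \cap B \ne \emptyset$ for all $B$; as $\F$ is upwards closed this gives $U \in \F$. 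Moreover $\beta z^*|_U \ge \chi^U$ coordinatewise, so $\beta z_1^*|_U + \cdots + \beta z_k^*|_U$ is a feasible (fractional) cover of $U$ — i.e.\ a feasible solution to (MA-LE) in the sense of Theorem \ref{thm:sym-MSCA} with the role of ``$U \in \F$'' played by $U$ itself. By homogeneity and monotonicity of the Lov\'asz extensions, $\sum_i f_i^L(\beta z_i^*|_U) \le \beta \sum_i f_i^L(z_i^*) = \beta \cdot OPT_{frac}$.

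Now I would invoke Theorem \ref{thm:sym-MSCA} directly on this solution: since $f_i = g_i + h$ with $g_i$ monotone submodular and $h$ symmetric submodular, the randomized rounding produces a subpartition $S_1, \ldots, S_k$ with $\biguplus_i S_i \supseteq U \in \F$ and, in expectation,
\[
\sum_{i \in [k]} f_i(S_i) \;\le\; O(\log |U|) \sum_{i \in [k]} f_i^L(\beta z_i^*|_U) \;\le\; O(\beta \log n)\, OPT_{MA}.
\]
Finally, since $\F$ is upwards closed and the $f_i$ are\,\dots\ well, $h$ need not be monotone, so I should be slightly careful here: the set $\biguplus_i S_i$ lies in $\F$ already (it contains $U$ and $\F$ is upwards closed), so $(S_1,\ldots,S_k)$ is a genuine feasible solution to $MASO(\F)$ and no further pruning is needed — the bound above is the final one. (If one instead wanted to shrink back down to exactly $U$, monotonicity of the $g_i$ would only help with the $g_i$ part, which is why it is cleanest to keep the larger feasible set.) Taking $h \equiv 0$ recovers the purely monotone statement, and the tightness remarks (vertex cover, facility location) are as already noted in the text.

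\textbf{Main obstacle.} The delicate point is the interface with Theorem \ref{thm:sym-MSCA}: that theorem is stated for a feasible solution $z$ with $\sum_i z_i \ge \chi^U$ for \emph{some} $U \in \F$, and I need the $U$ I extract from the $\beta$-bounded blocker to play exactly that role while the rounding's $O(\log|U|)$ guarantee is measured against $\sum_i f_i^L(z_i|_U)$ rather than against the original $z_i$. Checking that $\beta z^*|_U$ really is ``a feasible solution'' in the precise sense of that theorem — in particular that restricting to $U$ and scaling by $\beta$ does not inflate the Lov\'asz-extension cost beyond $\beta \cdot OPT_{frac}$, which uses monotonicity of $f_i^L$ for the restriction and homogeneity for the scaling — is the one place where the argument must be done with care; everything else is a routine transcription of the $\F = \{V\}$ analysis of \cite{chekuri2011submodular} to the clutter-blocker setting, exactly as in Goel et al.'s original $\beta = 2$ vertex-cover argument.
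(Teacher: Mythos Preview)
Your overall plan matches the paper's proof almost exactly: solve (MA-LE) over $P^*(\F)$, scale by $\beta$, identify $U = \{v : \beta z^*(v) \ge 1\} \in \F$ via the bounded-blocker property, and feed the scaled solution into Theorem~\ref{thm:sym-MSCA}. The paper does precisely this in two sentences, applying Theorem~\ref{thm:sym-MSCA} to the \emph{unrestricted} vectors $\beta z_i^*$ and using only positive homogeneity of the Lov\'asz extension to obtain $\sum_i f_i^L(\beta z_i^*) \le \beta \cdot OPT_{frac}$.

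You introduce one extra step --- restricting each $\beta z_i^*$ to $U$ --- and this is exactly where your argument breaks. You justify the bound $\sum_i f_i^L(\beta z_i^*|_U) \le \beta \sum_i f_i^L(z_i^*)$ by ``homogeneity and monotonicity of the Lov\'asz extensions,'' but here $f_i = g_i + h$ with $h$ only \emph{symmetric} submodular, so $f_i$ (and hence $f_i^L$) is in general not monotone, and the inequality $f_i^L(z_i^*|_U) \le f_i^L(z_i^*)$ can fail. You correctly flagged this restriction as the delicate step in your obstacle paragraph, but the resolution you propose (monotonicity of $f_i^L$) is precisely what is unavailable in the nonmonotone setting. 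The fix is simply to drop the restriction: $\beta z^*$ already satisfies $\beta z^* \ge \chi^U$ and lies in the upwards-closed $P^*(\F)$ (since $\beta \ge 1$), so Theorem~\ref{thm:sym-MSCA} applies to it directly, and homogeneity alone yields the bound. With that one change your argument is the paper's.
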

\begin{proof}
	Let $z^* = \sum_{i \in [k]} z^*_i$ be an optimal solution to (MA-LE) based on the blocking relaxation $P^*(\F)$ with value $OPT_{frac}$.
	Consider the new feasible solution given by $\beta z^* = \sum_{i \in [k]} \beta z^*_i$ and let	$U=\{v\in V: \beta z^*(v) \geq 1\}$.
	Since $\F$ has a $\beta$-bounded blocker it follows that $U\in\F$.
	We now have that $\sum_{i \in [k]} \beta z^*_i$ is a feasible solution such that $\sum_{i \in [k]} \beta z^*_i \geq \chi^U$.
	Thus, we can use Theorem \ref{thm:sym-MSCA} to get an integral feasible solution 
	$\sum_{i \in [k]} \bar{z}_i$ such that $\sum_{i \in [k]} \bar{z}_i \geq \chi^U$ and $\sum_{i\in [k]} f^L_i (\bar{z}_i) \leq O(\log |U|) \sum_{i \in [k]} f_i^L(\beta z^*_i) \leq \beta \cdot O(\log n) \cdot OPT_{frac}$ on expectation.
	\qed
\end{proof}

It is shown in \cite{ene2014hardness} (see their Proposition 10) that given any nonnegative
submodular function $h$, one may define  a nonnegative symmetric submodular function $h'$ such that for any partition $S_1,S_2,\ldots,S_k$
we have $\sum_i h'(S_i) \leq k \sum_i h(S_i)$. This, with  our previous result, yields the following corollary.
\begin{corollary}
	\label{BB-nonmonotone-again}Let $\F$ be an upwards closed family with a $\beta$-bounded
	blocker. Let the objectives be of the form $f_i = g_i + h$ where each $g_i$ is nonnegative monotone submodular and $h$ is nonnegative submodular. Then there is a randomized $O(k \beta \log n)$-approximation algorithm for the associated $MASO(\F)$ minimization problem.
\end{corollary}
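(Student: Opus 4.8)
The plan is to derive Corollary \ref{BB-nonmonotone-again} directly from Theorem \ref{BB-nonmonotone} by a ``symmetrization'' reduction on the shared nonmonotone part $h$. The key external ingredient is the cited result of Ene et al.\ (their Proposition~10): from any nonnegative submodular $h$ one can construct a nonnegative \emph{symmetric} submodular function $h'$ with the property that $\sum_{i} h'(S_i) \le k \sum_{i} h(S_i)$ for every partition $S_1,\ldots,S_k$, and moreover $h'(S) \ge h(S)$ pointwise (this pointwise domination is what makes the reduction usable in only one direction — it also typically comes for free from the construction $h'(S) = h(S) + h(V\setminus S) - h(V)$, which is symmetric and submodular and dominates $h$ when $h(V) \le h(S) + h(V\setminus S)$, i.e.\ by submodularity with $S\cup(V\setminus S)=V$; I would simply cite their statement rather than reprove it).

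First I would set up the auxiliary instance: keep the same family $\F$ (upwards closed, $\beta$-bounded blocker) and the same monotone parts $g_i$, but replace the shared part $h$ by its symmetrization $h'$, so the new objectives are $\tilde f_i = g_i + h'$. Since each $g_i$ is nonnegative monotone submodular and $h'$ is nonnegative symmetric submodular, Theorem \ref{BB-nonmonotone} applies to this auxiliary instance and yields, in randomized polynomial time, a feasible subpartition $S_1,\ldots,S_k$ with $\biguplus_i S_i \supseteq U \in \F$ and $\sum_i \tilde f_i(S_i) \le O(\beta \log n)\cdot \widetilde{OPT}$ in expectation, where $\widetilde{OPT}$ is the optimum of the auxiliary MASO instance.

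The remaining step is to relate $\widetilde{OPT}$ to the true optimum $OPT$ of the original instance and to bound the true cost of the returned solution. For the latter: since $h' \ge h$ pointwise and the $g_i$ are unchanged, $\sum_i f_i(S_i) = \sum_i g_i(S_i) + \sum_i h(S_i) \le \sum_i g_i(S_i) + \sum_i h'(S_i) = \sum_i \tilde f_i(S_i)$, so the returned solution is at least as good under the original objective. For the former: take an optimal partition $S_1^*,\ldots,S_k^*$ for the original problem. Then $\widetilde{OPT} \le \sum_i \tilde f_i(S_i^*) = \sum_i g_i(S_i^*) + \sum_i h'(S_i^*) \le \sum_i g_i(S_i^*) + k\sum_i h(S_i^*) \le k \sum_i f_i(S_i^*) = k\cdot OPT$, using $\sum_i h'(S_i^*) \le k\sum_i h(S_i^*)$ and $g_i \ge 0$. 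Chaining these gives $\sum_i f_i(S_i) \le O(\beta \log n)\cdot \widetilde{OPT} \le O(k\beta\log n)\cdot OPT$ in expectation, which is the claimed $O(k\beta\log n)$-approximation. The main thing to be careful about — really the only subtlety — is that the reduction is inherently one-directional: it crucially uses $h' \ge h$ so that a good solution for the auxiliary instance is good for the original, while the factor-$k$ loss enters only through the $\sum h' \le k \sum h$ inequality in the reverse comparison; I do not see any obstacle beyond tracking these inequalities, and $P^*(\F)$ having a polytime separation oracle is inherited unchanged from Theorem \ref{BB-nonmonotone}.
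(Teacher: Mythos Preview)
Your proposal is correct and follows exactly the paper's (one-line) symmetrization argument: replace $h$ by a symmetric $h'$, invoke Theorem~\ref{BB-nonmonotone}, and absorb a factor~$k$ via Ene et al.'s inequality. One correction on your parenthetical construction: the formula used by Ene et al.\ is $h'(S)=h(S)+h(V\setminus S)$ \emph{without} the $-h(V)$ term; with your version, the submodularity inequality $h(S)+h(V\setminus S)\ge h(V)$ only gives $h'\ge 0$, not the pointwise domination $h'\ge h$ that your second step relies on---that domination instead follows immediately from nonnegativity of $h$ once the $-h(V)$ is dropped.
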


One may wish to view the above results through the lens of MA gaps, leading to the question of what is the associated SA primitive?
For $g_i+h$ objectives, the SA version should be for general (or symmetric) nonnegative submodular objectives. Moreover, as we only use upwards closed families, one may deduce that these single-agent versions have $\beta$-appoximations via the concept of the monotone closure of a nonmonotone objective from \cite{iyer2014monotone}. Hence our results establish MA gaps of $O(\log n)$ (resp. $O(k \log n)$)
in these nonmonotone settings (and the factor of $k$ is tight \cite{mirzakhani2014sperner}).

Before concluding this section, we note that Theorems~\ref{thm:klog} and \ref{thm:SABBagain}  imply
a $k \beta$-approximation for families with bounded blockers, which becomes preferable when $k < O(\log n)$.

\begin{corollary}
\label{cor:kgapBB}
Let $\F$ be a family with a $\beta$-bounded
	blocker. Then there is a $k \beta$-approximation algorithm for the associated monotone $MASO(\F)$ minimization problem.
\end{corollary}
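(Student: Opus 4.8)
The plan is to obtain the bound by feeding the single-agent result for bounded-blocker families into the ``number of agents'' reduction: I would combine Theorem~\ref{thm:SABBagain} with Lemma~\ref{lem:min-k}, taking $\alpha(n)=\beta$.

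First I would recall precisely what Theorem~\ref{thm:SABBagain} gives in this section's setting, where (SA-LE) is taken with respect to the blocking relaxation $P^*(\F)$. The $\beta$-approximation for monotone $SO(\F)$ minimization (in the form due to Iyer et al.) is obtained by solving the convex program $\min f^L(z): z \in P^*(\F)$ and rounding the resulting fractional point $z$ to a set $S \in \F$ with $f(S) \le \beta\, f^L(z)$. This is exactly an ``$\alpha(n)$-approximation based on rounding (SA-LE)'' in the sense required by Lemma~\ref{lem:min-k}, with $\alpha(n)=\beta$ (and it is polytime whenever $P^*(\F)$ admits a polytime separation oracle).

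Then I would apply Lemma~\ref{lem:min-k} essentially verbatim: start from an optimal (MA-LE) solution $z^* = z^*_1 + \cdots + z^*_k$ of value $OPT_{frac}$, assign each $v\in V$ to an agent $i'$ attaining $\max_{i\in[k]} z^*_i(v)$, set $\hat z_{i'}(v)=k\,z^*_{i'}(v)$ and $\hat z_i(v)=0$ otherwise. Then $\hat z = \hat z_1 + \cdots + \hat z_k \ge z^*$ is feasible for (MA-LE), the $\hat z_i$ have pairwise disjoint supports $V_i$, and $\hat z_i \le k\,z^*_i$, so by monotonicity and homogeneity of the $f^L_i$ one gets $\sum_i f^L_i(\hat z_i) \le k\sum_i f^L_i(z^*_i) = k\,OPT_{frac} \le k\,OPT_{MA}$. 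Rounding $\hat z$ with the single-agent $\beta$-approximation applied to the function $g$ of (\ref{g-function}) (with the $V_i$ as the agents' pre-assigned ground sets) and setting $S_i = \hat S \cap V_i$ yields, via Proposition~\ref{prop:g-function}, $\sum_i f_i(S_i) = g(\hat S) \le \beta\, g^L(\hat z) = \beta\sum_i f^L_i(\hat z_i) \le k\beta\cdot OPT_{MA}$, which is the claimed $k\beta$-approximation.

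The only point that needs care — and it is not really an obstacle — is that Lemma~\ref{lem:min-k} requires the single-agent guarantee to be a genuine \emph{rounding} of (SA-LE), i.e.\ it must round the specific (possibly non-optimal) fractional point $\hat z$ rather than merely return a near-optimal set; this is indeed how the convex-programming algorithm underlying Theorem~\ref{thm:SABBagain} operates. If one is content with the estimate up to constants, Theorem~\ref{thm:klog} with $\alpha=\beta$ already gives an $O(k\beta)$-approximation directly, which suffices for the remark preceding the corollary.
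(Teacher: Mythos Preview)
Your proposal is correct and matches the paper's own reasoning: the text immediately preceding the corollary states that it follows from Theorems~\ref{thm:klog} and~\ref{thm:SABBagain}, and the $k$-part of Theorem~\ref{thm:klog} is precisely Lemma~\ref{lem:min-k}. You are in fact slightly more careful than the paper in pointing directly to Lemma~\ref{lem:min-k} (which gives the exact factor $k\alpha$) rather than Theorem~\ref{thm:klog} (stated with an $O(\cdot)$), and in noting explicitly that the Iyer et al.\ version of Theorem~\ref{thm:SABBagain} is a rounding of (SA-LE), which is what Lemma~\ref{lem:min-k} requires.
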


\subsection{A tight multi-agent gap of $O(\log n)$ for ring and crossing families}	
\label{sec:rings}

It is well known (\cite{schrijver2000combinatorial}) that submodular minimization can be solved exactly in polynomial time over a ring family. In this section we observe that the MA problem over this type of constraint admits a tight $\ln (n)$-approximation.
More generally, we consider {\em crossing families}.
A family $\F$ of subsets of $V$ forms a ring  family (aka lattice family) if
for each $A,B \in \F$ we also have $A \cap B, A \cup B \in \F$.
A crossing family is one where we only require it for sets where $A \setminus B, B \setminus A, A \cap B, V- (A \cup B)$ are all non-empty. Hence any ring family is a crossing family.

For any crossing family $\F$ and any $u,v \in V$, let $\F_{uv}=\{A \in \F:  u \in A, v \not\in A\}$.
It is easy to see that $\F_{uv}$ is a ring family. Moreover, we may solve the original MA problem by solving the associated MA problem for each non-empty $\F_{uv}$ and then selecting the best output solution.

So we assume now that we are given a ring family in such a way that we may compute its minimal
 set $M$ (which is unique).   This is a standard assumption
when working with ring families (cf. submodular minimization algorithm described in \cite{schrijver2000combinatorial}).
Then, due to monotonicity and the fact that $\F$ is closed under intersections, it is not hard to
see that the original problem reduces to the facility location problem
$$
\min \sum_{i=1}^k f_i(S_i): S_1 \uplus  \cdots \uplus  S_k = M \; ,
$$
which admits a tight $(\ln |M|)$-approximation (\cite{svitkina2010facility}). In particular, for the special case
where we have the trivial ring family $\F = \{V\}$ we get a tight $\ln (n)$-approximation.
The next result summarizes these observations.

\begin{theorem}
There is a tight $\ln (n)$-approximation for monotone $MASO(\F)$ minimization over  crossing families $\F$.
\end{theorem}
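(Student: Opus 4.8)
The statement packages the three observations made in the paragraph preceding it, so the plan is simply to organize them into a clean reduction chain: from crossing families down to ring families, from ring families down to the trivial family $\F=\{V\}$ over a restricted ground set, and then invoke the known tight $\ln(n)$-approximation for submodular facility location of Svitkina–Tardos~\cite{svitkina2010facility} (equivalently, monotone MSCA). I will treat the matching lower bound ($\ln(n)$-hardness) as inherited from the $\F=\{V\}$ case, which is the tight instance already cited.

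\textbf{Step 1: reduce crossing families to ring families.} Given a crossing family $\F$, for each ordered pair $u,v \in V$ define $\F_{uv}=\{A \in \F : u \in A,\ v \notin A\}$. The key claim is that each $\F_{uv}$ is a ring family: if $A,B \in \F_{uv}$ then $A\cap B, A\cup B$ both contain $u$ and exclude $v$, so they are ``non-crossing-safe'' in the required sense (the four sets $A\setminus B$, $B\setminus A$, $A\cap B$, $V\setminus(A\cup B)$ are each nonempty, since $u$ witnesses $A\cap B$ and $v$ witnesses $V\setminus(A\cup B)$, while $A\setminus B$ and $B\setminus A$ — if empty — make the conclusion trivial as one set contains the other). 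Hence $A\cap B, A\cup B \in \F$, and they clearly lie in $\F_{uv}$. Since every nonempty $A \in \F$ (other than the trivial cases $A=\emptyset$ or $A=V$, which are handled separately) lies in some $\F_{uv}$, we can solve MASO($\F$) by solving MASO($\F_{uv}$) over all $O(n^2)$ nonempty choices and returning the best solution found; this only loses a polynomial factor in running time, not in approximation.

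\textbf{Step 2: reduce a ring family to facility location.} For a ring family $\F$, assume (as is standard for algorithmic work with ring families, cf.~\cite{schrijver2000combinatorial}) that we are handed its unique minimal set $M = \bigcap_{A\in\F} A$. I would argue that for monotone objectives the optimal MASO($\F$) solution can be taken with $\biguplus_i S_i = M$: any feasible $S \in \F$ has $S \supseteq M$, and by monotonicity of each $f_i$ replacing the allocated set $S = \biguplus_i S_i$ by a sub-allocation that partitions exactly $M$ (throwing surplus elements away, and keeping one such surplus element—indeed $M$ itself—as feasible because $M\in\F$) does not increase $\sum_i f_i(S_i)$. So the problem becomes exactly $\min \sum_{i=1}^k f_i(S_i) : S_1 \uplus \cdots \uplus S_k = M$, which is MSCA over the ground set $M$, and Svitkina–Tardos give a tight $(\ln|M|)$-approximation for it. Since $|M|\le n$ this yields a $\ln(n)$-approximation for MASO($\F$), and for $\F=\{V\}$ we have $M=V$, so the bound $\ln(n)$ is attained and the lower bound transfers.

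\textbf{Main obstacle.} None of the steps is deep; the only place requiring a little care is verifying the crossing-to-ring reduction in the degenerate cases ($A \subseteq B$ or $B \subseteq A$, and the sets $\emptyset, V$), and confirming that restricting to $\F_{uv}$ loses nothing—i.e.\ that the global optimum of MASO($\F$) is realized by some feasible set which is neither $\emptyset$ nor $V$, or that those two extreme feasible sets are checked directly. Everything else is bookkeeping plus a citation. I would therefore spend most of the write-up making Step 1 airtight and then state Steps 2 as an immediate consequence of monotonicity together with~\cite{svitkina2010facility,schrijver2000combinatorial}.
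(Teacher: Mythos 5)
Your proposal is correct and follows essentially the same route as the paper: decompose the crossing family into the ring families $\F_{uv}$, reduce each ring family to MSCA over its unique minimal set $M$ using monotonicity and closure under intersection, and invoke the tight $\ln|M|$-approximation of Svitkina--Tardos, with the lower bound inherited from $\F=\{V\}$. The extra care you take with the degenerate cases ($A\subseteq B$, and the sets $\emptyset$, $V$) is a welcome tightening of the paper's ``it is easy to see'' steps but does not change the argument.
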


\section{Multi-agent submodular maximization}
\label{sec:MASA}

In this section we describe two different reductions. 
The first one reduces the capacitated multi-agent problem (\ref{ma}) to a single-agent problem,
and it is based on the simple idea of taking $k$ disjoint copies of the original ground set. 
We show that several properties of the objective and family of feasible sets stay \emph{invariant} (i.e. preserved) under the reduction. 
We use this to establish an (optimal) MA gap of 1 for several families. Examples of such families include spanning trees, matroids, and $p$-systems.

Our second reduction is based on the multilinear extension of a set function.
We establish that if the SA primitive admits approximation via its multilinear relaxation (see Section \ref{sec:max-SA-MA-formulations}),
then we may extend this to its MA version with a constant factor loss, in the monotone and nonmonotone settings. 
Moreover, for the monotone case our MA gap is tight.


\subsection{The lifting reduction}
\label{sec:lifting-reduction}

In this section we describe a generic reduction of (\ref{ma}) to a single-agent problem
$$\max/\min f(S): S\in\L.$$
The argument is based on the idea of viewing assignments of elements $v$ to agents $i$ in a {\em multi-agent bipartite graph}.
This simple idea (which is equivalent to making $k$ disjoint copies of the ground set) already appeared in the classical work of Fisher et al \cite{fisher1978analysis}, and has since then been widely used \cite{lehmann2001combinatorial,vondrak2008optimal,calinescu2011maximizing,singh2012bisubmodular}.
We review briefly the reduction here for completeness and to fix notation. 

Consider the complete bipartite graph $G=([k]+V,E)$. 
Every subset of edges $S \subseteq E$ can be written uniquely as
$S =  \uplus_{i \in [k]} (\{i\} \times S_i)$ for some sets $S_i \subseteq V$.
This allows us to go from a multi-agent objective (such as the one in (\ref{ma})) to a univariate objective $f:2^{E}\to\R$ over the lifted space. Namely, for each set $S \subseteq E$ we define $f(S)=\sum_{i \in [k]} f_i(S_i)$. The function $f$ is well-defined because each subset $S\subseteq E$ can be uniquely written as $S = \uplus_{i \in [k]} (\{i\} \times S_i)$ for some $S_i \subseteq V$.

We consider two families of sets over $E$ that capture the original constraints:
$$
\F':=\{S\subseteq E:S_{1}\uplus \cdots\uplus  S_{k}\in\F\} \hspace{15pt} \mbox{and} \hspace{15pt}
\H:=\{S\subseteq E:S_{i}\in\F_{i},\;\forall i\in[k]\}.
$$

\noindent
We now have:
\[
\begin{array}{cccccccc}
\max/\min & \sum_{i \in [k]} f_i(S_i) & = & \max/\min & f(S) & = & \max/\min & f(S)\\
\mbox{s.t.} & S_{1}\uplus \cdots\uplus S_{k}\in\F &  & \mbox{s.t.} & S\in\F' \cap \H &  & \mbox{s.t.} & S\in\L\\
 & S_{i}\in\F_{i}\,,\,\forall i\in[k]
\end{array},
\]
where in the last step we just let $\L:=\F' \cap\H$.

This reduction is  interesting if our new function
$f$ and family of sets $\L$ have properties which allows us to handle
them computationally. This will depend on the original structure
of the functions $f_i$ and the set families $\F$ and $\F_{i}$. 
In terms of the objective, it is straightforward
to check (as previously pointed out in \cite{fisher1978analysis}) that if the $f_i$ are (nonnegative, respectively monotone)
submodular functions, then $f$ as defined above is also (nonnegative, respectively monotone) submodular.
In Section \ref{sec:reduction-properties} we discuss several properties of the families $\F$ and $\F_i$ that are preserved under this reduction.

\subsection{The single-agent and multi-agent formulations}
\label{sec:max-SA-MA-formulations}

For a set function $f:\{0,1\}^V \to \R$ we define its \emph{multilinear extension}  $f^M:[0,1]^V \to \R$ (introduced in \cite{calinescu2007maximizing}) as
\begin{equation*}
f^M(z)=\sum_{S \subseteq V} f(S) \prod_{v \in S} z_v \prod_{v \notin S} (1-z_v).
\end{equation*}
An alternative way to define $f^M$ is in terms of expectations. Consider a vector $z \in [0,1]^V$ and let $R^z$ denote a random set that contains element $v_i$ independently with probability $z_{v_i}$. Then $f^M(z)= \E[f(R^z)]$, where the expectation is taken over random sets generated from the probability distribution induced by $z$. 

This gives rise to natural single-agent and multi-agent relaxations.
The {\em single-agent multilinear extension relaxation} is:
\begin{equation}
\label{SA-ME}
(\mbox{SA-ME}) ~~~~
\max f^M(z): z \in P(\F),
\end{equation}
and the {\em multi-agent multilinear extension relaxation} is:
\begin{equation}
\label{MA-ME}
(\mbox{MA-ME}) ~~~~
\max \sum_{i=1}^k f^M_i(z_i): z_1 + z_2 + \cdots + z_k \in P(\F),
\end{equation}
where $P(\F)$ denotes some relaxation of $conv(\{\chi^S:S\in \F\})$. While the relaxation (SA-ME) has been used extensively \cite{calinescu2011maximizing,lee2009non,feldman2011unified,ene2016constrained,buchbinder2016constrained} in the submodular maximization literature, we are not aware of any previous work using the multi-agent relaxation (MA-ME).

The following result shows that when $f$ is nonnegative submodular and the formulation $P(\F)$ is downwards closed and admits a polytime separation oracle, the relaxation (SA-ME) can be solved approximately in polytime.

\begin{theorem}[\cite{buchbinder2016constrained,vondrak2008optimal}]
	\label{thm:multilinear-solve-monot}
	Let $f:2^V \to \R_+$ be a nonnegative submodular function and $f^M:[0,1]^V \to \R_+$ its multilinear extension. Let $P \subseteq [0,1]^V$ be any downwards closed polytope that admits a polytime separation oracle, and denote $OPT = \max f^M(z): z\in P$. Then there is a polytime algorithm (\cite{buchbinder2016constrained}) that finds $z^* \in P$ such that $f^M(z^*) \geq 0.385 \cdot OPT$. Moreover, if $f$ is monotone there is a polytime algorithm (\cite{vondrak2008optimal}) that finds $z^* \in P$ such that $f^M(z^*) \geq (1-1/e) OPT$. 
\end{theorem}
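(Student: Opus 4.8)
The plan is to prove each bound with a \emph{continuous-greedy}-type algorithm that moves a fractional point $z(t)\in P$ from $z(0)=\mathbf{0}$ at time $t=0$ to $z(1)\in P$ at time $t=1$, where $P$ is the given downward-closed polytope.

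\emph{The monotone case.} I would run Vondr\'ak's continuous greedy. At time $t$ let $w(t):=\nabla f^M(z(t))$, whose $v$-th coordinate equals the expected marginal $\E\!\left[f(R^{z(t)}\cup\{v\})-f(R^{z(t)}\setminus\{v\})\right]\ge 0$ (nonnegativity from monotonicity); estimate each coordinate within a small additive error by sampling $R^{z(t)}$ polynomially often and applying concentration bounds. Solve the linear program $x(t):=\argmax\{\langle x,w(t)\rangle : x\in P\}$ — possible in polytime because $P$ has a separation oracle — and set $\dot z(t)=x(t)$. Since each $z_v(t)=\int_0^t x_v(s)\,ds\le t\le 1$ and $z(1)=\int_0^1 x(s)\,ds$ is a convex combination of points of $P$, convexity gives $z(1)\in P$, and $z(t)\le z(1)$ coordinatewise together with downward closedness gives $z(t)\in P$ throughout.

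\emph{The analysis.} Let $z^\star$ attain $OPT=f^M(z^\star)$ over $P$. Monotonicity and submodularity of $f$ give, for every $t$,
\[
\langle x(t),w(t)\rangle \ge \langle z^\star,w(t)\rangle \ge \langle (z(t)\vee z^\star)-z(t),\,w(t)\rangle \ge f^M(z(t)\vee z^\star)-f^M(z(t)) \ge OPT-f^M(z(t)),
\]
where the steps use, respectively, optimality of $x(t)$; $w(t)\ge 0$ and $(z(t)\vee z^\star)-z(t)\le z^\star$; concavity of $f^M$ along nonnegative directions; and monotonicity. With $\Phi(t)=f^M(z(t))$ this is the differential inequality $\Phi'(t)\ge OPT-\Phi(t)$, $\Phi(0)=f(\emptyset)\ge 0$, whose solution gives $\Phi(1)\ge(1-1/e)\,OPT$. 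Discretizing time into $O(1/\varepsilon)$ steps and accounting for the sampling error loses only a $(1-\varepsilon)$ factor, which is absorbed into the bound.

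\emph{The nonmonotone case and the obstacle.} For general nonnegative submodular $f$ I would replace the update by the \emph{measured} continuous greedy of Feldman--Naor--Schwartz, $\dot z_v(t)=x_v(t)\,(1-z_v(t))$, which damps already-large coordinates and is exactly what keeps $f^M(z(t))$ from collapsing when $f$ is nonmonotone; since $z_v(t)=1-\exp(-\int_0^t x_v(s)\,ds)\le\int_0^t x_v(s)\,ds$, downward closedness again yields $z(t)\in P$, and a variant of the ODE analysis gives $f^M(z(1))\ge e^{-1}OPT$. To reach the stated $0.385$ one invokes the refinement of Buchbinder--Feldman, which runs an ``aided'' measured continuous greedy seeded with auxiliary fractional points and optimizes the resulting trade-off; I would cite that analysis rather than reproduce it. The genuine difficulty lies precisely in the telescoping/differential step bounding the instantaneous gain by $OPT-f^M(z(t))$: in the monotone case it rests on $\nabla f^M\ge 0$ and concavity of $f^M$ in nonnegative directions, while in the nonmonotone case $f^M(z(t))$ is no longer monotone in $t$ and pushing the constant up to $0.385$ needs the full Buchbinder--Feldman argument; the remaining ingredients (sampling for $\nabla f^M$, discretizing the ODE, LP optimization over $P$ via the separation oracle) are routine.
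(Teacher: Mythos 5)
The paper does not prove this theorem at all: it is imported verbatim from the literature, with the $(1-1/e)$ bound credited to Vondr\'ak's continuous greedy and the $0.385$ bound to Buchbinder--Feldman. Your sketch correctly reproduces the standard continuous-greedy analysis from the cited source for the monotone case (the chain of inequalities and the differential inequality $\Phi'\ge OPT-\Phi$ are exactly the textbook argument) and, like the paper, appropriately defers to the measured continuous greedy and the Buchbinder--Feldman refinement for the nonmonotone $0.385$ bound, so this is consistent with the paper's treatment.
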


For monotone objectives the assumption that $P$ is downwards closed is without loss of generality.
This is not the case, however, when the objective is nonmonotone. Nonetheless, this restriction
is unavoidable, as Vondr{\'a}k \cite{vondrak2013symmetry} showed that no algorithm can find $z^* \in P$ such that $f^M(z^*) \geq c \cdot OPT$
for any constant $c>0$ when $P$ admits a polytime separation oracle but it is not downwards closed.

We can solve (MA-ME) to the same approximation factor as (SA-ME). This follows from the fact that the
MA problem has the form $\{ \max g(w) : w \in W \subseteq {\bf R}^{nk} \}$
where $g(w)=g(z_1,z_2,\ldots,z_k)=\sum_{i \in [k]} f^M_i(z_i)$ and $W$ is the downwards closed polytope $\{w=(z_1,...,z_k): \sum_i z_i \in P(\F)\}$. Clearly we have a polytime separation oracle for $W$ given that we have one for $P(\F)$.
Moreover, it is straightforward to check (see Lemma \ref{lem:max-multilinear} on Appendix \ref{sec:Appendix-Invariance}) that $g(w)=f^M(w)$, where $f$ is the function on the lifted space after applying the lifting reduction from Section \ref{sec:lifting-reduction}. Thus, $g$ is the multilinear extension of a nonnegative submodular function, and we can now use Theorem \ref{thm:multilinear-solve-monot}.

\subsection{A tight multi-agent gap of $1-1/e$}
\label{sec:max-MA-gap}

In this section we present the proof of Theorem \ref{thm:max-MA-gap}.
The high-level idea behind our reduction is the same as in the minimization setting (see Section \ref{sec:generic-min-approx}). That is, we start with an (approximate) optimal solution
$z^* = z_1^* + z_2^* + \cdots + z^*_k$ to the multi-agent (MA-ME) relaxation and
build a new feasible solution $\hat{z} = \hat{z}_1 + \hat{z}_2 + \cdots + \hat{z}_k$
where the $\hat{z}_i$ have supports $V_i$ that are pairwise disjoint.
We then use for the SA rounding step the single-agent problem (as previously defined in (\ref{g-function}) for the minimization setting) $\max g(S): S \in \F$ where
$g(S)=\sum_{i \in [k]} f_i(S \cap V_i)$.

Similarly to Proposition \ref{prop:g-function} which dealt with the Lov\'asz extension,
we have the following result for the multilinear extension.

\begin{proposition}
	\label{prop:max-g-function}
	Let $z = \sum_{i\in [k]} z_i$ be a feasible solution to (MA-ME) where the
	vectors $z_i$ have pairwise disjoint supports $V_i$. Then
	$g^M(z) = \sum_{i \in [k]} f^M_i(z|_{V_i}) = \sum_{i \in [k]} f^M_i(z_i).$
\end{proposition}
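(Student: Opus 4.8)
The plan is to mirror the proof of Proposition~\ref{prop:g-function}, but using the probabilistic description of the multilinear extension in place of the combinatorial one for the Lov\'asz extension. Recall that for any nonnegative submodular $h:2^V\to\R$ and any $z\in[0,1]^V$ we have $h^M(z)=\E[h(R^z)]$, where $R^z$ is the random set that includes each $v\in V$ independently with probability $z(v)$.

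First I would apply this to $g$. Since $g(S)=\sum_{i\in[k]}f_i(S\cap V_i)$ by~(\ref{g-function}), linearity of expectation gives
\[
g^M(z)=\E[g(R^z)]=\sum_{i\in[k]}\E\bigl[f_i(R^z\cap V_i)\bigr].
\]
The key observation is then that, for a fixed $i$, the random set $R^z\cap V_i$ includes each $v\in V_i$ independently with probability $z(v)$ and contains no element outside $V_i$; hence $R^z\cap V_i$ has exactly the distribution of $R^{\,z|_{V_i}}$. Therefore $\E[f_i(R^z\cap V_i)]=f_i^M(z|_{V_i})$, and summing over $i$ yields the first equality $g^M(z)=\sum_{i\in[k]}f_i^M(z|_{V_i})$.

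For the second equality I would invoke the disjoint-support hypothesis exactly as in Proposition~\ref{prop:g-function}: because $z=\sum_{j\in[k]}z_j$ with $z_j$ supported on $V_j$ and the $V_j$ pairwise disjoint, for every $v\in V_i$ we have $z(v)=\sum_j z_j(v)=z_i(v)$, while for $v\notin V_i$ both $z|_{V_i}(v)$ and $z_i(v)$ vanish. Hence $z|_{V_i}=z_i$ as vectors, and so $\sum_{i\in[k]}f_i^M(z|_{V_i})=\sum_{i\in[k]}f_i^M(z_i)$, completing the proof.

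There is essentially no hard step here; the only point requiring care is the (routine) verification that restricting the independently rounded set $R^z$ to the coordinate block $V_i$ produces precisely the independently rounded set for the truncated vector $z|_{V_i}$ — and here submodularity or nonnegativity of the $f_i$ play no role, only independence of the coordinates does. Alternatively one could expand both sides directly from the product definition $f^M(z)=\sum_{S}f(S)\prod_{v\in S}z_v\prod_{v\notin S}(1-z_v)$ and match terms, but the expectation argument is cleaner and is the one I would present.
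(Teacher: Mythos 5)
Your proof is correct, and it is essentially the argument the paper has in mind: the paper gives no explicit proof of Proposition~\ref{prop:max-g-function} (it treats it as immediate from the definition of $g$, by analogy with Proposition~\ref{prop:g-function}), and your formalization via $f^M(z)=\E[f(R^z)]$, linearity of expectation, and the observation that $R^z\cap V_i$ is distributed as $R^{\,z|_{V_i}}$ is the natural way to fill in that gap. The handling of the disjoint-support hypothesis to get $z|_{V_i}=z_i$ is likewise exactly right.
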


We now have all the ingredients to prove our main result in the maximization setting.

\begin{theorem}
	If there is a (polytime) $\alpha(n)$-approximation for monotone SO($\F$) maximization
	based on rounding (SA-ME), then there is a (polytime) $(1-1/e) \cdot \alpha(n)$-approximation
	for monotone MASO($\F$) maximization. Furthermore, given a downwards closed family $\F$,
	if there is a (polytime) $\alpha(n)$-approximation 
	for nonmonotone SO($\F$) maximization
	based on rounding (SA-ME), then there is a (polytime) $0.385 \cdot \alpha(n)$-approximation
	for nonmonotone MASO($\F$) maximization.
\end{theorem}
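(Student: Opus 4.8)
The plan is to mimic the minimization argument from Section~\ref{sec:generic-min-approx}, but with the multilinear extension replacing the Lov\'asz extension and with a one-shot contention-resolution style rounding in place of the fracture/expand/return machinery. First I would take an (approximate) optimal solution $z^* = z_1^* + \cdots + z_k^*$ to (MA-ME); by the discussion at the end of Section~\ref{sec:max-SA-MA-formulations}, this can be found in polytime to within a factor $(1-1/e)$ in the monotone case and $0.385$ in the nonmonotone case (using Theorem~\ref{thm:multilinear-solve-monot} applied to the lifted function $g$ over the downwards closed polytope $W=\{(z_1,\ldots,z_k):\sum_i z_i\in P(\F)\}$). So I may assume $\sum_{i} f_i^M(z_i^*)\geq \beta \cdot OPT_{MA}$ with $\beta\in\{1-1/e,0.385\}$, where $OPT_{MA}$ denotes the multi-agent optimum (which is at least the fractional multilinear optimum, hence the integral combinatorial optimum bounds things the right way).

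The key step is to replace $z^*$ by a solution $\hat z=\hat z_1+\cdots+\hat z_k$ whose blocks $\hat z_i$ have pairwise disjoint supports, \emph{without losing more than the $\alpha(n)$ factor that the single-agent rounding will already cost}. The natural device is independent randomized assignment: for each element $v\in V$, let $q(v)=\sum_i z_i^*(v)\le 1$ (which holds because $z^*\in P(\F)\subseteq[0,1]^V$ when $P(\F)$ is downwards closed and bounded by the cube), and assign $v$ to agent $i$ with probability $z_i^*(v)$, i.e. define a random partition $V=\uplus_i V_i$ of (a subset of) $V$, and set $\hat z$ to be the indicator-scaled version, $\hat z_i := q \cdot \chi^{V_i}$ — more precisely $\hat z_i(v)=q(v)$ if $v\in V_i$ and $0$ otherwise, and $\hat z_i(v)=0$ if $v$ is unassigned. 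Then $\hat z=\sum_i \hat z_i$ satisfies $\hat z(v)=q(v)\cdot\mathbf{1}[v\text{ assigned}]$, and since every $v$ with $q(v)>0$ is assigned with probability $q(v)$... hmm — actually to keep $\hat z$ feasible I would instead simply keep $\hat z(v)=q(v)$ for \emph{every} $v$ (assign each $v$ to exactly one random agent $i$ chosen with probability $z_i^*(v)/q(v)$, with no unassigned elements among the support of $z^*$). This makes $\hat z = z^* $ as a sum-vector, hence $\hat z\in P(\F)$, i.e. feasible for (MA-ME), and the $\hat z_i$ have disjoint supports $V_i$ by construction. The cost is controlled in expectation: $\E\big[\sum_i f_i^M(\hat z_i)\big]=\sum_i \E\big[f_i^M(q\cdot\chi^{V_i})\big]$, and since $f_i^M$ is the multilinear extension, $f_i^M(q\cdot\chi^{V_i})=\E_{R}[f_i(R)]$ where $R\subseteq V_i$ includes $v$ with probability $q(v)$; taking the outer expectation over the random partition, $R$ now includes $v$ with probability $q(v)\cdot(z_i^*(v)/q(v))=z_i^*(v)$ independently, so $\E[\sum_i f_i^M(\hat z_i)]=\sum_i f_i^M(z_i^*)\ge \beta\cdot OPT_{MA}$ — \emph{no loss at all} in expectation.

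Given the disjoint-support solution $\hat z$ (obtained, say, by a polynomial number of independent trials and keeping the best, or by a standard conditional-expectation derandomization over the multilinear objective), I would form the single-agent function $g(S)=\sum_i f_i(S\cap V_i)$ on ground set $V$, which is nonnegative (monotone, resp. nonmonotone) submodular, and observe via Proposition~\ref{prop:max-g-function} that $g^M(\hat z)=\sum_i f_i^M(\hat z|_{V_i})=\sum_i f_i^M(\hat z_i)$. Since $\hat z\in P(\F)$, it is feasible for (SA-ME) with $f$ replaced by $g$, and $g^M(\hat z)\ge\beta\cdot OPT_{MA}$. Now invoke the assumed $\alpha(n)$-approximation for SO($\F$) maximization via rounding (SA-ME): it produces $\hat S\in\F$ with $g(\hat S)\ge\alpha(n)\cdot g^M(\hat z)$ (in expectation, if randomized). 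Setting $\hat S_i=\hat S\cap V_i$ gives a feasible multi-agent solution with $\sum_i f_i(\hat S_i)=g(\hat S)\ge \alpha(n)\cdot\beta\cdot OPT_{MA}$, which is the claimed bound with $\beta=1-1/e$ (monotone) and $\beta=0.385$ (nonmonotone, $\F$ downwards closed). The main obstacle — and the point needing care — is ensuring the randomized disjointing step genuinely costs nothing: this hinges on the defining expectation characterization $f_i^M(z)=\E[f_i(R^z)]$ and the tower property over the two-stage randomness, and on the fact that $P(\F)$ being downwards closed lets us keep $\hat z=z^*$ feasible; the monotone-vs-nonmonotone split is then inherited verbatim from Theorem~\ref{thm:multilinear-solve-monot} and the $\alpha(n)$ hypothesis, with the downwards-closedness of $\F$ (hence of $W$) being exactly what is needed in the nonmonotone case per Vondr\'ak's lower bound.
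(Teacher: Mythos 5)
Your proposal is correct and shares the paper's overall architecture (solve (MA-ME) approximately via Theorem~\ref{thm:multilinear-solve-monot} over the lifted downwards closed polytope $W$; disjointify the supports at no loss; then define $g(S)=\sum_i f_i(S\cap V_i)$ and invoke the assumed $\alpha(n)$-rounding for (SA-ME) together with Proposition~\ref{prop:max-g-function}). Where you genuinely diverge is the disjointification step. The paper does this \emph{deterministically}: since each $f_i^M$ is affine in any single coordinate $\mathbf{e_v}$, for each $v$ with $z_i^*(v),z_{i'}^*(v)>0$ one of the two extreme mass transfers does not decrease $\sum_j f_j^M(z_j^*)$, and iterating yields pairwise disjoint supports with the sum vector $z^*$ unchanged. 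You instead assign each $v$ to a single agent at random with probabilities $z_i^*(v)/q(v)$ and set $\hat z_i=q\cdot\chi^{V_i}$; the tower-property computation showing $\E[\sum_i f_i^M(\hat z_i)]=\sum_i f_i^M(z_i^*)$ is correct (the two-stage Bernoulli product reproduces exactly the distribution $R^{z_i^*}$, independently across elements). What each buys: the paper's local-move argument gives a pointwise (not merely in-expectation) guarantee and needs only pairwise comparisons of multilinear values; your randomized version is arguably more transparent but pushes the work into extracting a good realization. Note that ``polynomially many trials, keep the best'' is not automatic for a maximization objective — reverse Markov only gives $\Pr[X\ge(1-\epsilon)\E X]\ge \epsilon\,\E X/X_{\max}$, and $X_{\max}$ need not be comparable to $\E X$ — so you really do need the conditional-expectation derandomization, which in turn amounts to greedily choosing, element by element, the agent maximizing a multilinear-type conditional expectation; at that point your procedure is essentially a $k$-way version of the paper's two-way uncrossing, with the same implicit reliance on (estimating) multilinear values.

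Two small wording issues, neither fatal: the parenthetical asserting that $OPT_{MA}$ ``is at least the fractional multilinear optimum'' has the inequality backwards — for maximization $OPT_{frac}\ge OPT_{MA}$, which is the direction actually used in your chain $\sum_i f_i^M(z_i^*)\ge\beta\,OPT_{frac}\ge\beta\,OPT_{MA}$ — and the first, half-retracted variant of your assignment rule (leaving elements unassigned with probability $1-q(v)$) would indeed have broken feasibility of $\hat z$ for general $P(\F)$, so the corrected version with $\hat z=z^*$ is the one to keep.
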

\begin{proof}
	
	We  discuss first the case of monotone objectives.
	
	{\sc STEP 1.}	
	Let $z^* = z_1^* + z_2^* + \cdots + z^*_k$ denote an approximate solution to (MA-ME) obtained via Theorem \ref{thm:multilinear-solve-monot}, and let $OPT_{frac}$ be the value of an optimal solution. We then have that $\sum_{i \in [k]} f^M_i(z^*_i) \geq (1-1/e) OPT_{frac} \geq (1-1/e) OPT_{MA}$.
	
	{\sc STEP 2.}
	For an element $v \in V$ let $\bf{e_v}$ denote the characteristic vector of $\{v\}$, i.e. the vector in $\R^V$ which has value $1$ in the $v$-th component and zero elsewhere. Notice that by definition of the multilinear extension we have that the functions $f^M_i$ are linear along directions $\bf{e_v}$ for any $v \in V$. It then follows that the function
	\begin{equation*}
	h(t) = f^M_i (z^*_i + t \mathbf{e_v} ) + f^M_{i'} (z^*_{i'} - t \mathbf{e_v} )  +  \sum_{j\in [k], j\neq i,i'} f^M_j(z^*_j)
	\end{equation*}
	is also linear for any $v\in V$ and $i \neq i' \in [k]$, since it is the sum of linear functions (on $t$). In particular, given any $v \in V$ such that there exist $i \neq i' \in [k]$ with $z^*_i(v),z^*_{i'}(v)>0$, there is always a choice so that increasing one component and decreasing the other by the same amount does not decrease the objective value. We use this as follows.
	
	Let $v \in V$ be such that there exist $i \neq i' \in [k]$ with $z^*_i(v),z^*_{i'}(v)>0$. Then, we either set $z^*_i(v) = z^*_i(v) + z^*_{i'}(v)$ and $z^*_{i'}(v) = 0$, or $z^*_{i'}(v) = z^*_i(v) + z^*_{i'}(v)$ and $z^*_i(v) = 0$, whichever does not decrease the objective value. We repeat until the vectors $z^*_i$ have pairwise disjoint support. Let us denote these new vectors by $\hat{z}_i$ and let $\hat{z}= \sum_{i\in [k]} \hat{z}_i$. Then notice that the vector $z^* = \sum_{i\in [k]} z^*_i$ remains invariant after performing each of the above updates (i.e. $\hat{z} = z^*$), and hence the new vectors $\hat{z}_i$ remain feasible.

	{\sc STEP 3.}
	In the last step we use the function $g$ defined in (\ref{g-function}), with sets $V_i$ corresponding to the support of the $\hat{z}_i$.
	Given our $\alpha$-approximation rounding assumption for (SA-ME), we can round $\hat{z}$ 
	to find a set $\hat{S}$ such that $g(\hat{S})\geq \alpha g^M(\hat{z})$.
	Then, by setting $\hat{S}_i = \hat{S} \cap V_i$ we obtain a MA solution satisfying
	\begin{equation*}
	\sum_{i \in [k]} f_i(\hat{S}_i) = g(\hat{S}) \geq \alpha g^M(\hat{z}) = \alpha \sum_i f^M_i (\hat{z}_i) \geq \alpha \sum_i f^M_i (z^*_i) \geq \alpha (1-1/e) OPT_{MA},
	\end{equation*}
	where the second equality follows from Proposition \ref{prop:max-g-function}.
	This completes the proof for monotone objectives.
	
	In the nonmonotone case the proof is very similar. 
	Here we restrict our attention to downwards closed families, since then we can get a $0.385$-approximation at STEP 1 via Theorem \ref{thm:multilinear-solve-monot}. We then apply STEP 2 and 3 in the same fashion as we did for monotone objectives. This leads to a $0.385 \cdot \alpha(n)$-approximation for the multi-agent problem.

\qed \end{proof}

\subsection{Invariance under the lifting reduction}
\label{sec:reduction-properties}

In Section \ref{sec:max-MA-gap} we established a MA gap of $(1-1/e)$ for monotone objectives and of $0.385$ for nonmonotone objectives and downwards closed families based on the multilinear formulations. In this section we describe several families with an (optimal) MA gap of $1$. Examples of these family classes include spanning trees, matroids, and $p$-systems. Moreover, the reduction in this case is completely black box, and hence do not depend on the multilinear (or some other particular) formulation.

We saw in Section \ref{sec:lifting-reduction} how if the original functions $f_i$ are all submodular, then the lifted function $f$ is also submodular. We now focus on the properties of the original families $\F_i$ and $\F$ that are also preserved under the lifting reduction. We show, for instance, that if the family $\F$ induces a matroid (or more generally a $p$-system) over the original ground set $V$, then so does the family $\F'$ over the lifted space $E$.
We summarize these results in Table \ref{table:properties-preserved}, and present most of the proofs in this section.
We next discuss some of the algorithmic consequences of these invariance results.


\begin{table}[H]
	\caption{Invariant properties under the lifting reduction}
	\label{table:properties-preserved}
	\resizebox{\linewidth}{!}{
		\begin{tabular}{|c|c|c|c|}
			\hline
			& \textbf{Multi-agent problem} & \textbf{Single-agent (i.e. reduced) problem} & Result\tabularnewline
			\hline
			1 & $f_i$ submodular & $f$ submodular & \cite{lehmann2001combinatorial} \tabularnewline
			\hline
			2 & $f_i$ monotone & $f$ monotone & \cite{lehmann2001combinatorial} \tabularnewline
			\hline
			3 & $(V,\F)$ a $p$-system & $(E,\F')$ a $p$-system & Prop \ref{prop:p-systems} \tabularnewline
			\hline
			4 & $\F$ = bases of a $p$-system & $\F'$ = bases of a $p$-system & Corollary \ref{cor:p-systems bases} \tabularnewline
			\hline
			5 & $(V,\F)$ a matroid & $(E,\F')$ a matroid & Corollary \ref{cor:matroid-invariant}\tabularnewline
			\hline
			6 & $\F$ = bases of a matroid & $\F'$ = bases of a matroid & Corollary \ref{cor:matroid-bases} \tabularnewline
			\hline
			7 & $(V,\F)$ a $p$-matroid intersection & $(E,\F')$ a $p$-matroid intersection & Appendix \ref{sec:Appendix-Invariance} \tabularnewline
			\hline
			8 & $(V,\F_{i})$ a matroid for all $i\in[k]$ & $(E,\H)$ a matroid & Appendix \ref{sec:Appendix-Invariance} \tabularnewline
			\hline
			9 & $\F_{i}$ a ring family for all $i\in[k]$ & $\H$ a ring family & Appendix \ref{sec:Appendix-Invariance} \tabularnewline
			\hline
			10 & $\F=$ forests (resp. spanning trees) & $\F'=$ forests (resp. spanning trees) &  Section \ref{sec:reduction-properties}\tabularnewline
			\hline
			11 & $\F=$ matchings (resp. perfect matchings) & $\F'=$ matchings (resp. perfect matchings) &  Section \ref{sec:reduction-properties}\tabularnewline
			\hline
			12 & $\F=$ $st$-paths & $\F'=$ $st$-paths &  Section \ref{sec:reduction-properties}\tabularnewline
			\hline
		\end{tabular}
	}
\end{table}


In the setting of MASO (i.e. (\ref{eqn:MA})) this invariance allows us to leverage several results
from the single-agent to the multi-agent setting. These are based on the following result, which uses
the fact that the size of the lifted space $E$ is $nk$.
\begin{theorem}
	\label{thm:max-invariance1}
	Let $\F$ be a matroid, a $p$-matroid intersection, or a $p$-system. If there is a (polytime) $\alpha(n)$-approximation algorithm for monotone (resp. nonmonotone) SO($\F$) maximization (resp. minimization),
	then there is a (polytime) $\alpha(nk)$-approximation algorithm for monotone (resp. nonmonotone) MASO($\F$) maximization (resp. minimization).
\end{theorem}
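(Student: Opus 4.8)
The plan is to apply the lifting reduction of Section~\ref{sec:lifting-reduction} and then invoke, essentially verbatim, the invariance facts collected in Table~\ref{table:properties-preserved}. Since here every $\F_i=2^V$, the capacitated constraint is vacuous ($\H=2^E$ and $\L=\F'$), so a MASO($\F$) instance on $V$ with $|V|=n$ lifts to the single-agent instance $\max/\min\ f(S):\ S\in\F'$ over the ground set $E=[k]\times V$, which has exactly $nk$ elements. Here $f(S)=\sum_{i\in[k]}f_i(S_i)$ for the unique decomposition $S=\biguplus_{i\in[k]}(\{i\}\times S_i)$, and $\F'=\{S\subseteq E:\ S_1\uplus\cdots\uplus S_k\in\F\}$.

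First I would record the structural inputs. By rows~1 and~2 of Table~\ref{table:properties-preserved} the lifted objective $f$ is nonnegative submodular, and monotone whenever the $f_i$ are. By Proposition~\ref{prop:p-systems}, Corollary~\ref{cor:matroid-invariant}, and the argument of Appendix~\ref{sec:Appendix-Invariance} (rows~3, 5 and~7), the lifted family $\F'$ over $E$ is a $p$-system, a matroid, or a $p$-matroid intersection exactly when $\F$ over $V$ is. Thus the lifted problem is again an instance of SO($\cdot$) of the \emph{same} type, merely over a larger ground set; and since $\F'$ is an independence system it is automatically downward closed when $\F$ is, so any downward-closedness hypothesis that the SO approximation might require (in the nonmonotone case) is inherited.

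Next I would check that the reduction costs nothing algorithmically: the correspondence $S\leftrightarrow(S_1,\ldots,S_k)$ is a polytime-computable bijection; a value oracle for $f$ follows from $k$ calls to the oracles for the $f_i$; and an independence/rank/separation oracle for $\F'$ follows from a single call to the corresponding oracle for $\F$ evaluated at $S_1\uplus\cdots\uplus S_k$. Hence the assumed polytime $\alpha(\cdot)$-approximation for SO($\cdot$) may be run on the lifted instance, and since that instance lives on $nk$ elements it returns a feasible $\hat S\in\F'$ within a factor $\alpha(nk)$ of the lifted optimum. Decoding $\hat S$ as $(\hat S_1,\ldots,\hat S_k)$ yields a feasible MASO($\F$) solution with $\sum_{i\in[k]}f_i(\hat S_i)=f(\hat S)$; since the bijection preserves both feasibility and value, the two problems share the same optimum, so this is the desired $\alpha(nk)$-approximation. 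The argument is blind to the direction of optimization and to monotonicity, so it covers monotone and nonmonotone maximization and minimization uniformly.

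The main obstacle is not in this argument, which is essentially bookkeeping and a ground-set size count, but in the underlying structural claim that being a matroid / $p$-matroid intersection / $p$-system is inherited by $\F'$; that is precisely the content of Proposition~\ref{prop:p-systems} and its corollaries, which I would cite rather than reprove here.
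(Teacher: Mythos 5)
Your proposal is correct and is essentially the paper's own argument: the paper derives Theorem~\ref{thm:max-invariance1} from the lifting reduction of Section~\ref{sec:lifting-reduction} together with the invariance results of Table~\ref{table:properties-preserved} (Proposition~\ref{prop:p-systems}, Corollaries~\ref{cor:matroid-invariant} and \ref{Corollary p-matroid intersection}), noting exactly as you do that the lifted ground set $E=[k]\times V$ has $nk$ elements. Your additional bookkeeping (oracle access for $f$ and $\F'$, downward-closedness of $\F'$ when $\F$ is downward closed, and the value/feasibility-preserving bijection) is accurate and matches what the paper leaves implicit.
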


In both the monotone and nonmonotone maximization settings, the approximation factors $\alpha(n)$ for the family classes described in the theorem above are independent 
of $n$. Hence, we immediately get that  
$\alpha(nk)=\alpha(n)$ for these cases, and thus approximation factors for the corresponding MA problems are the same as for the SA versions.
In our MA gap terminology this implies a MA gap of 1 for such problems.

In the setting of CMASO (i.e. (\ref{ma})) the results described on entries $8$ and $9$ of Table \ref{table:properties-preserved} 
provide additional modelling flexibility. This allows us to combine several constraints while keeping approximation factors fairly good.
For instance, for a monotone maximization instance of CMASO
where $\F$ corresponds to a $p$-matroid intersection and the $\F_i$ are all matroids, the above invariance results
lead to a $(p+1+\epsilon$)-approximation.

We now prove some of the results from Table \ref{table:properties-preserved}.
We start by presenting some definitions that will be useful.
For a subset of edges $S \subseteq E$ we define its \emph{coverage} $cov(S)$ as the set of nodes $v \in V$ saturated by $S$.
That is, $v\in cov(S)$ if there exists $i\in [k]$ such that $(i,v)\in S$.
We then note that by definition of $\F'$ it is straightforward to see that
for each $S \subseteq E$ we have that
\begin{equation}
	\label{def:family-H}
	S \in \F' \iff cov(S) \in \F \mbox{ and } |S|=|cov(S)|.
\end{equation}

For a set $S \subseteq E$, a set $B\subseteq S$ is called a \emph{basis} of $S$ if
$B$ is an inclusion-wise maximal independent subset of $S$. 
Our next result describes how bases and their cardinalities behave under the lifting reduction.

\begin{lemma}
	\label{lem:bases-invariance}
	Let $S$ be an arbitrary subset of $E$. Then for any basis $B$ (over $\F'$) of $S$ there
	exists a basis $B'$ (over $\F$) of $cov(S)$ such that $|B'|=|B|$.
	Moreover, for any basis $B'$ of $cov(S)$ there exists a basis $B$ of $S$ such that $|B|=|B'|$.
\end{lemma}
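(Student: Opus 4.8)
The plan is to reduce everything to the characterization (\ref{def:family-H}): a set $T\subseteq E$ lies in $\F'$ exactly when $cov(T)\in\F$ and $|T|=|cov(T)|$, where the cardinality condition merely says that no vertex of $V$ is covered twice by $T$. The key point is that $cov$ then sets up a cardinality-preserving correspondence between the members of $\F'$ contained in $S$ and the members of $\F$ contained in $cov(S)$ (the inverse being ``pick one incident edge per vertex''). So both halves of the lemma should follow by pushing a basis through $cov$, respectively lifting one back, and then checking that inclusion-maximality is preserved.

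For the first direction I would take a basis $B$ of $S$ over $\F'$ and set $B' := cov(B)$. By (\ref{def:family-H}) we immediately get $B'\in\F$ and $|B'|=|B|$, and clearly $B'\subseteq cov(S)$. The only thing to verify is maximality of $B'$ in $cov(S)$: if some $v\in cov(S)\setminus B'$ satisfied $B'\cup\{v\}\in\F$, I would pick an edge $(i,v)\in S$ witnessing $v\in cov(S)$ and note $cov(B\cup\{(i,v)\})=B'\cup\{v\}\in\F$ while, since $v\notin cov(B)$, the cardinality condition $|B\cup\{(i,v)\}|=|cov(B\cup\{(i,v)\})|$ still holds; hence $B\cup\{(i,v)\}\in\F'$, contradicting maximality of $B$.

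For the second direction I would take a basis $B'$ of $cov(S)$ over $\F$ and lift it: for each $v\in B'$ choose arbitrarily one edge $e_v=(i_v,v)\in S$ incident to $v$ (distinct $v$'s give distinct edges), and set $B:=\{e_v: v\in B'\}$. Then $cov(B)=B'$ and $|B|=|B'|$, so $B\in\F'$ by (\ref{def:family-H}), and $B\subseteq S$. Maximality of $B$ over $S$ follows from maximality of $B'$: if $(i,v)\in S\setminus B$ could be adjoined keeping membership in $\F'$, then either $v\notin cov(B)=B'$, in which case $B'\cup\{v\}\in\F$ contradicts maximality of $B'$; or $v\in B'$, in which case adjoining $(i,v)$ strictly increases $|B|$ without changing $cov(B)$, violating the cardinality condition in (\ref{def:family-H}) and hence contradicting $B\cup\{(i,v)\}\in\F'$.

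I do not expect a genuine obstacle: the statement is essentially bookkeeping around (\ref{def:family-H}). The only mild care needed is, in each maximality argument, to split into the case where the candidate new edge covers a fresh vertex (handled by the other set's maximality) versus an already-covered vertex (handled by the ``$|T|=|cov(T)|$'' clause). Worth noting that no downward-closedness or matroid/$p$-system structure of $\F$ is used here; this lemma is purely about how $cov$ interacts with inclusion-maximal members of $\F$ and $\F'$, and the structural hypotheses only enter later when one wants to conclude that $\F'$ is itself a matroid or $p$-system.
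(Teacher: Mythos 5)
Your proof is correct and follows essentially the same route as the paper's: take $B':=cov(B)$ for the forward direction, lift $B'$ back by choosing one incident edge $(i_v,v)\in S$ per $v\in B'$ for the reverse, and derive maximality in each case from (\ref{def:family-H}). Your explicit case split in the second direction (fresh vertex versus already-covered vertex) is in fact slightly more careful than the paper's version, which handles the already-covered case only implicitly via the cardinality clause.
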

\begin{proof}
	For the first part, let $B$ be a basis of $S$ and take $B':=cov(B)$. Since $B \in \F'$
	we have by (\ref{def:family-H}) that $B'\in \F$ and $|B'| = |B|$. Now, if $B'$ is not a basis of $cov(S)$ then we can
	find an element $v \in cov(S)-B'$ such that $B'+v \in \F$. 
	Moreover, since $v \in cov(S)$ there exists $i \in [k]$ such that $(i,v)\in S$.
	But then we have that $B+(i,v) \subseteq S$ and
	$B+(i,v)\in \F'$, a contradiction with the fact that $B$ was a basis of $S$.
	
	For the second part, let $B'$ be a basis of $cov(S)$. For each $v \in B'$ let $i_v$ be such that
	$(i_v,v) \in S$, and take $B:=\uplus_{v \in B'} (i_v,v)$. It is clear by definition of $B$ that
	$cov(B)=B'$ and $|B|=|B'|$. Hence $B \in \F'$ by (\ref{def:family-H}). If $B$ is not a basis of $S$
	there exists an edge $(i,v)\in S-B$ such that
	$B+(i,v)\in \F'$. But then by (\ref{def:family-H}) we have that $cov(B+(i,v))\in \F$
	and $B'\subsetneq cov(B+(i,v))\subseteq cov(S)$, a contradiction
	since $B'$ was a basis of $cov(S)$.
\qed 
\end{proof}

We say that $(V,\F)$ is a \emph{$p$-system} if for each $U \subseteq V$, the cardinality
of the largest basis of $U$ is at most $p$ times the cardinality of the smallest basis of $U$.
Our following result is a direct consequence of Lemma \ref{lem:bases-invariance}.

\begin{proposition}
	\label{prop:p-systems}If $(V,\F)$ is a $p$-system, then $(E,\F')$
	is a $p$-system.
\end{proposition}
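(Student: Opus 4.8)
The plan is to unwind the definition of a $p$-system and let Lemma~\ref{lem:bases-invariance} do all the work, since that lemma already transfers basis cardinalities between a set $S \subseteq E$ (with respect to $\F'$) and its coverage $cov(S) \subseteq V$ (with respect to $\F$). So I would fix an arbitrary $S \subseteq E$ and aim to show that the largest basis of $S$ over $\F'$ has size at most $p$ times the smallest basis of $S$ over $\F'$.

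First I would extract the following consequence of the first half of Lemma~\ref{lem:bases-invariance}: every basis $B$ of $S$ (over $\F'$) has the same cardinality as some basis of $cov(S)$ (over $\F$). Hence the size of the largest basis of $S$ is at most the size of the largest basis of $cov(S)$, and the size of the smallest basis of $S$ is at least the size of the smallest basis of $cov(S)$ (the second half of the lemma shows these inequalities are in fact equalities, but only one direction of each is needed here).

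Then the conclusion is immediate. Writing $L$ and $s$ for the sizes of the largest and smallest bases of $cov(S)$ respectively, the $p$-system property of $(V,\F)$ gives $L \le p\,s$. Combining with the previous step, the largest basis of $S$ has size at most $L \le p\,s \le p \cdot (\text{size of the smallest basis of } S)$. Since $S \subseteq E$ was arbitrary, $(E,\F')$ is a $p$-system.

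I do not expect any genuine obstacle: all the combinatorial content already sits in Lemma~\ref{lem:bases-invariance}, and this proposition is pure bookkeeping on top of it. The only point needing a moment's care is to apply the lemma in the direction that controls the largest basis of $S$ from above and the smallest basis of $S$ from below simultaneously, rather than conflating ``a basis of $cov(S)$ of the same size exists'' with ``the extremal bases correspond''; stating the equal-cardinality observation up front avoids that pitfall.
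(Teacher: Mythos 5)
Your proof is correct and follows the same route the paper intends: the paper states the proposition as a direct consequence of Lemma~\ref{lem:bases-invariance}, deducing that the extremal basis cardinalities of $S$ over $\F'$ match those of $cov(S)$ over $\F$ and then invoking the $p$-system property of $(V,\F)$. Your observation that only the first half of the lemma (giving $u(S)\le u(cov(S))$ and $\ell(S)\ge \ell(cov(S))$) is needed is a accurate and slightly economical refinement of the same argument.
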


\begin{corollary}
	\label{cor:p-systems bases}
	If $\F$ corresponds to the set of bases of a $p$-system $(V,\I)$, then $\F'$ also corresponds to the set of bases of some $p$-system $(E,\I')$.
\end{corollary}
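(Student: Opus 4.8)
The plan is to reduce Corollary~\ref{cor:p-systems bases} to Proposition~\ref{prop:p-systems} by exhibiting an explicit $p$-system $(E,\I')$ on the lifted ground set whose bases are exactly $\F'$. The natural candidate is $\I' := \{ S \subseteq E : S \in \mathcal{J}' \text{ where } \mathcal{J}'\text{ is the lift of } \I\}$; that is, apply the lifting-reduction construction not to the base family $\F$ but to the underlying independence system $\I$, and set $\I' = \{S \subseteq E : S_1 \uplus \cdots \uplus S_k \in \I\} \cap \{S : |S| = |cov(S)|\}$, equivalently $S \in \I' \iff cov(S) \in \I$ and $|S| = |cov(S)|$, matching the characterization (\ref{def:family-H}).

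First I would record that, since $(V,\I)$ is a $p$-system, Proposition~\ref{prop:p-systems} immediately gives that $(E,\I')$ is a $p$-system. So the only thing left is to check that the set of bases of $(E,\I')$ coincides with $\F'$ (the lift of $\F$). For this I would use Lemma~\ref{lem:bases-invariance} together with the observation that, because $\F$ is the base family of $\I$, we have $\F' = \{S \subseteq E : S \in \I', \ cov(S) \in \F\}$ — wait, more carefully: I claim $S$ is a basis of $E$ over $\I'$ iff $cov(S) \in \F$ and $|S| = |cov(S)|$, which is precisely the defining condition (\ref{def:family-H}) for membership in $\F'$. The forward direction: if $S$ is a basis of $(E,\I')$ then $S$ is a basis of the whole ground set $E$, so by the second part of Lemma~\ref{lem:bases-invariance} applied with ``$S = E$'' (here $cov(E) = V$), $|S|$ equals the cardinality of some basis of $V$ over $\I$, i.e. the size of a member of $\F$; and $cov(S)$, being a maximal independent set inside $cov(E) = V$ by the first part of the lemma's argument, lies in $\F$, with $|cov(S)| = |S|$. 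The reverse direction: if $cov(S) \in \F$ and $|S| = |cov(S)|$ then $S \in \I'$, and $S$ cannot be extended, since extending $S$ by an edge $(i,v)$ with $S + (i,v) \in \I'$ would force $cov(S) + v \in \I$ with $v \notin cov(S)$ (as $|S|=|cov(S)|$ forces injectivity of $S \to cov(S)$), contradicting maximality of $cov(S) \in \F$.

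The bookkeeping obstacle — and the step I expect to need the most care — is the case distinction hidden in ``$cov(S) + v$'': when we add an edge $(i,v)$ to $S$, either $v \notin cov(S)$ (so $cov(S)$ strictly grows and we get a contradiction with $cov(S) \in \F$), or $v \in cov(S)$ already (so $cov(S+(i,v)) = cov(S)$ but now $|S+(i,v)| > |cov(S+(i,v))|$, violating the cardinality condition in (\ref{def:family-H}) and hence $S+(i,v) \notin \I'$). Both cases confirm that $S$ is already a basis; I just need to state them cleanly. The rest is a direct appeal to Proposition~\ref{prop:p-systems} and Lemma~\ref{lem:bases-invariance}, so the proof is short.

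\begin{proof}
	Let $(V,\I)$ be the $p$-system whose base family is $\F$, and apply the lifting reduction of Section~\ref{sec:lifting-reduction} to $\I$ in place of $\F$; denote the resulting family on $E$ by
	\[
	\I' := \{ S \subseteq E : S_1 \uplus \cdots \uplus S_k \in \I \}.
	\]
	Exactly as in (\ref{def:family-H}), one checks $S \in \I' \iff cov(S) \in \I$ and $|S| = |cov(S)|$. Since $(V,\I)$ is a $p$-system, Proposition~\ref{prop:p-systems} gives that $(E,\I')$ is a $p$-system. It remains to show that the bases of $(E,\I')$ are precisely the members of $\F'$.

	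Suppose $S$ is a basis of $(E,\I')$, i.e. an inclusion-wise maximal member of $\I'$. By Lemma~\ref{lem:bases-invariance} (applied with ``$S$'' equal to $E$, so $cov(E)=V$), $cov(S)$ is contained in a basis $B'$ of $V$ over $\I$ with $|S|$ no larger than $|B'|$; but maximality of $S$ forces $cov(S)=B'$, for otherwise picking $v \in B' - cov(S)$ and $i$ with $(i,v) \in E$ would yield $S + (i,v) \in \I'$. Hence $cov(S) \in \F$ and $|S| = |cov(S)|$, so $S \in \F'$.

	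Conversely, let $S \in \F'$, so $cov(S) \in \F$ and $|S| = |cov(S)|$; in particular $S \in \I'$ and the map sending an edge of $S$ to its $V$-endpoint is injective. If $S$ were not maximal in $\I'$, there would be $(i,v) \in E - S$ with $S + (i,v) \in \I'$. If $v \notin cov(S)$ then $cov(S + (i,v)) = cov(S) + v \in \I$, contradicting that $cov(S) \in \F$ is a basis. If $v \in cov(S)$ then $cov(S+(i,v)) = cov(S)$ while $|S+(i,v)| = |cov(S)| + 1 > |cov(S+(i,v))|$, contradicting $S + (i,v) \in \I'$. Hence $S$ is a basis of $(E,\I')$, and so $\F'$ is exactly the set of bases of the $p$-system $(E,\I')$.
\qed
\end{proof}
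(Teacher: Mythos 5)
Your proof is correct and follows the same route as the paper: define $\I'$ as the lift of the underlying independence system $\I$ (so that $S\in\I'$ iff $cov(S)\in\I$ and $|S|=|cov(S)|$), invoke Proposition~\ref{prop:p-systems} to get that $(E,\I')$ is a $p$-system, and identify $\F'$ with its bases. The only difference is that the paper dismisses the last identification as ``straightforward to check,'' whereas you write out the two-case maximality argument explicitly — which is a correct filling-in of that step.
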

\begin{proof}
	Consider $(E,\I')$ where $\I':=\{S\subseteq E: cov(S)\in \I \mbox{ and } |cov(S)|=|S|\}$.
	Then by Proposition \ref{prop:p-systems} we have that $(E,\I')$ is a $p$-system.
	It is now straightforward to check that $\F'$ corresponds precisely to the set of bases of $(E,\I')$.
\qed
\end{proof}

The following two results follow from Proposition \ref{prop:p-systems} and Corollary \ref{cor:p-systems bases} and the fact that matroids are precisely the class of 1-systems.
\begin{corollary}
\label{cor:matroid-invariant}If $(V,\F)$ is a matroid, then $(E,\F')$
is a matroid.
\end{corollary}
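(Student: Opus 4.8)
The plan is to deduce this immediately from Proposition \ref{prop:p-systems} together with the standard characterization of matroids as exactly the $1$-systems. Recall that an independence system $(V,\I)$ is a matroid precisely when, for every $U \subseteq V$, all bases (inclusion-wise maximal independent subsets) of $U$ have the same cardinality; equivalently, the largest basis of $U$ has cardinality at most $1$ times that of the smallest, i.e. $(V,\I)$ is a $1$-system.

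First I would note that, since $(V,\F)$ is a matroid, it is in particular a $1$-system. Applying Proposition \ref{prop:p-systems} with $p = 1$ then yields that $(E,\F')$ is a $1$-system, and invoking the characterization in the reverse direction gives that $(E,\F')$ is a matroid. Implicit here is that $\F'$ is an independence system (downward closed), which is already part of the conclusion of Proposition \ref{prop:p-systems}; it can also be checked directly from the description in (\ref{def:family-H}), namely $\F' = \{S \subseteq E : cov(S) \in \F \text{ and } |S| = |cov(S)|\}$, since both conditions are inherited by subsets ($cov(T) \subseteq cov(S)$ and no node is multiply covered within a subset of $S$).

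There is essentially no obstacle here: the real content was already isolated in Lemma \ref{lem:bases-invariance} and Proposition \ref{prop:p-systems}, which establish the cardinality-preserving correspondence between bases of $S$ over $\F'$ and bases of $cov(S)$ over $\F$. If a self-contained argument were preferred, one could instead verify the matroid exchange axiom for $\F'$ directly: given $S, T \in \F'$ with $|T| < |S|$, pass to $cov(T), cov(S) \in \F$ and observe that $|cov(T)| = |T| < |S| = |cov(S)|$, so the exchange axiom in $(V,\F)$ produces $v \in cov(S) \setminus cov(T)$ with $cov(T) + v \in \F$; choosing an edge $(i,v) \in S$ then gives $T + (i,v) \subseteq E$ with $cov(T + (i,v)) = cov(T) + v \in \F$ and $|T + (i,v)| = |cov(T+(i,v))|$, so $T + (i,v) \in \F'$ by (\ref{def:family-H}), as required. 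Either way the verification is routine.
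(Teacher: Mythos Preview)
Your proposal is correct and matches the paper's own argument: the corollary is obtained directly from Proposition~\ref{prop:p-systems} with $p=1$ together with the fact that matroids are exactly the $1$-systems. Your additional direct verification of the exchange axiom is a nice self-contained alternative, but it goes beyond what the paper provides.
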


\begin{corollary}
	\label{cor:matroid-bases}Assume $\F$ is the set of bases of some matroid
	$\M=(V,\I)$, then $\F'$ is the set of bases of some matroid $\M'=(E,\I')$. 
\end{corollary}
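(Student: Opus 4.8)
The plan is to deduce this corollary from Corollary~\ref{cor:p-systems bases} by specializing to $p=1$, using the classical fact that matroids are exactly the $1$-systems. Concretely, a pair $(V,\I)$ with $\I$ closed under taking subsets is a matroid if and only if for every $U\subseteq V$ all bases of $U$ have the same cardinality, which is precisely the statement that $(V,\I)$ is a $1$-system in the terminology introduced just before Proposition~\ref{prop:p-systems}. So the first step is simply to observe that the matroid $\M=(V,\I)$ in the hypothesis is a $1$-system, and that $\F$ (its set of bases) is therefore the set of bases of a $1$-system.

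The second step is to apply Corollary~\ref{cor:p-systems bases} with $p=1$: it produces the $1$-system $(E,\I')$ with $\I':=\{S\subseteq E: cov(S)\in\I \text{ and } |cov(S)|=|S|\}$, and tells us that $\F'$ is exactly the set of bases of $(E,\I')$. The final step is to note that a $1$-system is a matroid (again by the equivalence above, together with the fact, already used inside Corollary~\ref{cor:p-systems bases}, that $\I'$ is downward closed), so $\M':=(E,\I')$ is the desired matroid and $\F'$ is its set of bases. This is the whole argument; it is essentially the same one-line reduction that yields Corollary~\ref{cor:matroid-invariant} from Proposition~\ref{prop:p-systems}.

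I do not expect a genuine obstacle here. The only point that deserves a sentence of care is the equivalence ``matroid $\iff$ $1$-system'': one direction is immediate from the definition of a matroid (all maximal independent subsets of any $U$ are bases of $U$ and have equal size), and the other follows because the equal-cardinality-of-bases property for all subsets is well known to imply the augmentation axiom. If one preferred to avoid invoking this equivalence, an alternative (but strictly longer) route would be to verify the matroid exchange axiom for $\I'$ directly from Lemma~\ref{lem:bases-invariance}, transporting an exchange in $\M$ along the correspondence between bases of $S$ and bases of $cov(S)$; but the $p$-system packaging already in place makes the short proof via Corollary~\ref{cor:p-systems bases} the natural one.
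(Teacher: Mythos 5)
Your proof is correct and is exactly the paper's argument: the paper derives Corollary~\ref{cor:matroid-bases} by specializing Corollary~\ref{cor:p-systems bases} to $p=1$ and invoking the fact that matroids are precisely the $1$-systems, which is what you do. Your extra sentence justifying the ``matroid $\iff$ $1$-system'' equivalence (and the downward closure of $\I'$) is a harmless elaboration of the same route.
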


We now focus on families defined over the set of edges of some graph $G$.
To be consistent with our previous notation we denote
by $V$ the set of edges of $G$, since this is the ground set of
the original problem. 
The lifting reduction is based
on the idea of making $k$ disjoint copies for each original element,
and visualize the new ground set (or lifted space) as edges in a bipartite
graph. However, when the original ground set corresponds to the set
of edges of some graph $G$, we may just think of the lifted
space as being the set of edges of the graph $G'$ obtained by taking
$k$ disjoint copies of each original edge. We think of the
edge that corresponds to the $ith$ copy of $v$ as assigning element
$v$ to agent $i$. 
We can formalize this by defining a mapping $\pi:E\to E'$
that takes an edge $(i,v)\in E$ from the lifted space to the edge in $G'$ that corresponds
to the $ith$ copy of the original edge $v$. It is clear that $\pi$
is a bijection.
Moreover, notice that given any graph $G$ and family
$\F$ of forests (as subset of edges) of $G$, the bijection $\pi : E \to E'$ also
satisfies that $\bar{\F}:=\{\pi (S): S \in \F'\}$ is precisely the family of forests of $G'$.
That is, there is a one-to-one correspondence between forests of $G'$ and assignments
$S_1 \uplus S_2 \cdots \uplus S_k = S \in \F$ for the original MA problem where $S$ is a forest of $G$.
The same holds for spanning trees, matchings, perfect matchings, and st-paths.

This observation becomes  algorithmically useful given that $G'$ has 
the same number of nodes of $G$. 
Thus, any approximation factor or hardness result for the above combinatorial structures
that depend on the number of nodes (and not on the number of edges) of the underlying graph,
will remain the same in the MA setting.
We note that this explains why for the family of spanning trees, perfect matchings, and $st$-paths,
the approximations factors revealed in \cite{goel2009approximability} for the monotone minimization problem 
are the same for both the SA and MA versions. Our next result summarizes this.

\begin{theorem}
	\label{thm:max-invariance2}
	Let $\F$ be the family (seen as edges) of forests, spanning trees, matchings, perfect matchings, or $st$-paths of a
	graph $G$ with $m$ nodes and $n$ edges.
	Then, if there is a (polytime) $\alpha(n)$-approximation algorithm for monotone (resp. nonmonotone) SO($\F$) maximization (resp. minimization), there is a (polytime) $\alpha(nk)$-approximation algorithm for monotone (resp. nonmonotone) MASO($\F$) maximization (resp. minimization). Moreover, if there is a (polytime) $\alpha(m)$-approximation algorithm for monotone (resp. nonmonotone) SO($\F$) maximization (resp. minimization),
	then there is a (polytime) $\alpha(m)$-approximation algorithm for monotone (resp. nonmonotone) MASO($\F$) maximization (resp. minimization).
\end{theorem}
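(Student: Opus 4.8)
The plan is to combine the lifting reduction of Section~\ref{sec:lifting-reduction} with the ``parallel copies'' reformulation recalled just before the statement, and then invoke the assumed single-agent algorithm on the lifted instance while keeping track of the two relevant size parameters.

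First I would instantiate the lifting reduction in the MASO (as opposed to CMASO) regime, where $\F_i = 2^V$ for all $i$: here $\H = 2^E$, so $\L = \F'$, and MASO($\F$) is exactly SO($\F'$) with the lifted objective $f(S)=\sum_{i\in[k]} f_i(S_i)$. By entries $1$ and $2$ of Table~\ref{table:properties-preserved} (the observation of \cite{fisher1978analysis,lehmann2001combinatorial}), $f$ is nonnegative submodular, and monotone whenever the $f_i$ are; a value oracle for $f$ is immediate from value oracles for the $f_i$, at a cost of $k$ queries each. Next I would pass through the bijection $\pi:E\to E'$ recalled before the statement: when $V$ is the edge set of $G$ and $\F$ is the family (of edge sets) of forests / spanning trees / matchings / perfect matchings / $st$-paths of $G$, then $\bar{\F}:=\{\pi(S): S\in\F'\}$ is precisely the same type of family in the graph $G'$ obtained from $G$ by replacing each edge with $k$ parallel copies. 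Composing $f$ with $\pi^{-1}$ gives a nonnegative submodular (monotone) function $\bar{f}$ on $2^{E'}$, still oracle-accessible, so that MASO($\F$) on $G$ becomes exactly SO($\bar{\F}$) on $G'$.

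Then I would read off the two bounds from the two size parameters of $G'$: $G'$ has the same node set as $G$, hence $m$ nodes, but $|E'| = nk$ edges. Running a polytime $\alpha(\cdot)$-approximation for the single-agent problem on $G'$ therefore yields a polytime $\alpha(nk)$-approximation when the guarantee is stated in terms of the number of edges, and a polytime $\alpha(m)$-approximation when it is stated in terms of the number of nodes. Polynomial running time is preserved because $G'$ has size polynomial in $n$ and $k$. The argument is entirely black-box and symmetric in min/max and in monotone/nonmonotone, so all four variants follow at once, and since the bijection works identically for each of the five listed structures, the statement holds for all of them.

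The only point that needs a little care --- though it is essentially asserted in the text preceding the theorem --- is verifying that $\pi$ really carries $\F'$ onto the corresponding family of $G'$ for each of the five structures. Concretely one checks, using the characterization $S\in\F' \iff cov(S)\in\F \text{ and } |S|=|cov(S)|$ from~(\ref{def:family-H}), that any forest / spanning tree / matching / perfect matching / $st$-path of $G'$ uses at most one copy from each parallel class (otherwise the two copies would create a $2$-cycle or force a repeated vertex), so it projects under $\pi^{-1}$ onto the same type of object in $G$; conversely any such object in $G$, together with a choice of one copy per edge, lifts to one in $G'$. I expect this small case-check to be the only mildly technical part; everything else is a direct application of the reduction.
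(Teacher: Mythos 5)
Your proposal is correct and follows essentially the same route as the paper: apply the lifting reduction with $\F_i=2^V$, identify the lifted ground set with the edges of the multigraph $G'$ obtained by replacing each edge of $G$ with $k$ parallel copies via the bijection $\pi$, observe that $\F'$ maps exactly onto the corresponding family (forests, spanning trees, matchings, perfect matchings, $st$-paths) of $G'$, and then read off the $\alpha(nk)$ and $\alpha(m)$ bounds from the fact that $G'$ has $nk$ edges but the same $m$ nodes. Your explicit case-check that these structures use at most one copy per parallel class is exactly the verification the paper leaves implicit.
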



\section{Conclusion}
\label{sec:conclusions}

A number of interesting questions remain. Perhaps the main one being whether the $O(\log^2 (n))$ MA gap for minimization can be improved to $O(\log n)$?  We have shown this is the case for bounded blocker and crossing families. Another question is whether  the $\alpha \log^2 (n)$ and $ \alpha k $ approximations can be made truly black box? I.e., do not depend on the convex formulation.

On separate work (\cite{santiago2016multivariate}) we discuss multivariate submodular objectives.  We show that our reductions for maximization
remain well-behaved algorithmically and this opens up more tractable models.  This is the topic of planned future work.


\bibliographystyle{spmpsci}
\bibliography{REFERENCES}

\appendix
\begin{center}
	\large \textbf{APPENDIX}
\end{center}
\section{Upwards-closed (aka blocking) families}
\label{sec:blocking}

In this section, we give some background for blocking families.
As our work for minimization is restricted to  monotone functions, we can often convert an arbitrary
set family into its upwards-closure (i.e., a blocking version of it) and work with it instead. We discuss this
reduction as well.
The technical details discussed in this section are fairly standard and we include them for completeness. Several of these results have already appeared in \cite{iyer2014monotone}.


\subsection{Blocking families and a natural relaxation for the integral polyhedron}
\label{sec:block}

A set family $\F$, over a ground set $V$ is {\em upwards-closed}
if $F \subseteq F'$ and $F \in \F$, implies that $F' \in \F$; these are sometimes referred to as  {\em blocking families}.
Examples of such families include vertex-covers or set covers more generally,  whereas spanning trees are not.

For a blocking family $\F$ one often works
with the induced   sub-family $\F^{min}$ of minimal sets. Then $\F^{min}$ has the property that it is
a {\em clutter}, that is, $\F^{min}$ does not contain
a pair of comparable sets, i.e., sets $F \subset F'$.
 If $\F$ is a clutter, then $\F=\F^{min}$ and there is an associated {\em blocking} clutter $\B(\F)$, which consists of the minimal sets $B$ such that $B \cap F \neq \emptyset$ for each $F \in \F$. We refer to $\B(\F)$ as the {\em blocker} of $\F$.

 One also checks that for an arbitrary upwards-closed family $\F$, we have the following.
\begin{claim}[Lehman]
	\label{claim:blocker}
	\begin{enumerate}
		\item $F \in \F$ if and only if $F \cap B \neq \emptyset$ for all $B \in \B(\F^{min})$.
		\item $\B(\B(\F^{min})) = \F^{min}$.
	\end{enumerate}
\end{claim}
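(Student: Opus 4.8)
The plan is to derive both statements from a single combinatorial duality together with a minimality argument valid on the finite ground set $V$. The duality is: for any clutter $\C$ on $V$, a set $T\subseteq V$ is a \emph{transversal} of $\C$ (meets every member of $\C$) if and only if $V\setminus T$ contains no member of $\C$ — immediate, since $T\cap C=\emptyset$ exactly when $C\subseteq V\setminus T$. The minimality ingredient, used repeatedly, is that any transversal of $\C$ contains a \emph{minimal} transversal, i.e.\ a member of $\B(\C)$, and dually any member of the upward closure $\C^{\uparrow}$ contains a member of $\C$. Before starting I would dispose of the degenerate cases $\F=\emptyset$ and $\emptyset\in\F$ (equivalently $\F=2^V$), where $\F^{min}$ is $\emptyset$ or $\{\emptyset\}$ and $\B(\F^{min})=\emptyset$; both statements then hold vacuously, so we may assume $\F^{min}$ is a proper clutter.

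For part (1), first note that since $\F$ is upwards closed and $V$ is finite, $F\in\F$ if and only if $F$ contains some $F'\in\F^{min}$ ($\Leftarrow$ is upward closure; $\Rightarrow$ shrink $F$ to an inclusion-minimal subset still in $\F$). Now if $F\in\F$, pick such an $F'\subseteq F$: every $B\in\B(\F^{min})$ meets $F'$ by definition of the blocker, hence meets $F$. Conversely, if $F\notin\F$ then $F$ contains no member of $\F^{min}$, so by the duality above $V\setminus F$ is a transversal of $\F^{min}$ and therefore contains some $B\in\B(\F^{min})$; this $B$ satisfies $B\cap F=\emptyset$, which witnesses the failure of the right-hand condition.

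For part (2), write $\G:=\B(\F^{min})$, which is a clutter, and prove $\B(\G)=\F^{min}$ by a sandwiching argument. \emph{Claim A:} every $F'\in\F^{min}$ is a transversal of $\G$ — this is precisely the defining property of the blocker — hence $F'\supseteq B'$ for some $B'\in\B(\G)$. \emph{Claim B:} every $B'\in\B(\G)$ contains some member of $\F^{min}$; otherwise $V\setminus B'$ is a transversal of $\F^{min}$ and so contains some $B\in\B(\F^{min})=\G$, giving $B\cap B'=\emptyset$ and contradicting that $B'$ meets every member of $\G$. Applying A then B, each $F'\in\F^{min}$ satisfies $F''\subseteq B'\subseteq F'$ for some $B'\in\B(\G)$ and $F''\in\F^{min}$; since $\F^{min}$ is an antichain this forces $F''=B'=F'$, so $\F^{min}\subseteq\B(\G)$. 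Applying B then A and using that $\B(\G)$ is an antichain gives the reverse inclusion $\B(\G)\subseteq\F^{min}$, completing the proof.

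The routine parts are the finiteness/minimality observations and the antichain collapse. The one genuinely load-bearing step is the complementation duality inside Claim B, where one must pass from ``$B'$ contains no member of $\F^{min}$'' to ``$V\setminus B'$ hits every member of $\F^{min}$'' and then invoke the double blocker $\B(\B(\F^{min}))\supseteq\{$sets meeting all of $\G\}$; keeping straight which family a given set is a transversal of, versus a member of, is where care is needed. I would present the duality as a named sublemma so that both parts (1) and (2) can cite it cleanly.
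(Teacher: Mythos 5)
Your proof is correct, and it is the standard argument for Lehman's length-width/blocker identities; the paper states this claim without proof (attributing it to Lehman), so there is no competing argument to compare against, and your complementation duality plus antichain-collapse is exactly how one ``checks'' it. One small imprecision in your degenerate-case remark: with the usual conventions $\B(\emptyset)=\{\emptyset\}$ rather than $\emptyset$ (and $\B(\{\emptyset\})=\emptyset$), though both statements of the claim still hold in these cases, so nothing breaks.
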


\noindent
Thus the significance of blockers is that one may assert membership in an upwards-closed family $\F$  by checking intersections on sets from the blocker $\B(\F^{min})$. If we  define $\B(\F)$ to be the minimal sets which intersect every element of $\F$, then one checks that $\B(\F)=\B(\F^{min})$.
These  observations lead to   a natural relaxation for minimization problems over the integral polyhedron $P(\F) := conv(\{\chi^F: \textit{$F \in \F$}\})$. The {\em blocking formulation} for $\F$ is:
\begin{equation}
\label{eqn:polyhedron}
P^*(\F) = \{z \in \R^V_{\geq 0}: z(B) \geq 1~~\forall B \in \B(\F^{min})=\B(\F)\}.
\end{equation}

\noindent
Clearly we have $P(\F) \subseteq P^*(\F)$.

\subsection{Reducing to blocking families}

Now consider an arbitrary set family $\F$ over $V$. We may define its {\em upwards closure}
by $\F^{\uparrow}=\{F':  F \subseteq F' \textit{ for some $F \in \F$}\}$. In this section we argue that
in order to solve a monotone optimization problem over sets in $\F$ it is often sufficient to work
over its upwards-closure.

As already  noted $\B(\F)=\B(\F^{\uparrow}) = \B (\F^{min})$ and hence one approach
 is via the blocking formulation $P^*(\F)=P^*(\F^{\uparrow})$. This requires two ingredients. First, we  need
 a separation algorithm for the blocking relaxation, but indeed this is often available for many
 natural families such as spanning trees, perfect matchings, $st$-paths, and vertex covers.   The second ingredient  needed is the ability to turn an integral solution $\chi^{F'}$ from  $P^*(\F^{\uparrow})$ or $P(\F^\uparrow)$
 into an integral solution $\chi^F \in P(\F)$. We now argue that this is the case if a polytime separation algorithm is available for the blocking relaxation $P^*(\F^{\uparrow})$ or for the polytope $P(\F):= conv(\{\chi^F: \textit{$F \in \F$}\})$.

 For a polyhedron $P$,  we denote its {\em dominant} by
$P^{\uparrow} := \{z:  z \geq x \textit{~for some~} x \in P \}$.
The following observation is straightforward.

\begin{claim}
	\label{claim:lattice-points}
	Let $H$ be the set of vertices of the hypercube in $\R^V$. Then
	$$H \cap P(\F^\uparrow) = H \cap P(\F)^\uparrow = H \cap P^*(\F^\uparrow).$$
	In particular we have that $\chi^S \in P(\F)^\uparrow \iff \chi^S \in P^*(\F^\uparrow)$.
\end{claim}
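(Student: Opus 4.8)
The plan is to establish the two equalities $H \cap P(\F^\uparrow) = H \cap P(\F)^\uparrow$ and $H \cap P(\F)^\uparrow = H \cap P^*(\F^\uparrow)$ separately; since $H$ is exactly the set of vectors of the form $\chi^S$, the displayed ``in particular'' statement is then nothing more than the equality of the first and last of the three sets.

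First I would isolate the only two pieces of genuinely ``analytic'' content, both elementary facts about $0/1$ points. (i) If $\sum_j \lambda_j \chi^{F_j}$ is a convex combination (so $\lambda_j > 0$, $\sum_j \lambda_j = 1$) that happens to equal a $0/1$ vector $\chi^S$, then $F_j = S$ for every $j$: looking at a coordinate $v \notin S$ forces $\chi^{F_j}(v)=0$ for all $j$, and a coordinate $v \in S$ forces $\sum_j \lambda_j \chi^{F_j}(v) = 1 = \sum_j \lambda_j$, hence $\chi^{F_j}(v) = 1$ for all $j$. (ii) More generally, if $\chi^S \geq \sum_j \lambda_j \chi^{F_j}$ with $\lambda_j > 0$, then $F_j \subseteq S$ for every $j$, again by inspecting coordinates outside $S$. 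Everything beyond this is Lehman's blocking duality (Claim~\ref{claim:blocker}) together with the identity $\B(\F^\uparrow) = \B(\F^{min}) = \B(\F)$ already recorded in the text, and the trivial observation that $\chi^S \in P(\F^\uparrow)$ whenever $S \in \F^\uparrow$ (as $\chi^S$ is then one of the generating points).

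For the first equality: if $\chi^S \in P(\F^\uparrow)$, fact (i) gives $S \in \F^\uparrow$, so some $F \in \F$ has $F \subseteq S$, whence $\chi^S \geq \chi^F \in P(\F)$ and $\chi^S \in P(\F)^\uparrow$. Conversely, if $\chi^S \in P(\F)^\uparrow$, write $\chi^S \geq \sum_j \lambda_j \chi^{F_j}$ with $F_j \in \F$; fact (ii) produces some $F_j \subseteq S$, so $S \in \F^\uparrow$ and hence $\chi^S \in P(\F^\uparrow)$. For the second equality it suffices, by what we just showed, to check that $\chi^S \in P(\F^\uparrow) \iff \chi^S \in P^*(\F^\uparrow)$ for every $S \subseteq V$: the left side holds iff $S \in \F^\uparrow$ (fact (i) one way, trivial the other), while the right side says $|S \cap B| \geq 1$, i.e. $S \cap B \neq \emptyset$, for all $B \in \B(\F^\uparrow) = \B((\F^\uparrow)^{min})$, which by Claim~\ref{claim:blocker}(1) applied to the upwards-closed family $\F^\uparrow$ is precisely the condition $S \in \F^\uparrow$. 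This closes the argument.

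I do not expect a real obstacle here; the proof is short. The only point requiring care is the bookkeeping around blockers — making sure the clutter appearing in the definition of $P^*(\F^\uparrow)$ is the same as $\B((\F^\uparrow)^{min})$ so that Lehman's claim applies verbatim — and being explicit that a $0/1$ point lying in the convex hull of $0/1$ points must coincide with one of them, since that extremality fact is what powers both directions of the first equality.
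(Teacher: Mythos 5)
Your proof is correct. The paper actually states Claim~\ref{claim:lattice-points} without proof (it is introduced as a "straightforward" observation), and your argument — the extremality of $0/1$ points in the hypercube (facts (i) and (ii)) combined with Lehman's blocking duality and the identity $\B(\F^\uparrow)=\B(\F^{min})$ — is exactly the intended verification, so it fills in the omitted details faithfully.
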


We can now use this observation to prove the following.

\begin{lemma}
	\label{lem:dominant-reduction}
	Assume we have a separation algorithm for $P^*(\F^\uparrow)$. Then for any $\chi^{S} \in P^*(\F^\uparrow)$ we can find in polytime $\chi^{M} \in P(\F)$ such that $\chi^{M} \leq \chi^{S}$.
\end{lemma}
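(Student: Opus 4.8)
The idea is to produce $M$ by greedily shrinking $S$, deleting elements one at a time as long as feasibility is preserved. The crucial point is that the separation oracle for $P^*(\F^\uparrow)$ also serves as a \emph{membership oracle} for the set family $\F^\uparrow$ on $0/1$ vectors: for an integral point $\chi^T$, we have $\chi^T(B)\ge 1$ for every blocker $B\in\B(\F)$ if and only if $T\cap B\neq\emptyset$ for all $B\in\B(\F)$, which by Claim~\ref{claim:blocker} is exactly the statement that $T\in\F^\uparrow$ (equivalently $T\supseteq F$ for some $F\in\F$). So a single call of the separation algorithm on $\chi^T$ tells us whether $T\in\F^\uparrow$.

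\textbf{Algorithm.} Set $T_0:=S$; by hypothesis $\chi^{T_0}=\chi^S\in P^*(\F^\uparrow)$, so $T_0\in\F^\uparrow$. List $S=\{v_1,\ldots,v_t\}$ with $t\le n$, and for $j=1,\ldots,t$ call the separation oracle on $\chi^{T_{j-1}\setminus\{v_j\}}$: if it certifies membership in $P^*(\F^\uparrow)$, set $T_j:=T_{j-1}\setminus\{v_j\}$; otherwise set $T_j:=T_{j-1}$. Output $M:=T_t$. This uses at most $n$ oracle calls, hence runs in polynomial time, and $M\subseteq S$ gives $\chi^M\le\chi^S$.

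\textbf{Correctness.} By construction $\chi^{T_j}\in P^*(\F^\uparrow)$ for all $j$, so in particular $M\in\F^\uparrow$. I claim $M$ is in fact a \emph{minimal} member of $\F^\uparrow$. Indeed, fix $v=v_j\in M$; since $v$ was not deleted at step $j$ we have $T_{j-1}\setminus\{v_j\}\notin\F^\uparrow$, and since $M\setminus\{v\}\subseteq T_{j-1}\setminus\{v_j\}$ and $\F^\uparrow$ is upward closed, also $M\setminus\{v\}\notin\F^\uparrow$. Thus no single element can be removed from $M$ while staying in $\F^\uparrow$, which (again by upward-closedness) means no proper subset of $M$ lies in $\F^\uparrow$, i.e. $M$ is minimal. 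Finally, a minimal member of $\F^\uparrow$ belongs to $\F$: choose $F\in\F$ with $F\subseteq M$; if $F\subsetneq M$ then $F\in\F\subseteq\F^\uparrow$ contradicts minimality of $M$, so $F=M\in\F$. Hence $\chi^M$ is one of the defining vertices of $P(\F)=\mathrm{conv}(\{\chi^F:F\in\F\})$, so $\chi^M\in P(\F)$ with $\chi^M\le\chi^S$, as required.

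\textbf{Main obstacle.} There is no serious obstacle here; the only care needed is the logical step that ``local irremovability'' of every element yields global minimality in $\F^\uparrow$, which relies on the upward-closedness of $\F^\uparrow$ and on the fact that once an element is dropped it is never reconsidered. One could alternatively argue entirely on the polyhedral side using Claim~\ref{claim:lattice-points} (which identifies the $0/1$ points of $P^*(\F^\uparrow)$ with those of $P(\F)^\uparrow$), but the combinatorial argument above is the most direct.
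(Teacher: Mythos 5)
Your proof is correct and uses the same greedy element-deletion algorithm as the paper, with the same polynomial bound of at most $n$ oracle calls. The only difference is in how correctness is finished: you argue combinatorially that the output $M$ is a minimal member of $\F^\uparrow$ and hence lies in $\F$ (via Claim~\ref{claim:blocker}), whereas the paper reaches the same conclusion through Claim~\ref{claim:lattice-points} and a convex-combination decomposition of a point of $P(\F)$ dominated by $\chi^{M}$ --- the two arguments are equivalent.
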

\begin{proof}
	Let $S=\{1,2,\ldots,k\}$. We run the following routine until no more elements can be removed:
	\vspace*{10pt}
	
	For $i \in S$\\
	\hspace*{20pt} If $\chi^{S-i} \in P^*(\F^\uparrow)$ then $S=S-i$
	
	\vspace*{10pt}
	Let $\chi^M$ be the output. We show that $\chi^M \in P(\F)$. Since $\chi^M \in P^*(\F^\uparrow)$, by Claim \ref{claim:lattice-points} we know that $\chi^M \in P(\F)^\uparrow$. Then by definition of dominant there exists $x\in P(\F)$ such that $x\leq \chi^M \in P(\F)^\uparrow$. It follows that the vector $x$ can be written as $x = \sum_{i} \lambda_{i} \chi^{U_i}$ for some $U_i \in \F$ and $\lambda_i \in (0,1]$ with $\sum_i \lambda_i =1$. Clearly we must have that $U_i \subseteq M$ for all $i$, otherwise $x$ would have a non-zero component outside $M$.  In addition, if for some $i$ we have $U_i \subsetneq M$, then there must exist some $j \in M$ such that $U_i \subseteq M-j \subsetneq M$. Hence $M-j \in \F^{\uparrow}$, and thus $\chi^{M-j} \in P(\F)^\uparrow$ and $\chi^{M-j} \in P^*(\F^\uparrow)$. But then when component $j$ was considered in the algorithm above, we would have had $S$ such that $M \subseteq S$ and so $\chi^{S-j} \in P^*(\F^\uparrow)$ (that is $\chi^{S-j} \in P(\F)^\uparrow$), and so $j$ should have been removed from $S$, contradiction.
\qed \end{proof}

We point out that for many natural set families $\F$ we can work with the relaxation $P^*(\F^\uparrow)$ assuming that it admits a separation algorithm.
Then, if we have an algorithm which produces  $\chi^{F'} \in P^*(\F^\uparrow)$ satisfying some approximation guarantee for a monotone problem, we can use Lemma \ref{lem:dominant-reduction} to construct in polytime $F \in \F$ which obeys the same guarantee.

Moreover, notice that for Lemma \ref{lem:dominant-reduction} to work we do not need an actual separation oracle for $P^*(\F^\uparrow)$, but rather all we need is to be able to separate over $0-1$ vectors only. Hence, since the polyhedra $P^*(\F^\uparrow), \, P(\F^\uparrow)$ and $P(\F)^\uparrow$ have the same $0-1$ vectors (see Claim \ref{claim:lattice-points}), a separation oracle for either $P(\F^\uparrow)$ or $P(\F)^\uparrow$ would be enough for the routine of Lemma \ref{lem:dominant-reduction} to work. We now show that this is the case if we have a polytime separation oracle for $P(\F)$. The following result shows that if we can separate efficiently over $P(\F)$ then we can also separate efficiently over the dominant $P(\F)^\uparrow$.

\begin{claim}
	If we can separate over a polyhedron $P$ in polytime,  then we can also separate over its dominant $P^\uparrow$ in polytime.
\end{claim}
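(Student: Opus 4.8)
The plan is to reduce separation over $P^{\uparrow}$ to a feasibility question over $P$ itself. Given a query point $y$, note that by definition $y \in P^{\uparrow}$ if and only if there exists $x \in P$ with $x \le y$. So the first step is to form the polyhedron $P \cap \{x : x \le y\}$: we are handed a polytime separation oracle for $P$, and separating over the box $\{x : x \le y\}$ is trivial (check each coordinate of the query and return the violated coordinate inequality if any), so combining the two gives a polytime separation oracle for the intersection. By the standard equivalence of separation and optimization via the ellipsoid method, we can then decide in polytime whether $P \cap \{x : x \le y\}$ is nonempty, and this is exactly the decision ``$y \in P^{\uparrow}$''.

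The second step is to produce, in the case $y \notin P^{\uparrow}$, an explicit separating hyperplane valid for all of $P^{\uparrow}$. Here the key structural observation is that every valid inequality $c^{\top} z \ge \delta$ for $P^{\uparrow}$ must have $c \ge 0$ (otherwise, increasing a coordinate $j$ with $c_j < 0$ keeps us in $P^{\uparrow}$ while driving $c^{\top} z \to -\infty$), and for such $c$ one has $\min_{z \in P^{\uparrow}} c^{\top} z = \min_{x \in P} c^{\top} x$. Thus any hyperplane separating $y$ from $P$ whose normal vector is nonnegative automatically separates $y$ from $P^{\uparrow}$. When $P \cap \{x : x \le y\} = \emptyset$, a Farkas certificate of infeasibility of the system $\{x \in P,\ x \le y\}$ furnishes precisely a nonnegative combination $c$ of the defining inequalities of $P$ together with the multipliers on $x \le y$, satisfying $c^{\top} y < \min_{x \in P} c^{\top} x \le c^{\top} z$ for every $z \in P^{\uparrow}$; in the oracle model this certificate is delivered by the same ellipsoid-method machinery that decided feasibility in the first step.

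The only real subtlety — the ``hard part'' in the sense of requiring care rather than new ideas — is the usual fine print behind invoking the ellipsoid method: one needs the standard technical hypotheses (a known bounding ball, a bound on the facet/encoding complexity, or working in the weak-separation framework of Gr\"otschel--Lov\'asz--Schrijver) for separation and optimization to be polynomially equivalent and for the Farkas multipliers to be extractable. In the applications of interest in this paper, $P$ is a rational polytope contained in the unit cube with polynomially bounded facet complexity, so these hypotheses hold and I would simply cite the GLS equivalence as a black box rather than reprove it. Everything else is routine bookkeeping: the identity $y \in P^{\uparrow} \iff P \cap \{x : x \le y\} \neq \emptyset$, and the verification that the certificate vector $c$ is genuinely nonnegative and genuinely separates $y$ from the dominant.
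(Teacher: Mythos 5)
Your proposal is correct and rests on the same core reduction as the paper: $y \in P^{\uparrow}$ iff the system $x \in P$, $x + s = y$, $s \ge 0$ is feasible, which can be separated over (and hence decided by the ellipsoid method) using the oracle for $P$ together with the trivial constraints. The one place you go beyond the paper's own (rather terse) argument is the second step: the paper stops at deciding membership in the lifted system, whereas you explicitly extract a separating hyperplane for $y$ in the original space, observing that a Farkas certificate of infeasibility yields a nonnegative normal $c$ with $c^{\top}y < \min_{x\in P} c^{\top}x$, and that nonnegativity of $c$ is exactly what makes the inequality valid for all of $P^{\uparrow}$. That addition is welcome rather than redundant — it is the part a careful reader would want spelled out — and your caveats about the GLS machinery are the standard ones the paper also implicitly assumes.
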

\begin{proof}
	Given a vector $y$, we can decide whether $y \in P^\uparrow$ by solving
	\begin{align*}
	x + s = y \\
	x \in P \\
	s \geq 0.
	\end{align*}
	Since can we easily separate over the first and third constraints, and a separation oracle for $P$ is given (i.e. we can also separate over the set of constraints imposed by the second line), it follows that we can separate over the above set of constraints in polytime.
\qed \end{proof}

Now we can apply the same mechanism from Lemma \ref{lem:dominant-reduction} to turn feasible sets from $\F^{\uparrow}$ into feasible sets in $\F$.

\begin{corollary}
	\label{cor:dominant-reduction2}
	Assume we have a separation algorithm for $P(\F)^\uparrow$. Then for any $\chi^{S} \in P(\F)^\uparrow$ we can find in polytime $\chi^{M} \in P(\F)$ such that $\chi^{M} \leq \chi^{S}$.
\end{corollary}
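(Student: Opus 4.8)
The plan is to reuse the greedy pruning routine from the proof of Lemma~\ref{lem:dominant-reduction}, simply swapping the oracle for $P^*(\F^\uparrow)$ for the one we are now handed for $P(\F)^\uparrow$. Writing $S=\{1,2,\ldots,k\}$, I would pass once through the elements $i\in S$ and delete $i$ from $S$ whenever $\chi^{S-i}\in P(\F)^\uparrow$; let $\chi^M$ be the characteristic vector of the set that survives. Each test ``is $\chi^{S-i}\in P(\F)^\uparrow$?'' costs a single call to the assumed separation oracle for $P(\F)^\uparrow$ (for a $0$--$1$ point the oracle either certifies membership or returns a violated inequality), so the routine is polynomial. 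I would also remark that, by the preceding Claim, a separation oracle for $P(\F)$ already yields one for $P(\F)^\uparrow$, so the hypothesis is a mild one.

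It then remains to show $\chi^M\in P(\F)$, and here I would follow the Lemma~\ref{lem:dominant-reduction} argument line for line. By construction $\chi^M\in P(\F)^\uparrow$, so by definition of the dominant there is $x\in P(\F)$ with $x\le\chi^M$; write $x=\sum_i\lambda_i\chi^{U_i}$ with $U_i\in\F$, $\lambda_i\in(0,1]$ and $\sum_i\lambda_i=1$. Since $x\le\chi^M$, each $U_i\subseteq M$. If some $U_i\subsetneq M$, pick $j\in M\setminus U_i$; then $U_i\subseteq M-j$, so $M-j\in\F^\uparrow$ and hence $\chi^{M-j}\in P(\F^\uparrow)$, which by Claim~\ref{claim:lattice-points} gives $\chi^{M-j}\in P(\F)^\uparrow$. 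Since the routine only ever removes elements, at the moment $j$ was examined the current set $S'$ satisfied $M\subseteq S'$, so $\chi^{S'-j}\ge\chi^{M-j}$, and upward-closedness of $P(\F)^\uparrow$ forces $\chi^{S'-j}\in P(\F)^\uparrow$ --- meaning $j$ would have been deleted, contradicting $j\in M$. Hence every $U_i=M$, so $x=\chi^M$ and $\chi^M\in P(\F)$, with $\chi^M\le\chi^S$ immediate.

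The only real subtlety --- and what I would flag as the crux --- is the step that lets the local membership test at the time $j$ is examined inherit feasibility from the final set $M-j$: this relies on (i) the routine being purely subtractive, so $M$ is contained in every intermediate set, and (ii) Claim~\ref{claim:lattice-points} identifying the $0$--$1$ points of $P(\F^\uparrow)$, $P(\F)^\uparrow$ and $P^*(\F^\uparrow)$, which is precisely what allows the proof of Lemma~\ref{lem:dominant-reduction} to be transcribed essentially unchanged. Beyond being careful about which polytope the oracle separates over, there is nothing new to do.
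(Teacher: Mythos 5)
Your proposal is correct and is essentially the paper's own argument: the paper proves this corollary by rerunning the pruning routine of Lemma~\ref{lem:dominant-reduction}, observing that the routine only ever queries $0$--$1$ vectors and that, by Claim~\ref{claim:lattice-points}, the polyhedra $P^*(\F^\uparrow)$, $P(\F^\uparrow)$ and $P(\F)^\uparrow$ agree on such points, so the oracle for $P(\F)^\uparrow$ can be substituted directly. Your explicit transcription of the correctness argument (including the observation that the routine is purely subtractive, so $M$ is contained in every intermediate set) matches the paper's reasoning exactly.
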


We conclude this section by making the remark that if we have an algorithm which produces  $\chi^{F'} \in P(\F^\uparrow)$ satisfying some approximation guarantee for a monotone problem, we can use Corollary \ref{cor:dominant-reduction2} to construct $F \in \F$ which obeys the same guarantee.

\section{Relaxations for constrained submodular minimization}
\label{sec:relaxations}

Submodular optimization
 techniques for minimization on a set family have involved two standard relaxations, one being linear  \cite{goel2009approximability} and one being convex
\cite{chekuri2011submodular,iwata2009submodular,iyer2014monotone}.
 We introduce the latter in this section.

\subsection{A convex relaxation}
\label{sec:LE-def}

We will be working with upwards-closed set families $\F$, and their blocking relaxations $P^*(\F)$.
 As we now work with arbitrary vectors $z \in [0,1]^n$,  we must specify how our objective function $f(S)$
behaves on all points $z \in P^*(\F)$. Formally, we call $g:[0,1]^V \rightarrow \R$ an {\em extension} of $f$ if $g(\chi^S) = f(S)$ for
each $S \subseteq V$.
For a submodular objective function $f(S)$ there can be many extensions of
$f$ to $[0,1]^V$ (or to $\R^V$).
The most popular one has been  the so-called {\em Lov\'asz Extension} (introduced
in \cite{lovasz1983submodular}) due to several of its desirable properties.

We present two of several equivalent definitions for the Lov\'asz Extension.
Let  $0   <  v_1 < v_2 < ... < v_m \leq 1$  be the distinct
positive values taken in some vector $z \in [0,1]^V$. We also define $v_0=0$ and $v_{m+1}=1$ (which may be equal to $v_m$).
Define for each $i \in \{0,1,...,m\}$ the set $S_i=\{ j:  z_j > v_i\}$. In particular, $S_0$ is the support of $z$
and  $S_m=\emptyset$.
One then defines:
\[
f^L(z) =  \sum_{i=0}^{m}   (v_{i+1}-v_i) f(S_i).
\]
It is not hard to check that the following is an equivalent definition of $f^L$ (e.g. see \cite{chekuri2011submodular}).
For a vector $z \in [0,1]^V$ and $\theta \in [0,1]$, let $z^\theta := \{v \in V: z(v)\geq \theta\}$.
We then have that
$$f^L(z) = \E_{\theta \in [0,1]} f(z^\theta) = \int_{0}^{1} f(z^\theta) d\theta,$$
where the expectation is taken over the uniform distribution in $[0,1]$.

\begin{lemma}
[Lov\'asz \cite{lovasz1983submodular}]
The function $f^L$ is convex if and only if $f$ is submodular.
\end{lemma}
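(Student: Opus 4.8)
The plan is to prove the two implications separately, using as the common workhorse the level-set description $f^L(z) = \int_0^1 f(z^\theta)\, d\theta$ recorded just above; note that when $f(\emptyset)=0$ the appendix formula coincides with the positively homogeneous extension on $[0,1]^V$, since the $i=m$ term is $(1-v_m)f(\emptyset)=0$, and convexity on the nonnegative orthant restricts to convexity on $[0,1]^V$.

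For the direction ``$f$ submodular $\Rightarrow$ $f^L$ convex'' I would realize $f^L$ as a pointwise supremum of linear functions. Let $P_f := \{x \in \R^V : x(A) \le f(A) \text{ for all } A \subseteq V\}$ be the submodular polyhedron, and claim that $f^L(z) = \max\{z^\top x : x \in P_f\}$ for every $z \ge 0$. Since for each fixed $x$ the map $z \mapsto z^\top x$ is linear, this representation immediately gives convexity. To prove the claim I would invoke Edmonds' greedy construction: relabel coordinates so $z_{\pi(1)} \ge z_{\pi(2)} \ge \cdots \ge z_{\pi(n)} \ge 0$, set $T_j = \{\pi(1),\dots,\pi(j)\}$ with $T_0 = \emptyset$, and define $x^*_{\pi(j)} := f(T_j) - f(T_{j-1})$. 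Three checks then finish the argument: (i) $x^* \in P_f$, which is where submodularity enters --- for arbitrary $A$ write $x^*(A) = \sum_{j:\,\pi(j)\in A} \big(f(T_j)-f(T_{j-1})\big)$ and bound each term by $f\big((A\cap T_{j-1})\cup\{\pi(j)\}\big) - f(A\cap T_{j-1})$ using submodularity and $A\cap T_{j-1}\subseteq T_{j-1}$, after which the sum telescopes to $f(A)$; (ii) $z^\top x^* = f^L(z)$, by summation by parts, which rewrites $z^\top x^* = \sum_{j} (z_{\pi(j)} - z_{\pi(j+1)})\, f(T_j)$ with $z_{\pi(n+1)} := 0$, and this is exactly the Lov\'asz sum once coordinates of equal $z$-value are merged into the level sets $S_i$; and (iii) for every $x \in P_f$ the same summation by parts gives $z^\top x = \sum_j (z_{\pi(j)} - z_{\pi(j+1)})\, x(T_j) \le \sum_j (z_{\pi(j)} - z_{\pi(j+1)})\, f(T_j) = f^L(z)$, using $z_{\pi(j)} - z_{\pi(j+1)} \ge 0$ and $x(T_j) \le f(T_j)$.

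For the direction ``$f^L$ convex $\Rightarrow$ $f$ submodular'' I would fix arbitrary $S, T \subseteq V$ and exploit the vector identity $\chi^S + \chi^T = \chi^{S\cup T} + \chi^{S\cap T}$ (both sides equal $2$ on $S\cap T$, $1$ on $S\triangle T$, and $0$ elsewhere), so that $z := \tfrac12(\chi^S + \chi^T) = \tfrac12\chi^{S\cup T} + \tfrac12\chi^{S\cap T}$. Evaluating $f^L$ at $z$ through the level-set formula gives $z^\theta = S\cup T$ for $\theta \in (0,\tfrac12]$ and $z^\theta = S\cap T$ for $\theta \in (\tfrac12,1]$, hence $f^L(z) = \tfrac12 f(S\cup T) + \tfrac12 f(S\cap T)$. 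On the other hand, convexity together with $f^L(\chi^A) = f(A)$ yields $f^L(z) \le \tfrac12 f^L(\chi^S) + \tfrac12 f^L(\chi^T) = \tfrac12\big(f(S)+f(T)\big)$. Comparing the two estimates gives $f(S\cup T) + f(S\cap T) \le f(S) + f(T)$, which is submodularity.

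The main obstacle is step (i) of the forward direction --- verifying that the greedy vector $x^*$ lies in $P_f$ --- since that is the only place submodularity is genuinely used, and it requires the termwise submodularity bound followed by the telescoping cancellation. By contrast, the summation-by-parts identities in steps (ii) and (iii), and the entire converse, are routine bookkeeping.
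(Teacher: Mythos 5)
Your proof is correct. The paper does not prove this lemma at all --- it is imported by citation from Lov\'asz's 1983 paper --- so there is no in-paper argument to compare against; what you give is the classical proof (the forward direction via Edmonds' greedy vector realizing $f^L$ as the support function $\max\{z^\top x : x\in P_f\}$ of the submodular polyhedron, the converse via the midpoint $\tfrac12(\chi^S+\chi^T)$ and the level-set identity $f^L(\tfrac12\chi^S+\tfrac12\chi^T)=\tfrac12 f(S\cup T)+\tfrac12 f(S\cap T)$), and all three checks (membership of $x^*$ in $P_f$ by the diminishing-returns bound and telescoping, the Abel-summation identification with the Lov\'asz sum, and the dominance inequality for arbitrary $x\in P_f$) are carried out correctly under the standing assumption $f(\emptyset)=0$.
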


One could now attack constrained submodular minimization by solving the problem  
\begin{equation}
\label{SA-LE-appendix}
(\mbox{SA-LE}) ~~~~
\min f^L(z): z \in P^*(\F),
\end{equation}
and then seek rounding methods for the resulting solution. This is the approach used in \cite{chekuri2011submodular,iwata2009submodular,iyer2014monotone}.
We refer to the above as the {\em single-agent Lov\'asz extension formulation}, abbreviated as (SA-LE).

\subsection{Tractability of the single-agent formulation (SA-LE)}
\label{sec:combined}

In this section we show that one may solve (SA-LE) approximately as long as a polytime separation algorithm for $P^*(\F)$ is available. This is useful in several settings and in particular for our methods which rely on the multi-agent Lov\'asz extension formulation (discussed in the next section).

{\bf Polytime Algorithms.} One may apply the Ellipsoid Method to obtain a polytime algorithm which approximately minimizes a convex function over a polyhedron $K$
as long as various technical conditions hold.  For instance, one could require that there are two ellipsoids $E(a,A) \subseteq  K \subseteq E(a,B)$ whose encoding
descriptions are polynomially bounded in the input size for $K$. We should also have polytime (or oracle) access to the convex objective function
defined over ${\bf R}^n$. In addition, one must be able to polytime solve  the subgradient problem for $f$.\footnote{For a given $y$, find a subgradient of $f$ at $y$.} One may check that the subgradient problem is efficiently solvable for Lov\'asz extensions of polynomially encodable submodular functions.
We call $f$ {\em polynomially encodable} if the values $f(S)$ have encoding size bounded by a polynomial in $n$ (we always assume this for our functions).
If these conditions hold, then methods from \cite{grotschel2012geometric} imply that for any $\epsilon >0$ we may find
 an approximately feasible solution for $K$ which is approximately optimal.  By approximate here we mean
 for instance that the objective value is within  $\epsilon$ of the real optimum.  This can be done
in time polynomially bounded in $n$ (size of input say) and $\log \frac{1}{\epsilon}$. Let us give a few details for our application.

Our convex problem's feasible space is $P^*(\F)$ and it is easy to verify
that our optimal solutions will lie in the $0-1$ hypercube $H$. So we may define the feasible space to be $H$
and the objective function to be $g(z)=f^L(z)$ if $z \in H \cap P^*(\F)$
and $=\infty$ otherwise. (Clearly $g$  is convex in ${\bf R}^n$ since it is a pointwise maximum of two convex functions;
alternatively, one may define the Lov\'asz Extension on ${\bf R}^n$ which is also fine.)  Note that $g$ can be evaluated in polytime
by the definition of $f^L$ as long as $f$ is polynomially encodable.
We  can now easily find an ellipsoid inside $H$ and one containing $H$ each of which has poly encoding size.
We may thus solve the convex problem to within $\pm \epsilon$-optimality in time bounded by a polynomial in $n$ and $\log \frac{1}{\epsilon}$.

\begin{corollary}
	\label{cor:ellipsoid}
	Consider a class of problems $\F, f$ for which $f$'s are submodular and polynomially-encodable  in $n=|V|$.
	If there is a polytime separation algorithm for the family of polyhedra $P^*(\F)$,
	then the convex program (SA-LE) can be solved to accuracy of $\pm \epsilon$ in time bounded
	by a polynomial in $n$ and $\log \frac{1}{\epsilon}$.
\end{corollary}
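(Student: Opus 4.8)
The plan is to apply the standard Ellipsoid-based machinery for approximate convex minimization over a convex body, checking that each of its hypotheses holds in our setting. First I would work in $\R^n$ with $n=|V|$, take the feasible region to be the unit hypercube $H=[0,1]^n$, and define $g:\R^n\to\R\cup\{\infty\}$ by $g(z)=f^L(z)$ if $z\in H\cap P^*(\F)$ and $g(z)=\infty$ otherwise. Since $f$ is submodular, $f^L$ is convex on $\R^n$ by Lov\'asz's lemma, and $g$ is then convex as the pointwise maximum of $f^L$ and the $\{0,\infty\}$-indicator of the convex set $H\cap P^*(\F)$. As already noted in the discussion preceding the corollary, an optimal solution of (SA-LE) may be taken inside $H$, so minimizing $g$ over $H$ is exactly the same as solving (SA-LE).

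Next I would verify the oracles the method requires. Evaluating $g$ is polytime: membership in $H$ is trivial, membership in $P^*(\F)$ is decided by the assumed separation algorithm, and on $H\cap P^*(\F)$ the value $f^L(z)$ is a weighted sum of at most $n+1$ numbers $f(S_i)$, each polynomially encodable by hypothesis, so $f^L(z)$ is computed directly from its definition in polytime. A separation oracle for the feasible set $H\cap P^*(\F)$ is obtained by combining trivial separation over $H$ with the assumed separation oracle for $P^*(\F)$. Finally, the subgradient problem for $g$ at a feasible point reduces to the subgradient problem for $f^L$, which is efficiently solvable for Lov\'asz extensions of polynomially encodable submodular functions (greedy on the ordering induced by $z$), as noted above.

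It then remains to supply the geometric data: two ellipsoids $E(a,A)\subseteq H\subseteq E(a,B)$ with polynomially bounded encoding size, which for the hypercube one simply takes to be the inscribed ball of radius $1/2$ and the circumscribed ball of radius $\sqrt{n}/2$ about $(\tfrac12,\dots,\tfrac12)$. With all hypotheses met, the weak optimization results of \cite{grotschel2012geometric} produce, for any $\epsilon>0$, a point in (a slight enlargement of) $H\cap P^*(\F)$ whose objective value is within $\epsilon$ of the optimum, in time polynomial in $n$ and $\log\tfrac1\epsilon$; this is the claimed $\pm\epsilon$-approximate solution of (SA-LE).

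I expect the main subtlety to be not any single computation but the usual bookkeeping of weak versus strong oracles in the GLS framework: one obtains only approximate feasibility and approximate optimality, and one must ensure (possibly after a mild perturbation, or by first intersecting with a large box) that $P^*(\F)$ is well-bounded so that the ellipsoid hypotheses literally apply. This is routine, but it is the place where the ``obvious'' argument needs the most care.
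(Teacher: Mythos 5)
Your proposal is correct and follows essentially the same route as the paper: restrict the feasible region to the hypercube, define $g$ as $f^L$ on $H\cap P^*(\F)$ and $+\infty$ elsewhere (convex as a pointwise maximum), supply the separation, evaluation, and subgradient oracles together with polynomially encoded inscribed and circumscribed ellipsoids for $H$, and invoke the weak optimization machinery of \cite{grotschel2012geometric}. The extra details you add (the explicit balls of radii $1/2$ and $\sqrt{n}/2$, the greedy subgradient computation) only flesh out steps the paper leaves implicit.
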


\subsection{The multi-agent formulation}
\label{sec:ma extensions}

The single-agent formulation (SA-LE) discussed above has a natural extension to the multi-agent setting. 
This was already introduced in \cite{chekuri2011submodular} for the case $\F=\{V\}$.
\begin{equation}
\label{MA-LE-appendix}
(\mbox{MA-LE}) ~~~~
\min \sum_{i=1}^k f^L_i(z_i): z_1 + z_2 + \cdots + z_k \in P^*(\F).
\end{equation}
We refer to the above as the {\em multi-agent Lov\'asz extension formulation}, abbreviated as (MA-LE).

We remark that we can solve (MA-LE) as long as
we have polytime separation of $P^*(\F)$. This follows the  approach from the previous section (see Corollary \ref{cor:ellipsoid}) except our convex program now has $k$ vectors of variables $z_1,z_2, \ldots ,z_k$ (one for each agent) such that $z=\sum_i z_i$. 
This problem has the form $\{ \min g(w) : w \in W \subseteq {\bf R}^{nk} \}$
where $g$ is convex and $W$ is the full-dimensional convex body $\{w=(z_1,...,z_k): \sum_i z_i \in P^*(\F)\}$. Clearly we have a polytime separation routine for $W$, and hence we may apply Ellipsoid as in the single-agent case.

\begin{corollary}
	\label{cor:sep-sol-LP}
	Assume there is a polytime separation oracle for $P^*(\F)$. Then we can solve the multi-agent formulation (MA-LE) in polytime.
\end{corollary}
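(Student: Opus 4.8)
The plan is to recognize (MA-LE) as a single convex minimization problem over a polyhedron for which we can separate in polytime, and then invoke the Ellipsoid-based machinery already established in Corollary~\ref{cor:ellipsoid} for the single-agent formulation. Write $w=(z_1,z_2,\ldots,z_k)\in\mathbf{R}^{nk}$ and set $g(w)=\sum_{i=1}^k f^L_i(z_i)$. Since each $f_i$ is submodular, Lov\'asz's lemma gives that each $f^L_i$ is convex, and since the coordinate projections $w\mapsto z_i$ are linear, $g$ is a sum of convex functions composed with linear maps and hence convex on $\mathbf{R}^{nk}$. The feasible region is the polyhedron $W=\{w=(z_1,\ldots,z_k): z_i\geq 0\ \forall i,\ \sum_i z_i\in P^*(\F)\}$.

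First I would verify that $W$ admits a polytime separation oracle. Given a candidate $w=(z_1,\ldots,z_k)$: if some $z_i$ has a negative coordinate, return that violated nonnegativity inequality; otherwise form $z=\sum_i z_i$ and call the assumed separation oracle for $P^*(\F)$. If $z\in P^*(\F)$ we report $w\in W$; if not, the oracle returns a blocking inequality $\sum_{v\in B} z(v)\geq 1$ violated by $z$, and the lifted inequality $\sum_{v\in B}\sum_{i=1}^k z_i(v)\geq 1$ is valid for all of $W$ and violated by $w$. Hence separation over $W$ reduces in polytime to separation over $P^*(\F)$.

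Next I would place the problem in a form amenable to Ellipsoid exactly as in Appendix~\ref{sec:combined}. Because each $f^L_i$ is monotone (as $f_i$ is) and $P^*(\F)$ is upward-closed, truncating any feasible $w$ coordinatewise into $[0,1]^{nk}$ preserves feasibility and does not increase $g$, so without loss of generality we may replace $W$ by the bounded, full-dimensional body $W\cap[0,1]^{nk}$, for which one exhibits inscribed and circumscribed ellipsoids of polynomial encoding size. The objective $g$ is evaluable in polytime from the definition of the $f^L_i$ (assuming the $f_i$ are polynomially encodable), and the subgradient problem for $g$ splits into the subgradient problems for the individual $f^L_i$, which are standard for Lov\'asz extensions. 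Applying the Gr\"otschel--Lov\'asz--Schrijver machinery (precisely as in Corollary~\ref{cor:ellipsoid}) then yields a solution within $\pm\epsilon$ of the optimum in time polynomial in $nk$ and $\log\frac1\epsilon$; since the data are polynomially encodable, fixing $\epsilon$ small enough recovers an exact optimum, completing the proof.

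There is no genuine obstacle here — the argument is essentially bookkeeping that mirrors the single-agent case. The one point deserving a line of care is the reduction from the natural (unbounded, possibly lower-dimensional) polyhedron $W$ to its intersection with the unit cube, so that the regularity hypotheses required by Ellipsoid are satisfied; this is justified by monotonicity of the $f^L_i$ together with upward-closedness of $P^*(\F)$.
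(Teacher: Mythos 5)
Your proof is correct and follows essentially the same route as the paper: the paper also observes that (MA-LE) is a convex program $\min g(w): w \in W \subseteq \R^{nk}$ with $W=\{(z_1,\ldots,z_k): \sum_i z_i \in P^*(\F)\}$, notes that separation over $W$ reduces to separation over $P^*(\F)$, and invokes the Ellipsoid machinery of Corollary~\ref{cor:ellipsoid} exactly as in the single-agent case. Your additional remarks on restricting to the cube and on subgradients are consistent with (and slightly more explicit than) the paper's treatment.
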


\begin{corollary}
	\label{cor:SA-MA-LP}
	Assume we can solve the single-agent formulation (SA-LE) in polytime. Then we can also solve the multi-agent formulation (MA-LE) in polytime.
\end{corollary}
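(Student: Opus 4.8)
The plan is to reduce this to Corollary~\ref{cor:sep-sol-LP}: it suffices to show that a polytime algorithm for (SA-LE) can be converted into a polytime separation oracle for the blocking relaxation $P^*(\F)$, and then Corollary~\ref{cor:sep-sol-LP} immediately yields a polytime algorithm for (MA-LE).

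First I would observe that a polytime solver for (SA-LE) is, in particular, a polytime linear optimization oracle over $P^*(\F)$. Indeed, for any cost vector $c\in\R^V_{\geq 0}$ the modular function $f(S)=c(S)=\sum_{v\in S}c_v$ is submodular, and its Lov\'asz extension is the linear function $f^L(z)=c\cdot z$; hence $\min\{c\cdot z: z\in P^*(\F)\}$ is literally an instance of (SA-LE). Since $P^*(\F)\subseteq\R^V_{\geq 0}$ is upwards closed, any objective with a negative coordinate is unbounded below over $P^*(\F)$ (which is detectable by inspection), and by monotonicity of $f^L$ one may restrict attention to optimal solutions lying in the hypercube; so from the (SA-LE) solver we genuinely obtain linear optimization over $P^*(\F)$. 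I would then invoke the polynomial-time equivalence of optimization and separation (Gr\"otschel--Lov\'asz--Schrijver), using that $P^*(\F)$ has a $0$--$1$ constraint matrix --- so its vertices have polynomially bounded encoding size --- and recession cone $\R^V_{\geq 0}$, to turn this optimization oracle into a polytime separation oracle for $P^*(\F)$. Applying Corollary~\ref{cor:sep-sol-LP} then finishes the proof.

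As a sanity check on the reduction, one can also note directly that (MA-LE) for $(f_1,\dots,f_k,\F)$ is precisely (SA-LE) for the lifted submodular function $g$ on $E=[k]\times V$ (with $g(\uplus_i\{i\}\times S_i)=\sum_i f_i(S_i)$) over the family $\F'$ with $\B(\F')=\{[k]\times B:B\in\B(\F)\}$, since $g^L(z_1,\dots,z_k)=\sum_i f_i^L(z_i)$ and $\sum_i z_i\in P^*(\F)\iff(z_1,\dots,z_k)\in P^*(\F')$; moreover separation over $P^*(\F')$ trivially reduces to separation over $P^*(\F)$ by aggregating, per original element $v$, the $k$ copies of $v$, so one could equivalently conclude via Corollary~\ref{cor:ellipsoid} on the lifted ground set of size $nk$.

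The step I expect to be the main obstacle is the first one: extracting an honest linear-optimization (equivalently, separation) oracle for $P^*(\F)$ from a solver that only promises to minimize the \emph{nonlinear} Lov\'asz-extension objective. The resolution is exactly the remark that modular functions are the submodular functions whose Lov\'asz extension is linear, so (SA-LE) already contains linear optimization over $P^*(\F)$; the remaining points (detecting unboundedness, the boundedness/encoding hypotheses required by the GLS machinery, and a polynomial a priori bound on the bit-length of optimal solutions if one wants an exact rather than $\epsilon$-approximate statement) are all routine given the $0$--$1$ structure of the blocking relaxation.
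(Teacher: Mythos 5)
Your proof takes essentially the same route as the paper's: convert the polytime (SA-LE) solver into a polytime separation oracle for $P^*(\F)$ and then invoke Corollary~\ref{cor:sep-sol-LP}. The paper merely asserts the first step without justification, whereas your argument via modular objectives (whose Lov\'asz extension is linear) together with the Gr\"otschel--Lov\'asz--Schrijver optimization/separation equivalence for the well-described polyhedron $P^*(\F)$ is a correct and welcome filling-in of exactly that step.
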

\begin{proof}
	If we can solve (SA-LE) in polytime then we can also separate over $P^*(\F)$ in polynomial time. Now the statement follows from Corollary~\ref{cor:sep-sol-LP}.
\qed \end{proof}


\section{Dealing with some special types of nonmonotone objectives}
\label{sec:nonmonotone}

We present the proof of Theorem \ref{thm:sym-MSCA} discussed in Section \ref{sec:SA-MA-formulations}.

\begin{proof}[Theorem \ref{thm:sym-MSCA}]
	Let $z = \sum_{i \in [k]} z_i$ be a feasible solution to (MA-LE) and such that $\sum_{i \in [k]} z_i \geq \chi^U$ for some $U \in \F$.
	Consider the below CE-Rounding procedure originally described in the work of \cite{chekuri2011submodular} for the case $\F=\{V\}$.
	
	
	\begin{algorithm}[ht]
		\caption{CE-Rounding}
		\label{alg:CE-rounding}
		$S \leftarrow \emptyset$ \quad /* set of assigned elements */\\
		$S_i \leftarrow \emptyset$ for all $i \in [k]$ /* set of elements assigned to agent $i$ */\\
		\While{$S \notin \F$}{
			Pick $i\in [k]$ uniformily at random\\
			Pick $\theta \in [0,1]$ uniformily at random \\
			$S(i,\theta):= \{v \in V: z_i(v)\geq \theta\}$ \\
			$S_i \leftarrow S_i \cup S(i,\theta) $\\
			$S \leftarrow S \cup S(i,\theta) $\\
		}
		/* Uncross $S_1,S_2,\ldots,S_k$ */\\
		$S'_i \leftarrow S_i$ for all $i \in [k]$\\
		\While{there exist $i \neq j$ such that $S'_i \cap S'_j \neq \emptyset$}{
			\eIf{$h(S'_i) + h(S'_j - S'_i) \leq h(S'_i) + h(S'_j)$}{
				$S'_j \leftarrow S'_j - S'_i$\\}{
				$S'_i \leftarrow S'_i - S'_j$\\}
		}
		Output $(S'_1,S'_2,\ldots,S'_k)$
	\end{algorithm}
	
	
	It is discussed in \cite{chekuri2011submodular} and not difficult to see
	that the first while loop assigns all the elements from $U$ in $O(k \log |U|)$ iterations with high probability.
	Since $\F$ is upwards closed and $U \in \F$, this implies that the first while loop terminates in
	$O(k \log |U|)$ iterations with high probability.
	Moreover, it is clear that the uncrossing step takes a polynomial number of iterations.
	
	Let $S_1,S_2,\ldots,S_k$ be the output after the first while loop and $S'_1,S'_2,\ldots,S'_k$ the final output of the rounding. At each iteration of the first while loop, the expected cost associated to the random set $S(i,\theta)$ is given by $$\E_{i,\theta}[f_i(S(i,\theta))] = \frac{1}{k} \sum_{i=1}^k \E_{\theta}[f_i(S(i,\theta))]= \frac{1}{k} \sum_{i=1}^k f^L_i (z_i).$$ Hence, given the subadditivity of the objectives (since the functions are submodular and nonnegative), the expected cost increase at each iteration of the first while loop is upper bounded by $\frac{1}{k} \sum_{i=1}^k f^L_i (z_i)$. Since the first while loop terminates w.h.p. in $O(k \log |U|)$ iterations, it now follows by linearity of expectation that the total expected cost of $\sum_i f_i(S_i)$ is at most $O(\log |U|) \sum_{i=1}^k f^L_i (z_i)$.
	
	Finally, we use a result (see Lemma 3.1) from \cite{chekuri2011submodular} that guarantees that if $h$ is symmetric submodular then the uncrossing step of the rounding satisfies $\sum_i h(S'_i) \leq \sum_i h(S_i)$. Moreover, by monotonicity of the $g_i$ it is also clear that $\sum_i g_i(S'_i) \leq \sum_i g_i(S_i)$. Thus, we have that $S'_1 \uplus  S'_2 \uplus  \cdots \uplus  S'_k = S \in \F$ and $\sum_i f_i(S'_i) \leq \sum_i f_i(S_i) \leq O(\log |U|) \sum_{i=1}^k f^L_i (z_i)$ on expectation. This concludes the argument.
	\qed
\end{proof}

\section{Invariance under the lifting reduction}
\label{sec:Appendix-Invariance}

\begin{corollary}
	\label{Corollary p-matroid intersection}If $(V,\F)$ is a p-matroid
	intersection, then so is $(E,\F')$.\end{corollary}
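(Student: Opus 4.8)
The plan is to realize $\F'$ as an intersection of $p$ matroids over $E$ by pushing each matroid factor of $\F$ through the lifting reduction separately and invoking Corollary~\ref{cor:matroid-invariant}. Since $(V,\F)$ is a $p$-matroid intersection, fix matroids $M_1=(V,\I_1),\ldots,M_p=(V,\I_p)$ with $\F=\I_1\cap\cdots\cap\I_p$. For each $j\in[p]$, apply the lifting construction of Section~\ref{sec:reduction-properties} to $M_j$ in place of $\F$, i.e. set
\[
\I'_j:=\{S\subseteq E:\ cov(S)\in\I_j\ \text{ and }\ |cov(S)|=|S|\}.
\]
The derivation of the equivalence (\ref{def:family-H}) is purely set-theoretic (it only uses how a subset of $E$ decomposes into the $S_i$ and the meaning of $cov$), so it applies verbatim with $\I_j$ in the role of $\F$. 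Hence $\I'_j$ is precisely the family obtained from $\I_j$ under the reduction, and Corollary~\ref{cor:matroid-invariant} yields that $(E,\I'_j)$ is a matroid for every $j\in[p]$.

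It then remains to verify the identity $\F'=\I'_1\cap\cdots\cap\I'_p$. For any $S\subseteq E$, using (\ref{def:family-H}) once more,
\begin{align*}
S\in\bigcap_{j=1}^{p}\I'_j
&\iff |cov(S)|=|S|\ \text{ and }\ cov(S)\in\I_j\ \text{ for all }j\in[p]\\
&\iff |cov(S)|=|S|\ \text{ and }\ cov(S)\in\textstyle\bigcap_{j=1}^{p}\I_j=\F\\
&\iff S\in\F'.
\end{align*}
This exhibits $\F'$ as the intersection of the $p$ matroids $(E,\I'_1),\ldots,(E,\I'_p)$, so $(E,\F')$ is a $p$-matroid intersection, as claimed.

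The one point worth flagging — and it is minor — is that the cardinality condition $|cov(S)|=|S|$, which records that $S$ selects at most one copy of each original element, is common to all of the $\I'_j$; it therefore persists under the intersection rather than being relaxed, which is exactly what makes the identity above go through. Granting Corollary~\ref{cor:matroid-invariant}, there is no genuinely hard step: the entire content is this bit of bookkeeping together with the decomposition $\F=\bigcap_{j=1}^p\I_j$. (Should one prefer to avoid the explicit formula for $\I'_j$, one may instead take $\I'_j:=\{S\subseteq E:\ S_1\uplus\cdots\uplus S_k\in\I_j\}$ directly; the argument is identical since $(\ref{def:family-H})$ identifies the two descriptions.)
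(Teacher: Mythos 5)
Your proof is correct and follows essentially the same route as the paper's: decompose $\F=\bigcap_{j=1}^p\I_j$, lift each matroid factor to $\I'_j$, apply Corollary~\ref{cor:matroid-invariant} to each, and observe $\F'=\bigcap_j\I'_j$. The only (cosmetic) difference is that you phrase the lifted families via $cov(S)$ and the condition $|cov(S)|=|S|$ rather than directly via $S_1\uplus\cdots\uplus S_k\in\I_j$, and you note yourself that (\ref{def:family-H}) identifies the two descriptions.
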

\begin{proof}
	Let $\F=\cap_{i=1}^{p}\I_{i}$ for some matroids $(V,\I_{i})$. Then
	we have that
	\begin{align*}
	\F'= & \{S\subseteq E:S_{1}\uplus \cdots\uplus S_{k}\in\F\}\\
	= & \{S\subseteq E:S_{1}\uplus \cdots\uplus S_{k}\in\cap_{i=1}^{p}\I_{i}\}\\
	= & \{S\subseteq E:S_{1}\uplus \cdots\uplus S_{k}\in\I_{i},\forall i\in[p]\}\\
	= & \bigcap_{i\in[p]}\{S\subseteq E:S_{1}\uplus \cdots\uplus S_{k}\in\I_{i}\}\\
	= & \bigcap_{i\in[p]}\I_{i}'.
	\end{align*}
	Moreover, from Corollary \ref{cor:matroid-invariant} we know that
	$(E,\I_{i}')$ is a matroid for each $i\in[p]$, and the result follows. \qed \end{proof}

We now discuss some invariant properties with respect to the families $\F_i$.

\begin{proposition}
	\label{Claim F_i matroids}If $(V,\F_{i})$ is a matroid for each
	$i\in[k]$, then $(E,\H)$ is also a matroid.\end{proposition}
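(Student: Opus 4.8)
The plan is to verify the matroid independence axioms for $(E,\H)$ directly, exploiting the fact that the ground set $E$ of the multi-agent bipartite graph partitions into the blocks $E_i := \{i\}\times V$, $i\in[k]$, and that $\H$ restricted to each block is a copy of the matroid $(V,\F_i)$. Equivalently, one could note that $(E,\H)$ is precisely the direct sum $\bigoplus_{i\in[k]}(V,\F_i)$ pulled back along the natural bijection $E_i\cong V$, and invoke the standard fact that direct sums of matroids are matroids; I will instead spell out the (very short) argument since it is self-contained.

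First I would record two easy observations. Every $S\subseteq E$ is written uniquely as $S=\uplus_{i\in[k]}(\{i\}\times S_i)$ with $S_i\subseteq V$, so that $|S|=\sum_{i\in[k]}|S_i|$, and membership $S\in\H$ is equivalent to $S_i\in\F_i$ for every $i\in[k]$. In particular $\emptyset\in\H$, since each $\F_i$ contains $\emptyset$; thus $\H$ is nonempty. For hereditarity, if $S\in\H$ and $T\subseteq S$, then $T_i\subseteq S_i$ for each $i$, and since each $\F_i$ is hereditary we get $T_i\in\F_i$, hence $T\in\H$.

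For the augmentation axiom, let $S,T\in\H$ with $|S|<|T|$. Since $\sum_{i}|S_i|=|S|<|T|=\sum_{i}|T_i|$, there is an index $j\in[k]$ with $|S_j|<|T_j|$. Applying the exchange axiom in the matroid $(V,\F_j)$ to the independent sets $S_j,T_j$ yields an element $v\in T_j\setminus S_j$ with $S_j+v\in\F_j$. Put $e:=(j,v)\in E$. Then $e\in T\setminus S$, and the set $S+e$ has $j$-th slice $S_j+v\in\F_j$ while all other slices agree with those of $S$; hence $S+e\in\H$. This establishes the augmentation property, so $(E,\H)$ is a matroid.

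I do not anticipate a genuine obstacle here. The only point worth a moment's care is that the decomposition $S=\uplus_{i\in[k]}(\{i\}\times S_i)$ is honestly disjoint, so that cardinalities add over the blocks; this is exactly what lets a cardinality gap in $E$ localise to a cardinality gap in a single block $V$, reducing the exchange step to the already-known matroid axiom for $\F_j$.
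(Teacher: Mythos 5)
Your proof is correct and takes essentially the same route as the paper: the paper observes that $(E,\H)$ is the matroid union of the disjoint copies $\M_i=(\{i\}\times V,\I_i)$ of the matroids $(V,\F_i)$ and invokes the standard closure fact, while you identify the same direct-sum structure and simply verify the independence axioms by hand. No gap.
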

\begin{proof}
	Let $\M_{i}:=(\{i\}\times V,\I_{i})$ for $i\in[k]$, where $\I_{i}:=\{\{i\}\times S:S\in\F_{i}\}$.
	Since $(V,\F_{i})$ is a matroid, we have that $\M_{i}$ is also a
	matroid. Moreover, by taking the matroid union of the $\M_{i}$'s
	we get $(E,\H)$. Hence, $(E,\H)$ is a matroid.
	\qed \end{proof}

\begin{proposition}
	\label{claim ring families}
	If $\F_{i}$ is a ring family over $V$ for each $i\in[k]$, then
	$\H$ is a ring family over $E$.\end{proposition}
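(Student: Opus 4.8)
The plan is to exploit the fact that the blocks $\{i\}\times V$, $i\in[k]$, partition the lifted ground set $E$, so that set operations on subsets of $E$ act componentwise on the unique decomposition $S=\biguplus_{i\in[k]}(\{i\}\times S_i)$. First I would fix two members $S,T\in\H$ and write $S=\biguplus_{i\in[k]}(\{i\}\times S_i)$ and $T=\biguplus_{i\in[k]}(\{i\}\times T_i)$, where by definition of $\H$ we have $S_i\in\F_i$ and $T_i\in\F_i$ for every $i\in[k]$.

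Next I would observe that, because $\{i\}\times V$ is exactly the part of $E$ incident to agent $i$ and these parts are pairwise disjoint with union $E$, we have $S\cup T=\biguplus_{i\in[k]}(\{i\}\times(S_i\cup T_i))$ and $S\cap T=\biguplus_{i\in[k]}(\{i\}\times(S_i\cap T_i))$. In other words, the $i$-th component of $S\cup T$ in the canonical decomposition is $S_i\cup T_i$, and the $i$-th component of $S\cap T$ is $S_i\cap T_i$. This is the only point that needs checking, and it is immediate from the partition structure; there is no genuine obstacle here.

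Finally I would invoke the hypothesis that each $\F_i$ is a ring family over $V$: closure under unions and intersections gives $S_i\cup T_i\in\F_i$ and $S_i\cap T_i\in\F_i$ for all $i\in[k]$. By the defining description of $\H$, this yields $S\cup T\in\H$ and $S\cap T\in\H$. Since $S,T\in\H$ were arbitrary, $\H$ is closed under unions and intersections, i.e. $\H$ is a ring family over $E$, which completes the proof. (I would remark in passing that the same componentwise argument shows $\H$ inherits being a lattice/ring family, a crossing family, or any other property defined purely by closure under coordinatewise unions and intersections.)
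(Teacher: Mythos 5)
Your proof is correct and follows essentially the same route as the paper's: decompose $S,T\in\H$ componentwise over the partition $\{i\}\times V$ of $E$, note that union and intersection act coordinatewise, and apply closure of each $\F_i$. No gaps; the observation that set operations commute with the canonical decomposition is exactly the one step the paper also relies on.
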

\begin{proof}
	Let $S,T\in\H$ and notice that $S\cup T=\biguplus_{i\in[k]}(\{i\}\times(S_{i}\cup T_{i})$)
	and $S\cap T=\biguplus_{i\in[k]}(\{i\}\times(S_{i}\cap T_{i})$).
	Since $\F_{i}$ is a ring family for each $i\in[k]$, it follows that
	$S_{i}\cup T_{i}\in\F_{i}$ and $S_{i}\cap T_{i}\in\F_{i}$ for each
	$i\in[k]$. Hence $S\cup T\in\H$ and $S\cap T\in\H$, and thus
	$\H$ is a ring family over $E$.
\qed \end{proof}

We saw in Section \ref{sec:lifting-reduction} that if the original functions $f_i$ are all submodular, then the lifted function $f$ is also submodular. Recall that for any set $S \subseteq E$ in the lifted space, there are unique sets $S_i \subseteq V$ such that $S = \uplus_{i \in [k]} (\{i\} \times S_i)$. We think of $S_i$ as the set of items assigned to agent $i$. In a similar way, given any vector $\bar{z} \in [0,1]^E$, there are unique vectors $z_i \in [0,1]^V$ such that $\bar{z} = (z_1,z_2,\ldots,z_k)$, where we think of $z_i$ as the vector associated to agent $i$. Our following result establishes the relationship between the values $f^M(\bar{z})$ and $\sum_{i \in [k]} f_i^M(z_i)$.

\begin{lemma}
	\label{lem:max-multilinear}
	Let the functions $f_i$ and $f$ be as described in the lifting reduction on Section \ref{sec:lifting-reduction}.
	Then for any vector $\bar{z}=(z_1,z_2,\ldots,z_k) \in [0,1]^E$, where $z_i \in [0,1]^V$ is the vector associated with agent $i$, we have that $f^M(\bar{z})=f^M(z_1,z_2,\ldots,z_k)=\sum_{i \in [k]} f_i^M(z_i)$.
\end{lemma}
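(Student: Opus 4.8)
The plan is to unwind the expectation definition of the multilinear extension on the lifted ground set $E$ and observe that the product distribution on $E$ induced by $\bar z=(z_1,\dots,z_k)$ factors as an independent product of the distributions induced by the $z_i$ on the disjoint copies $\{i\}\times V$. Recall from Section \ref{sec:lifting-reduction} that every $S\subseteq E$ decomposes uniquely as $S=\biguplus_{i\in[k]}(\{i\}\times S_i)$ with $S_i\subseteq V$, and that $f(S)=\sum_{i\in[k]}f_i(S_i)$. Since $E=\biguplus_{i\in[k]}(\{i\}\times V)$ is a disjoint union, a random set $R^{\bar z}\subseteq E$ that includes each edge $(i,v)$ independently with probability $z_i(v)$ has the property that its ``slices'' $R_i:=\{v:(i,v)\in R^{\bar z}\}$ are mutually independent, and $R_i$ has exactly the law of $R^{z_i}$ (each $v$ included independently with probability $z_i(v)$).

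First I would write, by the expectation form of the multilinear extension,
\[
f^M(\bar z)=\E\big[f(R^{\bar z})\big]=\E\Big[\sum_{i\in[k]} f_i(R_i)\Big]=\sum_{i\in[k]}\E\big[f_i(R_i)\big],
\]
using linearity of expectation in the last step. Then, since $R_i$ is distributed exactly as $R^{z_i}$, each term is $\E[f_i(R_i)]=\E[f_i(R^{z_i})]=f_i^M(z_i)$, which gives $f^M(\bar z)=\sum_{i\in[k]}f_i^M(z_i)$ as claimed. (One could equally well run this purely algebraically: expand $f^M(\bar z)=\sum_{S\subseteq E} f(S)\prod_{(i,v)\in S}z_i(v)\prod_{(i,v)\notin S}(1-z_i(v))$, substitute $f(S)=\sum_i f_i(S_i)$, and factor the sum over $S$ into a product of independent sums over each slice $S_i\subseteq V$; the factors corresponding to agents $j\neq i$ each sum to $1$, leaving $f_i^M(z_i)$.)

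There is really no hard step here; the only thing to be careful about is the bookkeeping that the decomposition $S\mapsto(S_1,\dots,S_k)$ is a bijection $2^E\to (2^V)^k$ respecting the disjoint-union structure, so that the product measure on $2^E$ is genuinely the product of the $k$ product measures on $2^V$ — but this is immediate because the copies $\{i\}\times V$ partition $E$. This lemma is then exactly what is invoked in Section \ref{sec:max-SA-MA-formulations} to conclude that $g(w)=f^M(w)$ where $g(w)=\sum_i f_i^M(z_i)$, and in Proposition \ref{prop:max-g-function}.
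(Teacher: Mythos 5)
Your proof is correct and takes essentially the same approach as the paper: the paper carries out the explicit algebraic marginalization $\sum_{S\subseteq E} f(S)\prob_{\bar z}(S)=\sum_i\sum_{S_i}f_i(S_i)\prob_{z_i}(S_i)$, which is exactly the computation your linearity-of-expectation/independent-slices argument packages probabilistically (and which you also sketch in your parenthetical). No gaps.
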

\begin{proof}
	We use the definition of the multilinear extension in terms of expectations (see Section \ref{sec:max-SA-MA-formulations}).
	Recall that for a vector $z \in [0,1]^V$, $R^z$ denotes a random set that contains element $v_i$ independently with probability $z_{v_i}$. 
	We use $\prob_{z}(S)$ to denote $\prob[R^z=S]$. We then have
	\begin{eqnarray*}
		f^M (\bar{z}) & = & \E [f(R^{\bar{z}})] = \sum_{S \subseteq E} f(S) \prob_{\bar{z}}(S) \\
		& = & \sum_{S_1 \subseteq V} \sum_{S_2 \subseteq V} \cdots \sum_{S_k \subseteq V} [\sum_{i=1}^k f_i (S_i)] \cdot \prob_{(z_1,z_2,\ldots,z_k)}(S_1,S_2,\ldots,S_k) \\
		& = & \sum_{i=1}^k \sum_{S_1 \subseteq V} \sum_{S_2 \subseteq V} \cdots \sum_{S_k \subseteq V} f_i (S_i) \cdot \prob_{(z_1,z_2,\ldots,z_k)}(S_1,S_2,\ldots,S_k)\\
		& = & \sum_{i=1}^k \sum_{S_i \subseteq V} f_i (S_i) \sum_{S_j \subseteq V, j\neq i} \prob_{(z_1,z_2,\ldots,z_k)}(S_1,S_2,\ldots,S_k)\\
		& = & \sum_{i=1}^k \sum_{S_i \subseteq V} f_i(S_i) \prob_{z_i}(S_i)
		= \sum_{i=1}^k \E[f_i(S_i^{z_i})]
		=  \sum_{i=1}^k f^M_i(z_i).
	\end{eqnarray*}
	\qed \end{proof}

\end{document}